\documentclass{article}

%-----------------------------------------
%
% LAST MODIFIED : G. STOLTZ
%  (CERMICS, August 26th, 2014) 
%
%-----------------------------------------

%------- packages --------------------
\usepackage{amsmath,amsthm} 
\usepackage{amssymb,mathrsfs} 
\usepackage{graphicx} 
\usepackage{color,subfigure} 
\usepackage{enumerate,mathtools}  
\usepackage{fullpage}
%\usepackage{refcheck}
%\usepackage{a4wide}

%------- styles ---------
\newtheorem{theorem}{Theorem}
\newtheorem{lemma}{Lemma}
\newtheorem{assumption}{Assumption}
\newtheorem{prop}{Proposition}

\newtheorem{remark}[theorem]{Remark}

%----- definitions ----------
\DeclareMathAlphabet{\mathpzc}{OT1}{pzc}{m}{it}
 
\newcommand{\rme}{\mathrm{e}} 
\newcommand{\ri}{\mathrm{i}} 
\newcommand{\eps}{\epsilon}

\newcommand{\bZ}{\mathbb{Z}}
\newcommand{\bN}{\mathbb{N}}

\newcommand{\RR}{\mathbb{R}}

\newcommand{\cA}{\mathcal{A}}

\newcommand{\cH}{\mathcal{H}}
\newcommand{\Li}{\mathcal{K}}
\newcommand{\Lin}{ {\mathcal{K}_n} }

\newcommand{\M}{\mathcal{M}}
\newcommand{\cE}{\mathcal{E}}
\newcommand{\cR}{\mathcal{R}}

\newcommand{\wq}{\widetilde{q}}
\newcommand{\sL}{\xi}
\renewcommand{\wp}{\widetilde{p}}
\renewcommand{\eps}{\varepsilon}
\renewcommand{\leq}{\leqslant}
\renewcommand{\geq}{\geqslant}

\graphicspath{{/rapport/}}
\graphicspath{{/rapport/figures/}}

\begin{document}

%----------------- TITRE -----------------------------
  
\title{Langevin dynamics with space-time periodic nonequilibrium forcing}
\author{R. Joubaud,$^{1}$ G. Pavliotis$^{1}$ and G. Stoltz$^{2}$ \\
{\small $^{1}$ Department of Mathematics, Imperical College London, SW7 2AZ, London, UK} \\
{\small $^{2}$ Universit\'e Paris-Est, CERMICS (ENPC), INRIA, F-77455 Marne-la-Vall\'ee, France} \\}
 
\maketitle

\abstract{We present results on the ballistic and diffusive behavior of the Langevin dynamics in a periodic potential that is driven away from equilibrium by a space-time periodic driving force, extending some of the results obtained by Collet and Martinez in~\cite{ColletMartinez}. In the hyperbolic scaling, a nontrivial average velocity can be observed even if the external forcing vanishes in average. More surprisingly, an average velocity in the direction opposite to the forcing may develop at the linear response level -- a phenomenon called negative mobility. The diffusive limit of the non-equilibrium Langevin dynamics is also studied using the general methodology of central limit theorems for additive functionals of Markov processes. To apply this methodology, which is based on the study of appropriate Poisson equations, we extend recent results on pointwise estimates of the resolvent of the generator associated with the Langevin dynamics. Our theoretical results are illustrated by numerical simulations of a two-dimensional system.}

\tableofcontents

%--------- intro ------------
\section{Introduction}

Nonequilibrium transport has attracted a lot of attention in recent years both in the mathematical physics and physics literature. It is by now well understood that simple low dimensional systems that are driven away from equilibrium can exhibit quite complicated behavior such as stochastic resonance~\cite{Hanggi_al_1998}, directed transport~\cite{reimann}, absolute negative mobility~\cite{Hanggi_NegMob} and giant enhancement of diffusion~\cite{reimann_al01}. In particular, simple stochastic differential equations (SDEs) with space-time periodic coefficients have been proposed in recent years as models for Brownian motors. A natural question is whether SDEs with space-time mean zero coefficients can give rise to a nonzero effective drift. This question was studied in detail in~\cite{ColletMartinez}. The goal of this paper is to refine and extend the results obtained in this paper.

The long time dynamics of SDEs with space-time periodic coefficients is characterized by an effective drift, whereas fluctuations around the macroscopic directed transport are described by an effective Brownian motion with  covariance matrix $D$. The rigorous mathematical analysis of such models is based on proving the law of large numbers (ergodic theorem) at the hyperbolic timescale~\cite{ColletMartinez} and a functional central limit theorem~\cite{pavl05} at the diffusive timescale. These problems are closely related to the theory of homogenization for parabolic PDEs (and of the corresponding stochastic differential equations) with space-time dependent coefficients~\cite[Chapter~3]{BLP11}. Homogenization problems for Brownian motion in a space-time periodic potential were also studied in~\cite{garnier}.

On the other hand, nonequilibrium perturbations of systems at equilibrium can be used for calculating transport coefficients using linear response theory and the Green-Kubo formalism~\cite{KuboTodaHashitsume91,ResibDeLeen77}. Linear response theory can be rigorously analyzed for certain stochastic systems~\cite{Komor_Olla02, lebo_einstein, JS12, rodenh}, and higher order corrections can also be obtained~\cite{LatPavlKram2013}. The rigorous analysis of linear response theory and of the Green-Kubo formalism is very closely related to homogenization theory, in particular the systematic use of Poisson equations for obtaining formulas for transport coefficients. As an example we mention~\cite{JS12} where the linear response theory/Poisson equation formalism for the Langevin dyanamics is used to calculate the shear viscosity coefficient.

Stochastic dynamics in a bistable potential under the influence of an external time periodic forcing can exhibit stochastic resonance, whereby the stochastic trajectories are tuned in an optimal way to the deterministic forcing~\cite{Imkeller2005}. Frequency resonance can also appear for periodic potentials and for the underdamped Langevin dynamics, and it can be used in order to optimize the effective drift.

\bigskip

In this paper we consider perturbations of the equilibrium Langevin dynamics, obtained by adding a space-time periodic external force. We consider a system evolving in a periodic medium. The microscopic configuration is $(q,p) \in \M \times \mathbb{R}^d$, where $\M$ is the unit cell of some  periodic lattice (for simple cubic lattices, $\M = (L\mathbb{T})^d$ where $\mathbb{T} = \mathbb{R}\backslash \mathbb{Z}$ is the unit torus in dimension~one). 
The equilibrium Langevin dynamics reads
\begin{equation}
\label{eq:Langevin_eq}
\left\{ 
\begin{aligned}
dq_t & = M^{-1} p_t \, dt, \\
dp_t & = -\nabla V(q_t) dt - \gamma M^{-1} p_t \, dt + 
\sqrt{\frac{2\gamma}{\beta}} \, dW_t.
\end{aligned}
\right.
\end{equation}
We will consider the nonequilibrium dynamics obtained by adding a space-time periodic driving force~: 
\begin{equation}
\label{eq:Langevin}
\left\{ 
\begin{aligned}
dq^\eta_t & = M^{-1} p^\eta_t \, dt, \\
dp^\eta_t & = \Big( -\nabla V(q^\eta_t) + \eta F(t,q^\eta_t) \Big) dt - \gamma M^{-1} p^\eta_t \, dt + \sqrt{\frac{2\gamma}{\beta}} \, dW_t.
\end{aligned}
\right.
\end{equation}
In these equations, $\gamma > 0$ is the friction coefficient, $W_t$ is a standard $d$-dimensional Brownian motion, the mass matrix $M$ is a positive definite $d \times d$ matrix, and $\beta$ is inversely proportional to the temperature. Throughout this work, we will assume that

\begin{assumption}
The potential $V:\mathcal{M}\to\mathbb{R}$ and the external force $F:T\mathbb{T}\times \mathcal{M}\to\mathbb{R}$ are smooth and $\M$-periodic, and the external force is also time dependent and periodic with period~$T$. 
\end{assumption}

The dynamics~\eqref{eq:Langevin} has been studied in detail for constant external forcings~\cite{rodenh,LatPavlKram2013} and for particular types of space-dependent forcings~\cite{JS12}. In this paper we will focus on spatiotemporal periodic external forcings. It can be proved that the dynamics~\eqref{eq:Langevin} has a well-defined steady state (see Propositions~\ref{prop:inv_meas_cv} and~\ref{prop:inv_meas_charact}). When $\eta = 0$ (so that the dynamics is~\eqref{eq:Langevin_eq}), this steady-state is given by the canonical measure 
\begin{equation}\label{e:gibbs}
\mu(q,p) \, dq \, dp = \frac{1}{Z} \mathrm{e}^{-\beta H(q,p)} \, dq \, dp,
\end{equation}
where 
\begin{equation*}
H(q,p) = V(q) + \frac12 p^T M^{-1} p
\end{equation*}
is the Hamiltonian of the system. 

When the external force is constant, the systematic driving manifests itself through some average nonzero velocity in the system. When the external force is non-constant, and in particular when its space-time average vanishes, it is unclear whether a nontrivial average velocity can be observed. The surprising result by Collet and Martinez~\cite{ColletMartinez} is that, in fact, there is in general a nonzero average velocity even if the space-time average of the external force is zero. Our aim in this article is to refine and complete the results obtained in~\cite{ColletMartinez} where the case $V=0$ was studied. In particular, we present a more detailed analysis of the problem of convergence to equilibrium for the dynamics~\eqref{eq:Langevin}, we show that this system can exhibit the phenomenon of absolute negative mobility, we analyze the phenomenon of mobility resonance for~\eqref{eq:Langevin} and we prove a functional central limit theorem (homogenization theorem) for the particle position, under the diffusive rescaling. Our theoretical results are supported by numerical simulations.

\subsection*{Main results and organization of the paper}

The main results of this work are the following. 
\begin{enumerate}[(1)]
\item Proposition~\ref{prop:LR_velocity} and the reformulation~\eqref{e:aver-drift} show that the linear response of the average velocity is generically nontrivial even if the external force vanishes in average (provided its time-average is non-gradient, see Remark~\ref{rem:gradient}). The results we obtain extend the ones presented in~\cite{ColletMartinez}, by taking into account a nonzero potential~$V$ and giving explicitly the expression of the average time dependent velocity in the system. 

\item Upon adding an appropriate constant force we are also able to find situations in which the average velocity and the average external force experienced by the system are in oppposite directions (see Section~\ref{sec:neg_mob}). This phenomenon therefore corresponds to a situation of negative mobility. We emphasize the fact that we observe negative mobility at the level of linear response. In the existing literature negative mobility has been observed at the nonlinear response level, for time-independent forcings; in particular, a subtle interplay between periodic and static forcings at low temperatures were needed for this effect to be observed~\cite{Hanggi_NegMob}.

\item Resonance effects for the average drift in~\eqref{eq:Langevin} are studied in Section~\ref{sec:stoch_res}. We give mathematical properties of the amplitude of the time dependent response as a function of the period of the forcing, and present numerical results illustrating the phenomenon of resonance. To the best of our knowledge, these are the first results on such resonance effects for Langevin dynamics, while there are plenty of studies on the resonance of the average drift for overdamped Langevin dynamics, see for instance~\cite{Hanggi_al_1998}.

\item The effective diffusion obtained in a diffusive space-time scaling is studied in Section~\ref{sec:diffusive}, for arbitrary forcing magnitudes~$\eta$. We show in particular that the effective diffusion matrix varies at second order in~$\eta$ when the time average of the forcing is~0. 
\end{enumerate}

The rest of the paper is organized as follows. After studying the convergence of the dynamics to its (time dependent) stationary state in Section~\ref{sec:cv} for arbitrary perturbations~$\eta$, we prove in Section~\ref{sec:LR} various results on the linear response of the average velocity, and finally consider the diffusive regime in Section~\ref{sec:diffusive}. The proofs of the results presented in Sections~\ref{sec:LR} to~\ref{sec:diffusive} are gathered in Section~\ref{sec:proofs}. Numerical simulations for a simple two-dimensional potential, used to illustrate our findings, are presented throughout the paper.

%-------- inv meas ---------
\section{Convergence towards the nonequilibrium steady state}
\label{sec:cv}

We first introduce some notation. We denote by 
\begin{equation}\label{e:phase-space}
\mathcal{E} = T \mathbb{T} \times \mathcal{M} \times \RR^d
\end{equation} 
the extended phase-space. The (time dependent) generator of the process~\eqref{eq:Langevin} is $\cA_0 + \eta \cA_1$, with
\[
\cA_0 = M^{-1} p \cdot \nabla_q -\nabla V \cdot \nabla_p + \gamma 
\left( -M^{-1} p \cdot \nabla_p + \frac1\beta \Delta_p \right),
\qquad
\cA_1 = F(t,q) \cdot \nabla_p.
\]
We denote by $\cA_0^\dagger$ and $\cA_1^\dagger$ the adjoints of the generators (i.e. Fokker-Planck operators) on the space $L^2(\mathcal{E})$. We also introduce the family of Lyapunov functions for $n \geq 1$,
\[
\Li_n(q,p) = 1 + |p|^{2n},
\]
the associated weighted $L^\infty$ norms on functions $f(q,p)$ of $\mathcal{M} \times \RR^d$:
\[
\| f \|_{L^\infty_{\Li_n}} = \left\| \frac{f}{\Li_n} \right\|_{L^\infty},
\]
and the corresponding $L^\infty$ norms on functions $f(t,q,p)$ defined on~$\mathcal{E}$:
\[
\| f \|_{L^\infty(L^\infty_{\Lin})} = \sup_{\theta \in T\mathbb{T}} \| f(\theta) \|_{L^\infty_{\Li_n}}.
\]
Finally, for an element $t\in \RR$, we denote by $[t]$ the unique element of $[0,T)$ such that $t-[t] \in T\mathbb{Z}$, i.e. the value of $t$ modulo the period~$T$.

\bigskip

The first convergence result shows that the process stabilizes around a limit cycle described by a time dependent (periodic) invariant measure.

\begin{prop}[Uniform convergence to a limit cycle]
\label{prop:inv_meas_cv}
Fix $\eta_* > 0$ and $n \geq 1$. There exists a unique probability measure $\psi_\eta(\theta,q,p)$ on $T \mathbb{T} \times \mathcal{M} \times \RR^d$ and constants $C_n,\lambda_n > 0$ (depending on~$n$ and~$\eta_*$) such that, for any initial distribution~$(q_0,p_0)$ and for any $\eta \in [-\eta_*,\eta_*]$,
\begin{equation}
\label{eq:cv_exp_time_inhomog}
\forall f \in L^\infty(L^\infty_\Lin), \qquad \left| \mathbb{E}\Big( f([t],q^\eta_t,p^\eta_t)\Big) - \overline{f}_\eta([t])\right| \leq C_n \rme^{-\lambda_n t} \, \| f \|_{L^\infty(L^\infty_{\Lin})},
\end{equation}
where, for $\theta \in T\mathbb{T}$, the spatial average of~$f$ reads
\begin{equation}
\label{eq:local_time_average}
\overline{f}_\eta(\theta) = \int_{\mathcal{M} \times \RR^d} f(\theta,q,p) \, \psi_\eta(\theta,q,p) \, dq \, dp.
\end{equation}
The invariant distribution is smooth, positive ($\psi_\eta(t,q,p) > 0$ for all $(t,q,p) \in \mathcal{E}$) and satisfies the Fokker-Planck equation
\begin{equation}
  \label{eq:FokkerPlanck}
  \left(-\partial_t + \cA_0^\dagger + \eta \cA_1^\dagger\right) \psi_\eta = 0,
  \qquad
  \int_\cE \psi_\eta = 1.
\end{equation}
Finally, it has has finite moments of order~$2n$ uniformly in the time variable
\begin{equation}
\label{eq:finite_moments}
\forall \theta \in T \mathbb{T}, \qquad \int_{\mathcal{M} \times \RR^d} \Li_n(q,p) \, \psi_\eta(\theta,q,p) \, dq \, dp \leq R_n < +\infty,
\end{equation}
and has uniform marginals in the time variable:
\[
\overline{\psi_\eta}(\theta) = \int_\mathcal{E} \psi_\eta(\theta,q,p) \, dq \, dp = \frac1T.
\]
\end{prop}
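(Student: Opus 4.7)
The plan is to lift the non-autonomous process $(q_t^\eta, p_t^\eta)$ to a time-homogeneous Markov process on the extended phase space $\cE$ by appending the periodic time coordinate $\theta_t = [t]$, which evolves deterministically as $d\theta = dt$ on $T\bT$. The lifted generator is $\partial_\theta + \cA_0 + \eta\cA_1$, and its invariant probability measures are precisely the smooth positive solutions of the Fokker--Planck equation~\eqref{eq:FokkerPlanck}. Since the $\theta$-component is on its own a uniform rotation on $T\bT$, the $\theta$-marginal of any invariant measure must be $1/T$, which settles the last claim. The other properties will be proved by a Harris-type argument for the discrete chain $(q^\eta_{nT}, p^\eta_{nT})$.

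First I would establish a drift inequality of the form $(\partial_\theta + \cA_0 + \eta\cA_1)\Li_n \leq -a_n\Li_n + b_n$ with $a_n,b_n>0$ uniform for $\eta\in[-\eta_*,\eta_*]$. This is a direct computation on $1+|p|^{2n}$: the friction term produces $-2n\gamma\,p^T M^{-1}p\,|p|^{2n-2}$, which dominates the contributions from $-\nabla V(q)\cdot\nabla_p|p|^{2n}$ and $\eta F(t,q)\cdot\nabla_p|p|^{2n}$ (both of order $|p|^{2n-1}$ with uniformly bounded coefficients, by smoothness and periodicity of $V$ and $F$) and the diffusion $\gamma\beta^{-1}\Delta_p|p|^{2n}$ of order $|p|^{2n-2}$. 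Integrating this inequality against any invariant density $\psi_\eta(\theta,\cdot)$ yields the moment bound~\eqref{eq:finite_moments} uniformly in $\theta$.

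Next I would show that the time-$T$ Markov kernel $P^T_\eta$ of the discrete chain satisfies a uniform minorization on every compact sublevel set of $\Li_n$: there exist a nontrivial measure $\nu$ and $\kappa>0$, both independent of $\eta\in[-\eta_*,\eta_*]$, with $P^T_\eta(x,\cdot)\geq \kappa\,\nu(\cdot)$ on the sublevel set. This combines hypoellipticity of the generator (H\"ormander's bracket condition is satisfied by $M^{-1}p\cdot\nabla_q$, the momentum partial derivatives, and the time derivative) with controllability of the underlying SDE via the Brownian noise; together they give a smooth positive transition density depending continuously on $\eta$. Coupling this minorization with the Lyapunov bound of the previous step via the weighted Harris theorem of Hairer--Mattingly (equivalently the $V$-uniform ergodicity theorem of Meyn--Tweedie) yields existence, uniqueness and exponential convergence to an invariant measure $\mu_\eta$ for the discrete chain in the $L^\infty_{\Li_n}$ weighted norm, at a rate uniform in $\eta$. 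Setting $\psi_\eta(\theta,\cdot) = T^{-1}\,\Phi^\theta_* \mu_\eta$, where $\Phi^\theta$ is the time-$\theta$ flow of the SDE started at time $0$, produces the continuous-time time-periodic invariant density, and interpolating between the discrete times $nT$ using finite-time moment estimates converts the discrete-time contraction into the estimate~\eqref{eq:cv_exp_time_inhomog}. Smoothness of $\psi_\eta$ follows from hypoellipticity of the Fokker--Planck operator on $\cE$, and strict positivity from the strong maximum principle for hypoelliptic equations (or from the Stroock--Varadhan support theorem applied to the lifted SDE).

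The main obstacle is the uniform-in-$\eta$ minorization: the minorizing measure $\nu$ must be prevented from degenerating as the forcing is varied. This can be handled either by a quantitative controllability argument yielding explicit lower bounds on the transition density that are continuous in $\eta$, or more economically by establishing a smooth density bound at $\eta=0$ (where the equilibrium measure is explicit) and extending it perturbatively for small $|\eta|$, then covering the compact interval $[-\eta_*,\eta_*]$ by compactness. Beyond this, verifying that the Harris constants depend only on $n$ and $\eta_*$ (not on $\eta$ itself) is routine given the uniform bounds $a_n,b_n$ in the Lyapunov step.
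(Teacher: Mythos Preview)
Your proposal follows essentially the same strategy as the paper: pass to the time-$T$ sampled chain, verify a Lyapunov condition and a uniform minorization, apply the Hairer--Mattingly/Meyn--Tweedie machinery, then propagate to shifted initial phases $\theta$ and to continuous time, and conclude smoothness and positivity by hypoellipticity and a control argument. The one substantive technical difference is the minorization step. You sketch an abstract route via hypoellipticity, controllability, and compactness in $\eta$, and you correctly flag the uniformity in $\eta$ as the delicate point. The paper avoids this issue entirely by writing the one-period evolution of $(Q_{k+1}^\eta,P_{k+1}^\eta)$ explicitly via Duhamel's formula as a uniformly bounded drift (depending on $\eta$ only through $\|F\|_{L^\infty}$) plus an explicit nondegenerate Gaussian increment that does not depend on $\eta$ at all; the minorizing measure is then read off directly from the Gaussian density, with constants depending only on $\eta_*$. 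This is shorter and makes the uniformity automatic. The paper also establishes the discrete Lyapunov bound $U_{T,\eta}\Li_n\le a\Li_n+b$ from the same explicit representation rather than from your infinitesimal inequality, but that difference is cosmetic.
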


Upon averaging in time, standard convergence results can be recovered, such as the following Law of Large Numbers, which will prove useful to study the effective diffusive behavior.
\begin{prop}
\label{prop:LLN}
Consider $\eta \in \RR$ and $f \in L^\infty(L^\infty_\Lin)$. Then, for any initial condition $(q_0,p_0)$,
\begin{equation}
\label{eq:LLN}
\frac1t \int_0^t f([s],q^\eta_s,p^\eta_s) \, ds \xrightarrow[t \to +\infty]{} \int_\mathcal{E} f \, \psi_\eta \qquad \mathrm{a.s.}
\end{equation}
\end{prop}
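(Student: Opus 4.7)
The plan is to reduce the LLN to an ergodic theorem for a time-homogeneous discrete-time Markov chain via a skeleton construction at the forcing period~$T$. By $T$-periodicity of $F$, the sampled sequence $\widetilde{X}_k := (q^\eta_{kT}, p^\eta_{kT})$ is a time-homogeneous Markov chain on $\M \times \RR^d$. Applying Proposition~\ref{prop:inv_meas_cv} at times $t=kT$ and to observables depending only on $(q,p)$ shows that this chain admits a unique invariant probability measure $\widetilde{\psi}_\eta(dq\,dp) = T\,\psi_\eta(0,q,p)\,dq\,dp$ and is geometrically ergodic in the weighted norm $L^\infty_{\Lin}$.

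With $N := \lfloor t/T\rfloor$ and $Y_k := \int_{kT}^{(k+1)T} f([s], q^\eta_s, p^\eta_s)\,ds$, I split
\begin{equation*}
\frac{1}{t}\int_0^t f([s],q^\eta_s,p^\eta_s)\,ds \;=\; \frac{1}{t}\sum_{k=0}^{N-1} Y_k \;+\; \frac{1}{t}\int_{NT}^t f([s],q^\eta_s,p^\eta_s)\,ds.
\end{equation*}
The tail term vanishes almost surely thanks to the uniform-in-time moment bounds on $\Li_n(q^\eta_s, p^\eta_s)$ implied by the contraction~\eqref{eq:cv_exp_time_inhomog} applied with $f=\Li_n$. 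For the main sum, I decompose $Y_k = h(\widetilde{X}_k) + (Y_k - h(\widetilde{X}_k))$, where $h(x) := \EE_x[Y_0]$. The centred increments $Y_k - h(\widetilde{X}_k)$ form a martingale difference sequence for the discrete filtration at the sampling times~$kT$; the assumption $f \in L^\infty(L^\infty_{\Lin})$ combined with~\eqref{eq:finite_moments} and the exponential contraction make them uniformly bounded in $L^2$, so the martingale strong law gives $N^{-1}\sum_{k=0}^{N-1}(Y_k - h(\widetilde{X}_k)) \to 0$ almost surely.

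It remains to show $N^{-1}\sum_{k=0}^{N-1} h(\widetilde{X}_k) \to \widetilde{\psi}_\eta(h)$ almost surely, starting from an arbitrary initial state. This follows from the geometric ergodicity in $L^\infty_{\Lin}$ of the skeleton chain (derived above from Proposition~\ref{prop:inv_meas_cv}) via a Meyn--Tweedie-type strong law for Markov chains, since $h \in L^\infty_{\Lin}$ and $\widetilde{\psi}_\eta$ has finite $\Li_n$-moment by~\eqref{eq:finite_moments}. A short direct computation, using that the process started at time~$0$ under $\widetilde{\psi}_\eta$ has law $T\,\psi_\eta(s,\cdot,\cdot)$ at any time $s\in[0,T)$, identifies $\widetilde{\psi}_\eta(h) = T\int_\cE f\,\psi_\eta$. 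Since $t/N\to T$ as $t\to\infty$, the three contributions combine to yield the announced limit $\int_\cE f\,\psi_\eta$. The main technical subtlety I anticipate is this final ergodic step: converting the $L^\infty$-in-norm contraction~\eqref{eq:cv_exp_time_inhomog} into an almost-sure Birkhoff-type statement for the skeleton chain starting from a deterministic initial point. This is classical once uniform geometric ergodicity is in hand, but is the step where the weighted-norm framework does genuine work; everything else reduces to standard martingale arguments and to the moment bounds already provided by Proposition~\ref{prop:inv_meas_cv}.
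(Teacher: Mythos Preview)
Your approach is essentially the paper's: both reduce to the skeleton chain $(q^\eta_{kT},p^\eta_{kT})$ and obtain the discrete-time LLN via a Meyn--Tweedie argument (the paper does this through Harris recurrence, using the smooth positive transition kernel of Lemma~\ref{lem:smoothness_UT}, and then simply asserts that the continuous-time statement ``follows''). Your period-by-period decomposition $Y_k = h(\widetilde X_k) + (Y_k - h(\widetilde X_k))$ with a martingale strong law for the centred part is a clean way to make that last step explicit; the paper does not spell this out. One small looseness: uniform-in-$t$ bounds on $\mathbb{E}[\Lin(q^\eta_s,p^\eta_s)]$ do not by themselves force the tail $\frac{1}{t}\int_{NT}^t f$ to vanish almost surely---the quickest fix is to run your same argument for $|f|$ to get $\frac{1}{N}\sum_{k<N}\int_{kT}^{(k+1)T}|f|\,ds$ convergent a.s., whence the $N$th summand divided by $N$ (which dominates the tail) goes to zero.
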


The invariant measure can be fully characterized in the linear response regime as a perturbation around the equilibrium measure~$\mu$ defined in~\eqref{e:gibbs}, for forcings sufficiently small. Let us emphasize that this result is perturbative, in contrast to the convergence statement given by Proposition~\ref{prop:inv_meas_cv}. Similar results have been obtained for different stochastic systems in~\cite{KomOlla2005}.

\begin{prop}[Series expansion of the invariant measure for small forcings]
\label{prop:inv_meas_charact}
There exists $C,r > 0$ such that, for $|\eta| < r$, the invariant measure is given by the following series expansion in~$\eta$: 
\[
\psi_\eta(t,q,p) = \rho_\eta(t,q,p) \mu(q,p), 
\qquad 
\rho_\eta(t,q,p) = 1 + \eta \varrho_1(t,q,p) + \eta^2 \varrho_2(t,q,p) + \dots
\]
with 
\begin{equation}
\label{eq:conditions_varrho}
\int_\mathcal{E} |\varrho_m(t,q,p)|^2 \mu(q,p)\,dq\,dp\,dt \leq \frac{C}{r^m}, 
\qquad
\int_\cE \varrho_m(t,q,p) \, \mu(q,p)\,dq\,dp\,dt = 0.
\end{equation}
\end{prop}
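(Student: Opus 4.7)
The plan is to write the invariant density as $\psi_\eta = \rho_\eta\, \mu$ and expand $\rho_\eta$ formally in powers of $\eta$, obtaining a triangular hierarchy of Poisson equations that can be solved quantitatively with a hypocoercive inverse for the lifted generator on $\cE$. Since $\cA_0^\dagger \mu = 0$ and $\partial_t \mu = 0$, the Fokker--Planck equation~\eqref{eq:FokkerPlanck} is equivalent to
$$
\bigl(-\partial_t + \cA_0^\star + \eta\, \cA_1^\star\bigr) \rho_\eta = 0,
$$
where $\cA_j^\star$ denotes the $L^2(\mu)$-adjoint of $\cA_j$. A direct integration by parts gives $\cA_0^\star = -M^{-1} p \cdot \nabla_q + \nabla V \cdot \nabla_p + \gamma(-M^{-1} p \cdot \nabla_p + \beta^{-1} \Delta_p)$ (the generator of the momentum-reversed Langevin dynamics) and $\cA_1^\star = -F \cdot \nabla_p + \beta\, F \cdot M^{-1} p$. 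Substituting the ansatz $\rho_\eta = 1 + \sum_{m \geq 1} \eta^m \varrho_m$ and matching powers of $\eta$ yields, for $m \geq 1$,
$$
\bigl(-\partial_t + \cA_0^\star\bigr) \varrho_m = -\cA_1^\star \varrho_{m-1},
$$
with the normalization $\int_\cE \varrho_m\, \mu = 0$ selecting the unique solution. The solvability condition $\int_\cE \cA_1^\star \varrho_{m-1}\, \mu = \int_\cE \varrho_{m-1}\, \cA_1 \mathbf{1}\, \mu = 0$ holds automatically because $\cA_1 \mathbf{1} = 0$.

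The analytic crux is to invert $-\partial_t + \cA_0^\star$ on the subspace of mean-zero functions of $L^2(\mu \otimes dt/T)$ over $\cE$, with a bound independent of~$m$. Applying the $\eta = 0$ case of Proposition~\ref{prop:inv_meas_cv} to the equilibrium dynamics and dualizing yields a spectral gap, and hence a bounded pseudo-inverse on this subspace. The main obstacle is that $\cA_1^\star$ is an \emph{unbounded} first-order perturbation whose multiplicative part grows linearly in~$|p|$, so a plain $L^2(\mu)$ estimate does not close the induction. One therefore has to work in a scale of weighted Sobolev spaces tailored to the hypoelliptic smoothing of~$\cA_0$: combining (i) pointwise or weighted resolvent estimates of the kind announced in the abstract (which absorb one $p$-derivative through the hypocoercive inverse) with (ii) the polynomial moment bound~\eqref{eq:finite_moments} (to control the factor $\beta F \cdot M^{-1} p$), one obtains an inductive inequality $\|\varrho_m\|_\star \leq K\, \|\varrho_{m-1}\|_\star$ in a suitable weighted Sobolev norm $\|\cdot\|_\star$ that dominates the $L^2(\mu \otimes dt/T)$ norm.

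Setting $r = 1/K$ then gives the geometric decay $\|\varrho_m\|_{L^2(\mu \otimes dt/T)}^2 \leq C/r^m$ asserted in~\eqref{eq:conditions_varrho}. For $|\eta| < r$ the series $\sum_m \eta^m \varrho_m$ converges in $L^2(\mu \otimes dt/T)$, and its sum $\rho_\eta$ is smooth (by term-by-term hypoelliptic regularity), solves~\eqref{eq:FokkerPlanck}, and has total mass one; the uniqueness part of Proposition~\ref{prop:inv_meas_cv} then identifies $\rho_\eta\, \mu$ with~$\psi_\eta$.
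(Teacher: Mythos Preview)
Your overall architecture---rewrite~\eqref{eq:FokkerPlanck} for $\rho_\eta=\psi_\eta/\mu$ in $L^2(\mu)$, expand as a Neumann series, solve the hierarchy $(-\partial_t+\cA_0^\star)\varrho_m=-\cA_1^\star\varrho_{m-1}$, and invoke uniqueness from Proposition~\ref{prop:inv_meas_cv}---is exactly the paper's. The solvability check $\int_\cE \cA_1^\star\varrho_{m-1}\,\mu=0$ is also right.

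Where you go astray is the analytic core. You assert that ``a plain $L^2(\mu)$ estimate does not close the induction'' because $\cA_1^\star$ contains the unbounded multiplication by $\beta F\cdot M^{-1}p$, and then reach for weighted Sobolev spaces and pointwise resolvent bounds that you do not actually set up. This is both unnecessary and, as written, a genuine gap: you never produce the claimed norm $\|\cdot\|_\star$ or the inequality $\|\varrho_m\|_\star\le K\|\varrho_{m-1}\|_\star$.

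The paper closes the loop entirely in $L^2(\cE,\mu)$ by a duality trick you are missing. Instead of bounding $(-\partial_t+\cA_0^\star)^{-1}\cA_1^\star$ directly, bound its adjoint $\cA_1(\partial_t+\cA_0)^{-1}$. Since $\cA_1=F\cdot\nabla_p$ with $F$ bounded, one has $\|\cA_1 f\|_{L^2(\mu)}^2\le \|F\|_\infty^2\|\nabla_p f\|_{L^2(\mu)}^2$. Now the antisymmetric parts of $\partial_t+\cA_0$ drop out of $\langle(\partial_t+\cA_0)f,f\rangle_{L^2(\mu)}$, leaving only the Ornstein--Uhlenbeck piece, so
\[
\|\nabla_p f\|_{L^2(\mu)}^2 \;=\; -\frac{\beta}{\gamma}\,\bigl\langle(\partial_t+\cA_0)f,\,f\bigr\rangle_{L^2(\mu)}
\;\le\; \frac{\beta}{\gamma}\,\|f\|_{L^2(\mu)}\,\|(\partial_t+\cA_0)f\|_{L^2(\mu)}.
\]
Hence $\cA_1$ is $(\partial_t+\cA_0)$-bounded, so $\cA_1(\partial_t+\cA_0)^{-1}$ is bounded from $\cH=L^2(\cE,\mu)\cap\{\mathbf{1}\}^\perp$ to $L^2(\cE,\mu)$, and by adjunction $(-\partial_t+\cA_0^\star)^{-1}\cA_1^\star$ is bounded on $L^2(\cE,\mu)$; since $\mathrm{Ran}(\cA_1^\star)\subset\cH$ (your own observation), it is bounded on~$\cH$. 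This single operator-norm bound gives the geometric estimate in~\eqref{eq:conditions_varrho} with $r$ the inverse of that norm, without any weighted spaces.
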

The functions $\varrho_m$ are not explicitely known, but are defined as solutions of appropriate Poisson equations (see Section~\ref{sec:inv_meas_proof_charact}). The leading order correction $\varrho_1$ is particularly important since it governs the linear response.

%--------- LR ---------
\section{Linear response of the velocity}
\label{sec:LR}

\subsection{General result}
\label{sec:general_LR}

For a given perturbation strength, define the time dependent spatially averaged velocity 
\[
\overline{v}_\eta(t) = \int_\M \int_{\mathbb{R}^d} M^{-1}p \, \psi_\eta(t,q,p) \, dq \, dp
\]
for any $t \in [0,T]$, and the associated linear response
\[
\mathscr{V}(t) = \lim_{\eta \to 0} \frac{\overline{v}_\eta(t)}{\eta}.
\]
To decompose $\mathscr{V}(t)$, we introduce the (unnormalized) Fourier modes on $L^2(T \mathbb{T})$
\[
%\begin{equation}
%\label{eq:def_e_n}
e_n(t) = \mathrm{e}^{\ri n \omega t}, \qquad \omega = \frac{2\pi}{T},
%\end{equation}
\]
and first decompose the real-valued external force as
\[
F(t,q) = F_0(q) + \sum_{n \in \mathbb{Z} \backslash \{ 0 \} } F_n(q) e_n(t) = 
F_0(q) + 2\sum_{n \geq 1} \mathrm{Re}\Big( F_n(q) e_n(t) \Big),
\]
with
\[
F_n(q) = \frac1T \int_0^T F(t,q)\,  \mathrm{e}^{-\ri n \omega t}\,dt.
\]
Note that $F_{-n} = \overline{F_n}$ (the bar indicating here complex conjugation).

The following result (proved in Section~\ref{sec:proof_LR_velocity}) shows that each time harmonic of the linear response of the time dependent velocity is directly proportional to the corresponding harmonic of the external force. We will use the notation
\[
\widetilde{\mu}(q) = \int_{\mathbb{R}^d} \mu(q,p) \, dp = \widetilde{Z}^{-1} \mathrm{e}^{-\beta V(q)} 
\]
for the marginal density of the canonical measure in the position variables ($\widetilde{Z}$ denotes the normalization constant), and, for a given operator~$A$, consider the element $Ap$ as the vector with components $A p_i$. 

\begin{prop}
\label{prop:LR_velocity}
The linear response of the time dependent spatially averaged velocity can be related to the external force as
\begin{equation}
\label{eq:full_LR}
\mathscr{V}(t) = \beta \sum_{n \in \bZ} e_n(t) \int_\M D_n(q) F_{n}(q)\, \widetilde{\mu}(q) \, dq, 
\end{equation}
where the position-dependent diffusion matrix reads
\begin{align}
D_n(q) & = \int_0^{+\infty} \mathbb{E}\Big( \left(M^{-1}p_s\right) \otimes \left(M^{-1}p_0\right) \, \Big| \, q_0 = q\Big) \mathrm{e}^{\ri n \omega s} ds \nonumber \\
& = -\left(\frac{2\pi}{\beta}\right)^{-d}|\mathrm{det}(M)|^{-1/2}\int_{\mathbb{R}^d} \left[(\ri n\omega+\cA_0)^{-1} \left(M^{-1}p\right)\right] \otimes \left(M^{-1}p\right) \, \exp\left(-\beta\frac{p^T M^{-1}p}{2}\right)\,dp, \label{eq:reformulation_Dn}
\end{align}
the expectation in the first equality being with respect to canonically distributed initial momenta $p_0$, and for all realizations of the equilibrium Langevin dynamics~\eqref{eq:Langevin_eq} starting from $(q,p_0)$. In particular, the average (time-independent) velocity depends only on the component $F_0$ of the external force:
\begin{equation}
\label{eq:time_avg_response}
\overline{\mathscr{V}} = \frac{1}{T} \int_0^T \mathscr{V}(t) \, dt
= \beta \int_\M D_0(q) F_0(q) \, \widetilde{\mu}(q) \, dq.
\end{equation}
\end{prop}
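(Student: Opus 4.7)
The plan is to use the series expansion of the invariant measure from Proposition~\ref{prop:inv_meas_charact}, $\psi_\eta = \mu(1 + \eta\varrho_1 + O(\eta^2))$, which gives
\[
\mathscr{V}(t) \;=\; \int_{\M\times\RR^d} M^{-1}p\,\varrho_1(t,q,p)\,\mu(q,p)\,dq\,dp,
\]
and then to derive and invert a Poisson-type equation for the first correction~$\varrho_1$. Substituting the expansion into the Fokker-Planck equation~\eqref{eq:FokkerPlanck} and collecting the $O(\eta)$ contribution, one uses the invariance $\cA_0^\dagger\mu=0$, the intertwining identity $\cA_0^\dagger(f\mu)=\mu\,\cA_0^* f$ (where $\cA_0^*$ denotes the $L^2(\mu)$-adjoint of $\cA_0$, obtained for Langevin by reversing the sign of the Hamiltonian part), and the direct computation $\cA_1^\dagger\mu = -\nabla_p\cdot(F\mu) = \beta\,(F\cdot M^{-1}p)\,\mu$ to obtain
\[
(\partial_t - \cA_0^*)\,\varrho_1 \;=\; \beta\,F(t,q)\cdot M^{-1}p.
\]
The right-hand side is odd in~$p$, hence centred against~$\mu$, so it lies in the subspace on which the Langevin semigroup decays exponentially (by hypocoercivity); this guarantees the existence of a unique $T$-periodic solution.

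Next, Fourier-decomposing in time by writing $\varrho_1 = \sum_{n\in\bZ}\varrho_1^{(n)}(q,p)\,e_n(t)$ and matching harmonics reduces the periodic PDE above to the family of stationary resolvent equations
\[
(\ri n\omega - \cA_0^*)\,\varrho_1^{(n)} \;=\; \beta\,F_n(q)\cdot M^{-1}p,
\qquad n\in\bZ.
\]
Inserting this into $\mathscr{V}(t) = \sum_n e_n(t)\int (M^{-1}p)\,\varrho_1^{(n)}\mu$ and invoking the $L^2(\mu)$-duality $\int f(Bg)\mu = \int(B^{\mathrm{tr}}f)g\,\mu$ to move the resolvent from the source onto the momentum observable (whereby $\cA_0^*$ is replaced by $\cA_0$) expresses every harmonic of $\mathscr{V}$ as a contraction of $F_n$ with a Gaussian moment in~$p$ of a resolvent acting on $M^{-1}p$. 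Factoring out the normalisation constant of the conditional Gaussian density $\mu(\,\cdot\,|\,q)$ then yields the second representation of $D_n$ in~\eqref{eq:reformulation_Dn}.

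The correlation-function form of $D_n$ then follows by rewriting the resolvent as a one-sided Laplace transform of the equilibrium semigroup: since $e^{s\cA_0}$ decays exponentially on the mean-zero subspace, the identity $(\lambda - \cA_0)^{-1}f = \int_0^{+\infty} e^{-\lambda s}\,e^{s\cA_0}f\,ds$ holds along the relevant purely imaginary axis. Applied to $f = M^{-1}p$ and combined with the probabilistic representation $(e^{s\cA_0}g)(q,p) = \EE_{(q,p)}[g(q_s,p_s)]$ for the equilibrium dynamics~\eqref{eq:Langevin_eq}, integration in~$p$ at fixed~$q$ collapses the Gaussian moment to the conditional autocorrelation $\EE((M^{-1}p_s)\otimes(M^{-1}p_0)\mid q_0 = q)$, giving~\eqref{eq:full_LR}. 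The time-averaged formula~\eqref{eq:time_avg_response} is then immediate from the orthogonality $\int_0^T e_n(t)\,dt = T\,\delta_{n,0}$, which kills every harmonic except $n=0$. The main technical obstacle in this plan is to make sense of the resolvents and of the Laplace representation on the unbounded observable $M^{-1}p$: this requires a spectral gap of $\cA_0$ on the mean-zero subspace within a weighted $L^\infty_{\Li_n}$-type space accommodating polynomial growth in the momentum, which is exactly where the hypocoercivity and pointwise resolvent estimates developed elsewhere in the paper are needed.
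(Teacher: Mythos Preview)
Your proposal is correct and follows essentially the same route as the paper: derive the equation $(-\partial_t+\cA_0^*)\varrho_1=-\beta\,p^TM^{-1}F$ from the expansion of the Fokker--Planck equation, Fourier-decompose in time to obtain the resolvent equations $(\ri n\omega+\cA_0)^*\varrho_{1,n}=-\beta\,p^TM^{-1}F_n$, move the resolvent onto $M^{-1}p$ by $L^2(\mu)$-duality, and then rewrite $(\ri n\omega+\cA_0)^{-1}$ as a one-sided Laplace transform of the equilibrium semigroup to get the autocorrelation form of~$D_n$. The only minor discrepancy is the function space in which the semigroup decay is invoked: the paper uses hypocoercivity in $\widetilde{H}^1(\mu)$ (Lemma~\ref{lem:indep_omega}) rather than the weighted $L^\infty_{\Li_n}$ spaces you suggest, but either choice suffices since $M^{-1}p$ lies in both.
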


\begin{remark}
Proposition~\ref{prop:LR_velocity} refines the results of~\cite{ColletMartinez} in two ways: (i) it gives an expression of $\mathscr{V}(t)$ and not only of its time average, and (ii) it highlights the fact that $F_0$ solely determines whether the average velocity vanishes or not. 
\end{remark}

\begin{remark}[The average velocity vanishes for gradient perturbations]
\label{rem:gradient}
When $F(t,q) = F_0(q) = -\nabla W(q)$, the process has an invariant measure whose explicit expression is known: it is the canonical measure associated with the potential energy function $V+\eta W$. In this case, the average velocity should vanish. In fact, the average velocity $\overline{\mathscr{V}}$ is zero as soon as the time-averaged external force is given by the gradient of a scalar function: $F_0(q) = -\nabla W(q)$, as can be seen from~\eqref{eq:time_avg_response}. Indeed, with expectations taken for all initial conditions distributed according to the equilibrium steady state~$\mu$ defined in~\eqref{e:gibbs} and for all realizations of the equilibrium Langevin dynamics~\eqref{eq:Langevin_eq},
\begin{align*}
\overline{\mathscr{V}} & = -\beta \int_0^{+\infty}
\mathbb{E}\left[  \left(M^{-1}p_t \otimes M^{-1}p_0 \right) \nabla W(q_0) \right] \, dt \\
& = -\beta \int_0^{+\infty}
\mathbb{E}\left[  \left(M^{-1}p_0 \otimes M^{-1}p_t \right) \nabla W(q_t) \right] \, dt \\
& = -\beta \, \mathbb{E}\left[M^{-1}p_0 \int_0^{+\infty} \left(M^{-1}p_t\right)^T \nabla W(q_t) \, dt\right].
\end{align*}
where we have used the time-reversal invariance to go from the first to the second line (namely $\cA_0^* = \cR \cA_0 \cR$ where $\cR \varphi(q,p) = \varphi(q,-p)$, hence $(\rme^{t\cA_0})^* = \cR \rme^{t \cA_0} \cR$). We now use the fact that, for any $\tau > 0$,
\[
\mathbb{E}\left[M^{-1}p_0 \int_0^\tau \left(M^{-1}p_t\right)^T \nabla W(q_t) \, dt\right] = \mathbb{E}\left[M^{-1}p_0 W(q_\tau)\right]-\mathbb{E}\left[M^{-1}p_0W(q_0))\right].
\]
The second expectation vanishes. For the first expectation we use the fact that, by ergodicity, the law of $q_\tau$ has some limiting behavior whatever the choice of~$p_0$, so that
\[
\lim_{\tau \to +\infty} \mathbb{E}\left[M^{-1}p_0 W(q_\tau)\right] = \mathbb{E}\left[M^{-1}p_0 \left(\lim_{\tau \to +\infty} \mathbb{E}\left[\left. W(q_\tau)\, \right| \, \mathcal{F}_0\right] \right) \right] = 0.
\]
\end{remark}

\subsubsection*{Numerical illustration}

We illustrate the results from Proposition~\ref{prop:LR_velocity} with some numerical results.
We consider a single particle of mass~$1$ in the two-dimensional potential
\begin{equation}\label{eq:pot2d}  
V(q) = 2\cos(2x)+\cos(y) + \cos(x-y),
\end{equation}
where $q = (x, \, y)$. The numerical scheme is obtained by a Strang splitting between the Hamiltonian part and the fluctuation-dissipation part (including the nonequilibrium forcing). More precisely, denoting by $(q^n,p^n)$ approximations of $(q_{n \Delta t},p_{n \Delta t})$ (to simplify the notation, we do not explicitly denote the dependence on~$\eta$), 
\begin{equation}
\label{eq:num_Langevin}
\left\{ 
\begin{aligned}
p^{n+1/2} & = \alpha p^n +\frac{\Delta t}{2} \Big(-\nabla V(q^n)+ \eta F(t^n,q^n)\Big)
+ \sqrt{\frac{1-\alpha^2}{\beta}} \, G^n, \\
q^{n+1} & = q^n + \Delta t \, M^{-1} p^{n+1/2}, \\
p^{n+1} & = \alpha p^{n+1/2} +\frac{\Delta t}{2} \Big(-\nabla V(q^{n+1})+ \eta F(t^{n+1},q^{n+1})\Big) 
+ \sqrt{\frac{1-\alpha^2}{\beta}} \, G^{n+1/2}, \\
\end{aligned}
\right.
\end{equation}
with $\alpha = \mathrm{e}^{-\gamma \Delta t \, M^{-1}/2}$ and where $G^n,G^{n+1/2}$ are i.i.d. 2-dimensional Gaussian random vectors. We refer to~\cite{LMS13} for a numerical analysis of the errors on the linear responses computed using this numerical scheme.

The average time dependent velocities $\mathscr{V}(t)$ are approximated at the times
$\tau_i = i\Delta t \in [0,T]$, for $i=1,\dots,I$ with $I \Delta t = T$.
These approximations are numerically computed as longtime averages 
over trajectories $(q^n,p^n)_{n=1,\dots,N}$ of~\eqref{eq:num_Langevin} (with $N/I \in \mathbb{N}$) by considering velocities only at times $\tau_i, \tau_i + T, \tau_i + 2T, \dots$:
\begin{equation}
  \label{eq:v_eta_num}
  \overline{v}_\eta(\tau_i) \simeq \frac{I}{N} \sum_{j=1}^{N/I} M^{-1} p^{i + jI}.
\end{equation}
The average, time-independent linear response $\overline{\mathscr{V}}$ is then approximated by fitting the linear dependence of the average velocity
\[
\frac{1}{I} \sum_{i=1}^I \overline{v}_\eta(\tau_i)
\]
as a function of $\eta$ using a least-square fit.

In the numerical experiments reported below, the numerical parameters are set to $\Delta t = 0.01$, $\beta = 1$, $\gamma = 1$, and the dynamics was integrated over $N = 4.5 \times 10^9$ time-steps. The maximum value of the forcing strength is $\eta_{\rm max}=1$. We performed $R=100$ independent simulations with equally spaced intermediate values $\eta_{\rm max}/R, 2\eta_{\rm max}/R, \dots, \eta_{\rm max}$. 

We consider non-gradient external forcings of the form
\[
%\begin{equation}
%\label{eq:SLR_force}
F_{0,n}(q) = \mathrm{e}^{\beta V(q)} \begin{pmatrix}
\cos(nx) \\ 
0 \\ 
\end{pmatrix},
%\end{equation}
\]
with $n\in \bN$. The results are presented in Figure~\ref{fig:SLR}. We tested the values $n=1,2,3,4$. Nontrivial responses are obtained for all of them. In this simple example, the forcings in the $x$ direction induce a first-order response in the $x$ and $y$ directions (not documented in the pictures of Figure~\ref{fig:SLR}; the response in the $y$ direction roughly is 5 times smaller than the response in the $x$ direction). 

\begin{figure}[h]
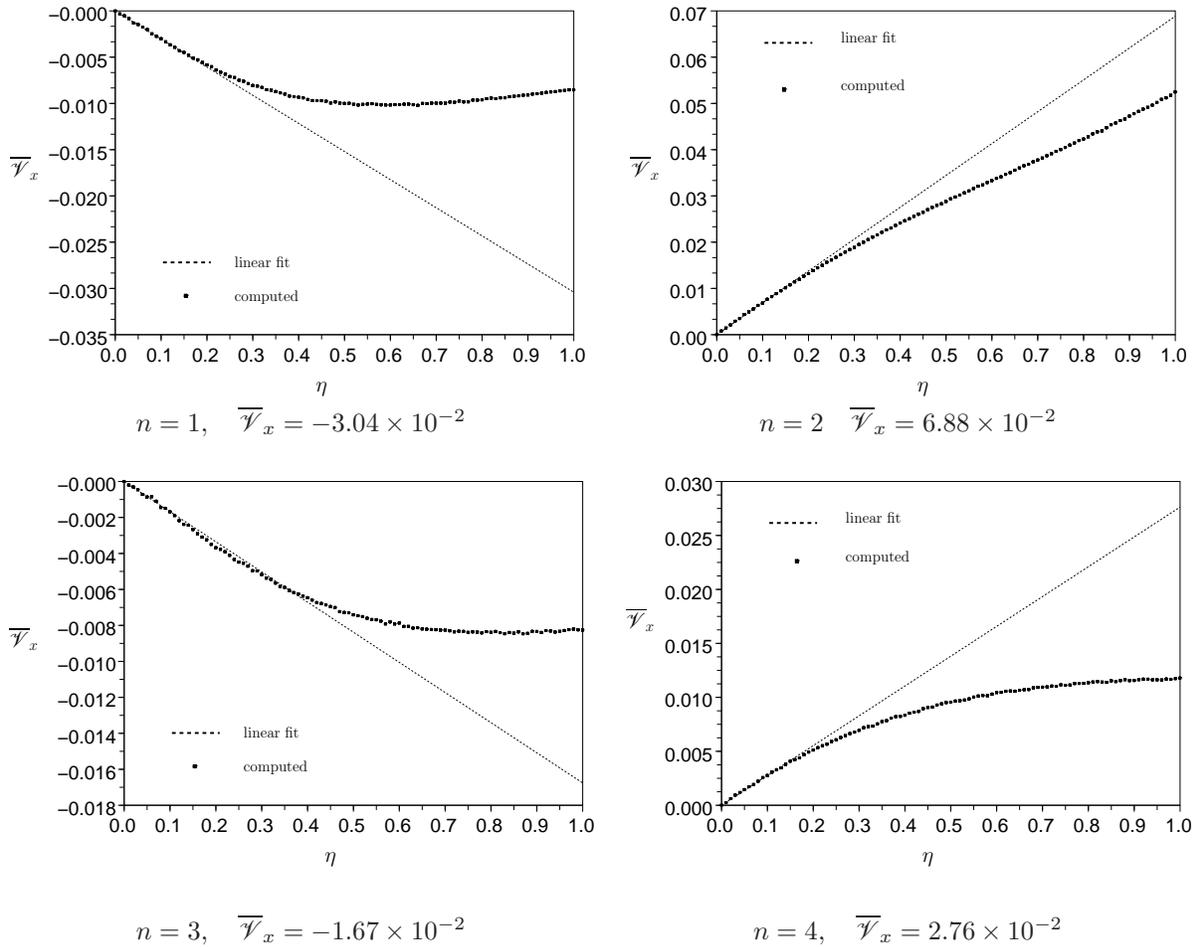

\begin{tabular}{cc}
 \scalebox{0.4}{\input{LR_1x.pstex_t}}&\scalebox{0.4}{\input{LR_2x.pstex_t}}\\
\vspace{0.5cm}

$n=1,\quad\overline{\mathscr{V}}_x = -3.04\times 10^{-2}$& $n=2\quad \overline{\mathscr{V}}_x = 6.88\times 10^{-2}$\\
\vspace{0.5cm}

 \scalebox{0.4}{\input{LR_3x.pstex_t}}&\scalebox{0.4}{\input{LR_4x.pstex_t}}\\
\vspace{0.5cm}

$n=3,\quad\overline{\mathscr{V}}_x = -1.67\times 10^{-2}$&$n=4,\quad \overline{\mathscr{V}}_x = 2.76\times 10^{-2}$ 
\end{tabular}
\caption{\label{fig:SLR} Average velocity as a function of the applied force for the various forcings.}
\end{figure}

%-------- neg mob ------------
\subsection{Negative mobility}
\label{sec:neg_mob}

The physical interpretation of~\eqref{eq:time_avg_response} is that a nontrivial velocity can be observed when the spatial modes of the position-dependent diffusion matrix $D_0(q)$ are excited by the external forcing. The aim of this section is to make this observation precise, and use it to construct situations in which a negative mobility is observed.

We can rewrite $D_0$ as
\begin{equation}
  \label{eq:expression_D0}
  D_0(q) = \int_0^{+\infty} \mathbb{E}\Big( \left(M^{-1}p_s\right) \otimes \left(M^{-1}p_0\right) \, \Big| \, q_0 = q\Big) ds = \int_{\RR^d} \Phi_0(q,p) \otimes \Big(M^{-1}p \Big) g(p) \, dp,
\end{equation}
where $g(p)$ is the density of the Gaussian distribution $(2\pi)^{-d/2} \mathrm{det}(M)^{1/2} \exp(-\beta p^T M^{-1} p/2)$ and $\Phi_0$ is the unique solution of the Poisson equation 
\begin{equation}
\label{eq:Phi_0}
-\cA_0 \Phi_0 = M^{-1} p, \qquad \int_\cE \Phi_0(q,p) \, \mu(q,p) \, dq \, dp = 0.
\end{equation}
Note that the function~$\Phi_0$ is time-independent. A priori we should consider the above equation as posed on~$\cE$ and look for $\Phi_0 \in (L^2(\cE ; \mu))^d$, but the existence of a time-independent solution (as given by Lemma~\ref{lem:pt_A0_bounded}) and the uniqueness of solutions, as given by Lemma~\ref{lem:unif_resolvent_L2}, enable us to conclude that the time-dependence can be removed.

Since the matrix $D_0$ is real, $\M$-periodic and symmetric (by the same time-reversal argument as in Remark~\ref{rem:gradient}), it can be written as
\begin{equation}
\label{eq:decomposition_D0}
D_0(q) = \sum_{K \in \mathcal{L}^*} D_{0,K} \mathrm{e}^{-\ri K \cdot q} 
= \sum_{K \in \mathcal{L}^*} a_{0,K} \cos(K \cdot q) + b_{0,K} \sin(K \cdot q), 
\end{equation}
where $\mathcal{L}^*$ is the reciprocal lattice associated with the lattice $\mathcal{L}$ whose unit cell is~$\M$, and $a_{0,L},b_{0,L}$ are real, symmetric $d \times d$ matrices. The following result (see Section~\ref{sec:proof_genuine_spatial_dep} for the proof) shows that there should be in general a non-trivial $q$-dependence in~\eqref{eq:decomposition_D0} since, in view of~\eqref{eq:expression_D0},
\[
\nabla_q D_0(q) = \int_{\RR^d} \nabla_q \Phi_0(q,p) \otimes \Big(M^{-1}p \Big) g(p) \, dp.
\]
When the average on the right-hand side is non-zero, it indeed holds $D_0(q) \neq D_{0,0}$.

\begin{prop}
\label{prop:genuine_spatial_dep}
Assume that the following non-degeneracy condition holds:
\begin{equation}
\label{eq:non_degeneracy}
\mathrm{Span}\left \{ \nabla V(q), \, q \in \mathcal{M} \right\} = \mathbb{R}^d. 
\end{equation}
Then, the smooth function~$\Phi_0 = (\Phi_{0,1},\dots,\Phi_{0,d})$ has a genuine spatial dependence, \textit{i.e.} $\nabla_q \Phi_{0,i} \neq 0$ for all $i \in \{ 1,\dots,d\}$. 
\end{prop}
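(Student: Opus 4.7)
The natural strategy is contradiction: fix an index $i$, assume that $\nabla_q \Phi_{0,i}\equiv 0$, write $\phi(p)=\Phi_{0,i}(q,p)$, and look for an obstruction in the componentwise Poisson equation $-\cA_0 \Phi_{0,i}=(M^{-1}p)_i$. Since the transport term $M^{-1}p\cdot\nabla_q$ annihilates any function of $p$ alone, expanding $\cA_0$ gives
\[
\nabla V(q)\cdot\nabla_p \phi(p)+\gamma\, M^{-1}p\cdot\nabla_p\phi(p)-\frac{\gamma}{\beta}\Delta_p\phi(p)=(M^{-1}p)_i,
\]
in which only the first term depends on $q$. Rearranging, $\nabla V(q)\cdot\nabla_p\phi(p)=R(p)$ for some function $R$ depending on $p$ alone.

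The next step is to kill $R$ by averaging over $q$. Smoothness and $\mathcal{M}$-periodicity of $V$ give $\int_\M \nabla V(q)\,dq=0$, so integrating the identity above over~$\M$ forces $R\equiv 0$. We are thus left with the pointwise relation
\[
\nabla V(q)\cdot\nabla_p\phi(p)=0,\qquad\forall(q,p)\in\M\times\RR^d.
\]

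At this stage the non-degeneracy hypothesis~\eqref{eq:non_degeneracy} enters crucially. For each fixed~$p$, the vector $\nabla_p\phi(p)$ is orthogonal to every $\nabla V(q)$; since $\{\nabla V(q)\}_{q\in\M}$ spans $\RR^d$, this forces $\nabla_p\phi(p)=0$ for every $p$. Hence $\phi$ is constant, so $\Delta_p\phi=0$ and the Poisson equation reduces to the absurd identity $0=(M^{-1}p)_i$. Since $i$ was arbitrary, this contradiction establishes $\nabla_q\Phi_{0,i}\not\equiv 0$ for every $i\in\{1,\dots,d\}$.

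The argument is essentially algebraic once $\cA_0$ is expanded, and I do not anticipate a real obstacle: the only substantive inputs are the periodicity-based vanishing of $\int_\M\nabla V$ and the spanning hypothesis~\eqref{eq:non_degeneracy}. The one point to double-check is that the vector-valued Poisson equation~\eqref{eq:Phi_0} genuinely reduces to the componentwise equation $-\cA_0\Phi_{0,i}=(M^{-1}p)_i$, which is immediate since $\cA_0$ acts componentwise, together with the fact that the uniqueness part of Lemma~\ref{lem:unif_resolvent_L2} (referenced right after~\eqref{eq:Phi_0}) legitimizes working with the time-independent solution.
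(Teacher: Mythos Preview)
Your argument is correct and matches the paper's proof essentially line for line: assume a component $\Phi_{0,i}$ is independent of~$q$, observe that in the expanded Poisson equation only the term $\nabla V(q)\cdot\nabla_p\Phi_{0,i}(p)$ carries $q$-dependence, conclude that this term vanishes identically, invoke the spanning hypothesis~\eqref{eq:non_degeneracy} to force $\nabla_p\Phi_{0,i}\equiv 0$, and reach a contradiction with the right-hand side $(M^{-1}p)_i$. The only cosmetic difference is that you justify the passage from ``$q$-independent'' to ``identically zero'' by averaging over~$\M$ and using $\int_\M\nabla V=0$, whereas the paper states this step without comment; your version is in fact slightly more explicit, and your componentwise formulation of the contradiction is the logically correct one.
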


The non-degeneracy condition~\eqref{eq:non_degeneracy} ensures that the conservative force is sufficiently rich to obtain a coupling between the potential and the driving force. It prevents the use of trivial potential such as $V(q) = v(q_1)$. 

\bigskip 

Upon decomposing $F_0 \widetilde{\mu}$ in Fourier series, the time-averaged external force $F_0$ can be written in general as 
\[
F_0(q) = \frac{1}{\widetilde{\mu}(q)|\M|} \left( \sum_{K \in \mathcal{L}^*} F_{0,K} \mathrm{e}^{-\ri K \cdot q}\right), \qquad F_{0,K} = \overline{F_{0,-K}} \in \mathbb{C}^{d \times d}.
\]
The normalization is chosen such that $F_{0,0}$ is the canonical average of $F_0$. 
Mean zero forces are characterized by the condition
\[
%\begin{equation}
%  \label{eq:vanishing_avg_force}
  F_{0,0} = \int_\M F_0(q) \widetilde{\mu}(q) \, dq = 0.
%\end{equation}
\]
The space-time averaged linear response of the velocity reads, in view of~\eqref{eq:time_avg_response},
\begin{equation}\label{e:aver-drift}
\overline{\mathscr{V}} = D_{0,0} F_{0,0} + \int_\M \Big( D_0(q)-D_{0,0} \Big)\Big( F_0(q) \widetilde{\mu}(q) - F_{0,0}\Big)dq.
\end{equation}
This decomposition highlights the two mechanisms separately contributing to the mean velocity: (i) the response to a constant forcing with average $F_{0,0}$, in which case the relevant diffusion matrix is a spatial average of the position dependent diffusion matrix $D_0(q)$; (ii) an additional contribution arising from a spatial resonance effect between two terms whose spatial averages vanish, namely the centered diffusion matrix $D_0(q) - D_{0,0}$ and the centered (with respect to the canonical measure) force $F_0(q) - F_{0,0} \widetilde{\mu}(q)^{-1}$.

Proposition~\ref{prop:genuine_spatial_dep} shows that it is possible to obtain a nontrivial velocity even when the external force vanishes on average, upon choosing an element $K_0 \in \mathcal{L}^* \backslash \{ 0 \}$ such that $D_{0,K_0} \neq 0$, and choosing for instance a forcing 
\[
%\begin{equation}
%\label{eq:specific_choice_F0}
F_0(q) = \mathrm{e}^{\beta V(q)} \cos(K_0 \cdot q) F_{0,K_0}, 
\qquad
D_{0,K_0} F_{0,K_0} \neq 0.
%\end{equation}
\]
In fact, the decomposition~\eqref{e:aver-drift} then shows how to obtain a situation with negative mobility: pick a function $F_0$ such that the second term on the righthand side of~\eqref{e:aver-drift} (denoted by $\overline{\mathscr{V}}_{\rm res}$) is nonzero, and add then a not too large constant force $F_{0,0}$ leading to a contribution $\overline{\mathscr{V}}_{\rm cst} = D_{0,0} F_{0,0}$ with 
\[
%\begin{equation}\label{e:neg-mobility-condition}
F_{0,0} \cdot \overline{\mathscr{V}} = F_{0,0} \cdot \Big( D_{0,0} F_{0,0} + \overline{\mathscr{V}}_{\rm res} \Big) < 0.
%\end{equation}
\]

\subsubsection*{Numerical illustration}

We now use the numerical algorithm presented in the previous section to study numerically a situation in which negative mobility at the linear response level appears. We consider the force
\begin{equation}
\label{eq:NM_force}
F_{0}(q) = \mathrm{e}^{\beta V(q)} \begin{pmatrix}
-1 + 3 \cos(2x) \\ 
0 \\
\end{pmatrix}.
\end{equation}
The parameters are chosen so that the linear response in the $x$ direction is positive, while the average force in the same direction is negative. The average external force actually experienced by the system is estimated over trajectories $(q^n,p^n)_{n=1,\dots,N}$ of~\eqref{eq:num_Langevin} as
\[
\frac1N \sum_{n=1}^N F_0(q^n),
\]
and we retain only the $x$ component (the $y$ component vanishes at first order in~$\eta$). The numerical results presented in Figure~\ref{fig:neg_mob} indeed confirm that the average experienced force and the associated observed velocities are of opposite signs.

\begin{figure}[h]
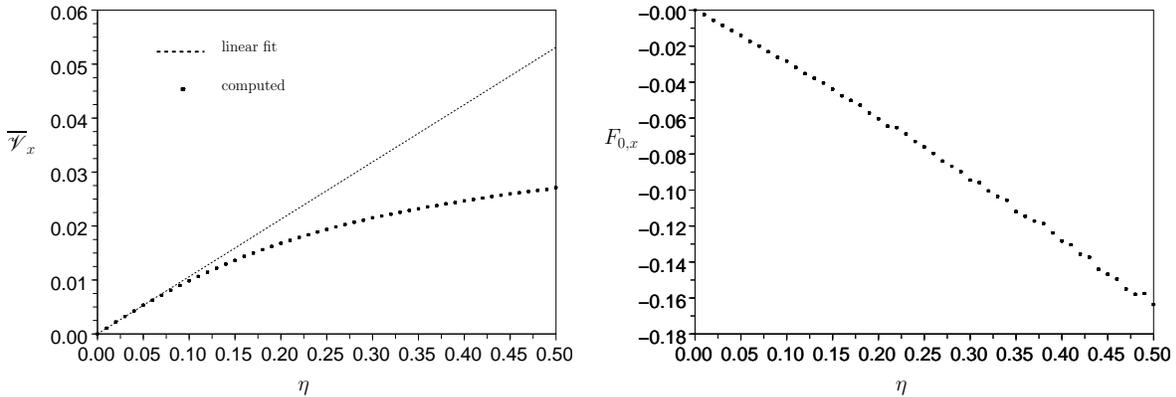

\begin{tabular}{cc}
 \scalebox{0.4}{\input{LR_NM.pstex_t}}&\scalebox{0.4}{\input{NM_force.pstex_t}}\\
\end{tabular}
\caption{\label{fig:neg_mob} Negative mobility for the force~\eqref{eq:NM_force}. Left: Average observed velocity in the $x$ direction. Right: Average experienced force in the $x$ direction.}
\end{figure}

%------- stoch res -------
\subsection{Resonance of the frequency-dependent mobility}
\label{sec:stoch_res}

We consider now the dependence of the effective velocity $\mathscr{V}(t)$ on the period~$T$ of the external forcing. We restrict ourselves to monochromatic forcings at frequency~$\omega = 2\pi/T$:
\[
%\begin{equation}
%\label{eq:decomposition_force}
F(t,q) = 2 \mathrm{Re}\left(F_1(q) \, \mathrm{e}^{\ri \omega t} \right).
%\end{equation}
\]
In the sequel, we fix the force $F_1(q) = \overline{F_{-1}}(q)$ and only vary the frequency~$\omega$. From~\eqref{eq:full_LR}, it is clear that the linear response $\mathscr{V}(t)$ involves a linear response at a single frequency~$\omega$:
\[
\mathscr{V}(t) = 2\beta \, \mathrm{Re} \left( e_1(t) \int_\M D_1(q) F_{1}(q)\, \widetilde{\mu}(q) \, dq \right) = 2\beta \, \mathrm{Re} \left( \widehat{\mathscr{V}}(\omega) \, \rme^{\ri \omega t} \right),
\]
with (recall~\eqref{eq:reformulation_Dn})
\[
\widehat{\mathscr{V}}(\omega) = -2\beta \int_\M \int_{\mathbb{R}^d} \Big( \left[(\ri \omega + \cA_0)^{-1} \left(M^{-1}p\right)\right] \otimes \left(M^{-1}p\right) \Big) F_{1}\,\mu.
\]
The resolvent $(\ri \omega + \cA_0)^{-1}$ in the latter expression shows that the magnitude of the response is the result of an interplay between the forcing period (related to the inverse of~$\omega$) and the typical relaxation time of the dynamics (related to the spectral gap of~$\mathcal{A}_0$).

Frequency resonance corresponds to a nonmonotonic behavior of the magnitude $\left|\widehat{\mathscr{V}}(\omega)\right|$ of the linear response, which has some (local) maximal value for some frequency $0 < \omega_{\rm res} < +\infty$. The existence of at least one maximizer in~$[0,+\infty)$ is related to the fact that $\widehat{\mathscr{V}}$ has a definite value at $\omega = 0$, and vanishes as $\omega \to +\infty$, as stated by the following result (proved in Section~\ref{sec:SR_proof}). 

\begin{prop}
\label{prop:SR}
For any $n \geq 2$, there exists a constant $C_n > 0$ and vectors $\nu_1,\dots,\nu_{n-1} \in \mathbb{C}^d$, such that, for all $\omega \geq 1$,
\[
\left| \widehat{\mathscr{V}}(\omega) - \sum_{m=1}^{n-1} \frac{\nu_m}{\omega^m} \right| \leq \frac{C_n}{\omega^n},
\]
with 
\[
\nu_1 = 2 \ri \beta M^{-1} \int_\M F_{1}(q) \, \widetilde{\mu}(q) \, dq.
\]
In particular, the vectorial amplitude $\widehat{\mathscr{V}}(\omega)$ vanishes as $\omega \to +\infty$.
\end{prop}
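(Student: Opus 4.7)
The plan is to expand the resolvent $(\ri\omega+\cA_0)^{-1}$ as a Neumann series in inverse powers of $\ri\omega$, substitute this expansion into the integral representation of $\widehat{\mathscr{V}}(\omega)$, and control the tail using a uniform-in-$\omega$ estimate on the resolvent. The starting point is the elementary identity
\[
(\ri\omega+\cA_0)^{-1} = \frac{1}{\ri\omega}\, I - \frac{1}{\ri\omega}\,(\ri\omega+\cA_0)^{-1}\,\cA_0,
\]
obtained from $I=(\ri\omega+\cA_0)^{-1}(\ri\omega+\cA_0)$. Iterating $n$ times yields, for $u$ in the domain of $\cA_0^n$,
\[
(\ri\omega+\cA_0)^{-1} u = \sum_{k=0}^{n-1}\frac{(-\cA_0)^k u}{(\ri\omega)^{k+1}} + \frac{1}{(\ri\omega)^n}(\ri\omega+\cA_0)^{-1}\!\left[(-\cA_0)^n u\right].
\]

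Taking $u = M^{-1}p$, a direct induction using the explicit form of $\cA_0$ shows that each $\cA_0^k(M^{-1}p)$ is a smooth vector-valued function, polynomial in $p$ with $\M$-periodic coefficients in $q$; in particular it lies in $L^2(\mu)$ together with all its derivatives, and it has zero $\mu$-mean since $\mu$ is invariant under $\rme^{t\cA_0}$. Inserting the expansion into the formula for $\widehat{\mathscr{V}}(\omega)$ produces a finite sum $\sum_{m=1}^{n-1}\nu_m/\omega^m$ plus a remainder $R_n(\omega)$, with
\[
\nu_m = 2\beta\,\ri^{\,m}\int_\cE \left[\cA_0^{m-1}(M^{-1}p)\right]\otimes(M^{-1}p)\,F_1(q)\,\mu(q,p)\,dq\,dp
\]
up to signs. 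The value of $\nu_1$ reduces to the Gaussian second moment $\int_{\R^d}(M^{-1}p)\otimes(M^{-1}p)\,\mu_p(p)\,dp = \beta^{-1}M^{-1}$, which yields the stated expression after simplification.

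It remains to estimate the remainder
\[
R_n(\omega) = -\frac{2\beta}{(\ri\omega)^n}\int_\cE \left[(\ri\omega+\cA_0)^{-1}\!\left((-\cA_0)^n(M^{-1}p)\right)\right]\otimes (M^{-1}p)\,F_1(q)\,\mu(q,p)\,dq\,dp.
\]
By Cauchy--Schwarz on $L^2(\mu)$,
\[
|R_n(\omega)| \leq \frac{2\beta}{\omega^n}\,\|(\ri\omega+\cA_0)^{-1}[(-\cA_0)^n(M^{-1}p)]\|_{L^2(\mu)}\,\|(M^{-1}p)\,F_1\|_{L^2(\mu)}.
\]
The main obstacle is a uniform-in-$\omega$ bound on $(\ri\omega+\cA_0)^{-1}$ acting on the mean-zero subspace $L^2_0(\mu)$. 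This is precisely the content of Lemma~\ref{lem:unif_resolvent_L2} mentioned earlier and follows from the hypocoercive exponential decay $\|\rme^{t\cA_0}f\|_{L^2(\mu)} \leq C\rme^{-\alpha t}\|f\|_{L^2(\mu)}$ on mean-zero functions: combining it with the Laplace representation $(\ri\omega+\cA_0)^{-1}f = -\int_0^{+\infty}\rme^{\ri\omega t}\,\rme^{t\cA_0}f\,dt$ gives $\|(\ri\omega+\cA_0)^{-1}\|_{L^2_0(\mu)\to L^2_0(\mu)}\leq C/\alpha$ independently of $\omega\in\R$. Since $(-\cA_0)^n(M^{-1}p)$ is a mean-zero element of $L^2(\mu)$, we conclude $|R_n(\omega)|\leq C_n/\omega^n$, which is the announced bound. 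The statement that $\widehat{\mathscr{V}}(\omega)\to 0$ as $\omega\to+\infty$ follows immediately from the case $n=2$.
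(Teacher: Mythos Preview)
Your proof is correct and follows essentially the same approach as the paper: both argue by iterating the resolvent identity to obtain the Neumann-type expansion
\[
(\ri\omega+\cA_0)^{-1}(M^{-1}p)=\sum_{k=0}^{n-1}\frac{(-\cA_0)^k(M^{-1}p)}{(\ri\omega)^{k+1}}+\frac{1}{(\ri\omega)^n}(\ri\omega+\cA_0)^{-1}\bigl[(-\cA_0)^n(M^{-1}p)\bigr],
\]
and then control the remainder via the uniform resolvent bound of Lemma~\ref{lem:unif_resolvent_L2}. The paper packages the expansion as a separate abstract lemma (Lemma~\ref{lem:SR}) in $L^2(\mu)$ for $(\ri\omega-\cA_0)^{-1}\psi$ and then specializes to $\psi=M^{-1}p$ after a sign change in $\omega$, whereas you work directly with $(\ri\omega+\cA_0)^{-1}(M^{-1}p)$; the content is the same. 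One small remark: your parenthetical justification of the uniform resolvent bound via an $L^2(\mu)$ exponential decay of $\rme^{t\cA_0}$ is stronger than what hypocoercivity usually delivers directly (the decay in the paper is stated in $\widetilde H^1(\mu)$, see Lemma~\ref{lem:indep_omega}); but since you also invoke Lemma~\ref{lem:unif_resolvent_L2}, which is exactly the bound you need and is obtained in the paper from the spectral results of~\cite{EH03,HN04}, the argument stands.
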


The interpretation of this result is that, when the forcing period is very small (\textit{i.e.} $\omega$ is very large), the system does not have time to follow the excitation provided by the external forcing, so that the observed response is very small. The above asymptotic result does not however give any information on the existence of a local maximum of $\left | \widehat{\mathscr{V}}(\omega) \right|$ for $\omega > 0$. We therefore have to rely on numerical simulations to study this effect.

\subsubsection*{Numerical illustration}

We use the notation introduced in the previous section. To obtain the amplitude of the linear response, we consider $R=100$ independent simulations with equally spaced intermediate values $\eta_{\rm max}/R, 2\eta_{\rm max}/R, \dots, \eta_{\rm max}$. For each simulation, we compute approximations of the time dependent average velocity $\overline{v}_\eta(t)$ in a given direction ($x$ or $y$) as given by~\eqref{eq:v_eta_num}, and perform a fast Fourier transform to extract the amplitude of the mode at frequency~$\omega$. The linear response of this quantity (computed using a least-square fit) gives the desired approximation of $\widehat{\mathscr{V}}(\omega)_x$ or $\widehat{\mathscr{V}}(\omega)_y$.

Figure~\ref{fig:resonance} presents results on the resonance of the mobility obtained in the $x$~direction for forcings of the form
\begin{equation}
\label{eq:force_SR}
F(t,q) = \mathrm{e}^{\beta V(q)} \begin{pmatrix} \cos(2x) \\ 0 \end{pmatrix} \cos(\omega t),
\end{equation}
with a friction $\gamma = 0.1$ and a maximal forcing strength $\eta_{\rm max} = 0.5$, the other numerical parameters being the same as in Section~\ref{sec:general_LR}. The time-step was refined to $\Delta t = 0.001$ instead of $\Delta t = 0.01$ for periods $T \leq 1$, and extended to $\Delta t = 0.025$ for $T \geq 20$. The computations show the existence of a resonance frequency around $\omega/(2\pi) \simeq 0.45$. By Proposition~\ref{prop:SR}, the decrease of the amplitude is at least of order $\omega^{-2}$ as $\omega \to +\infty$ since $\nu_1 = 0$. Such a fast decay is difficult to observe numerically.

\begin{figure}[h]
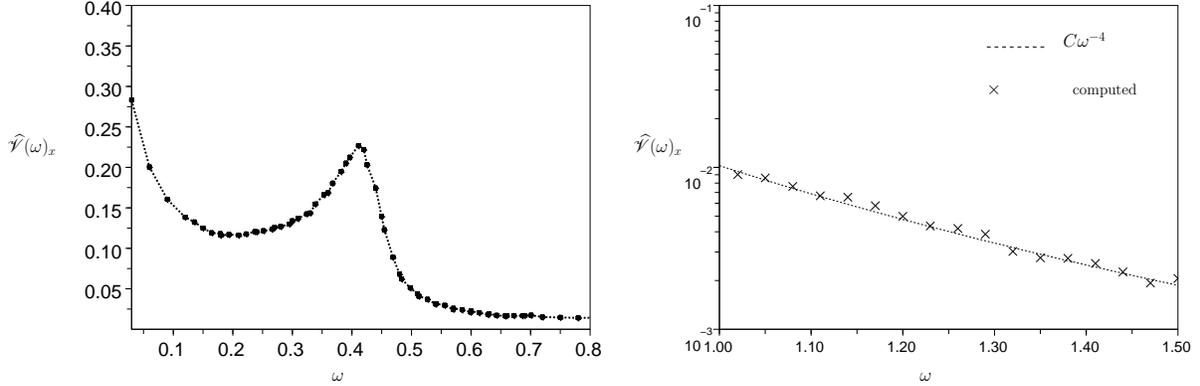

\begin{center}
\begin{tabular}{cc}
\scalebox{0.4}{\input{fr_cas1_pike.pstex_t}}&\scalebox{0.4}{\input{fr_cas1_asymptotic.pstex_t}}
\end{tabular}
\end{center}
\caption{\label{fig:resonance} Plot of $\widehat{\mathscr{V}}(\omega)_x$ as a function of the frequency~$\omega/(2\pi)$, for the force~\eqref{eq:force_SR}. The fit in log-log scales suggests that $\widehat{\mathscr{V}}(\omega)_x \sim \omega^{-4}$.}
\end{figure}

To observe more easily the predicted decay of the amplitude, we consider the force
\begin{equation}
\label{eq:force_dir}
F(t,q) = \begin{pmatrix} 1 \\ 0 \end{pmatrix} \cos(\omega t)
\end{equation}
for which $\nu_1 \neq 0$. The numerical results presented in Figure~\ref{fig:resonance_dir} show two interesting features: the resonance peak corresponds to a global maximum of the amplitude, and a decay of order $\omega^{-1}$ can be observed at large frequencies, as predicted by Proposition~\ref{prop:SR}. 

\begin{figure}[h]
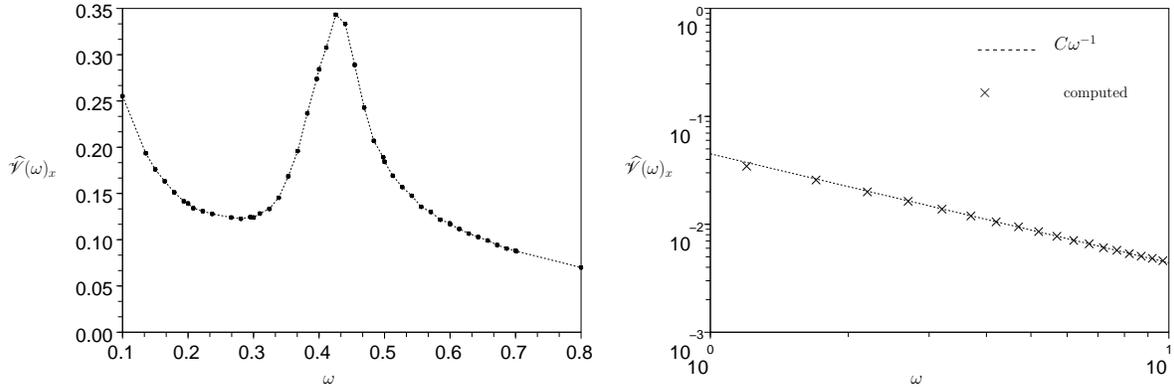

\begin{center}
\begin{tabular}{cc}
\scalebox{0.4}{\input{fr_cas2_pike.pstex_t}}&\scalebox{0.4}{\input{fr_cas2_asymptotic.pstex_t}}
\end{tabular}
\end{center}
\caption{\label{fig:resonance_dir} Plot of $\widehat{\mathscr{V}}(\omega)_x$ as a function of the frequency~$\omega/(2\pi)$, for the force~\eqref{eq:force_dir}. The fit in log-log scales suggests that $\widehat{\mathscr{V}}(\omega)_x \sim \omega^{-1}$.}
\end{figure}

%-------- diffusive -------
\section{Longtime diffusive behavior}
\label{sec:diffusive}

In the section we study the longtime diffusive behavior of the nonequilibrium dynamics~\eqref{eq:Langevin}; in particular we show that the diffusively rescaled particle position converges weakly to a Brownian motion with an appropriate diffusion matrix. This result is based on the well developed techniques for proving functional central limit theorems for additive functionals of Markov processes~\cite{KS86, MasFerGoldWi89, Cattiaux-al2012, KomorowskiLandimOlla2012}, in particular on the study of an appropriate Poisson equation and on the use of the martingale central limit theorem.

\subsection{Bounds on the solution of the Poisson equation}

In order to properly define the diffusion matrix, the fundamental ingredient is estimates on the solution of appropriate Poisson equations. One possible result is the following, which gives a polynomial control on derivatives of arbitrary order.

\begin{prop}
\label{prop:Poisson}
Consider a smooth function~$f$ with derivatives growing at most polynomially in~$p$. Then, there exists a unique solution $\Phi_\eta$ to the Poisson equation (recall that the phase-space $\cE$ is introduced in~\eqref{e:phase-space})
\begin{equation}
\label{eq:Poisson_equation}
(\partial_t + \cA_0 + \eta \cA_1) \Phi_{\eta}(t,q,p) = f(t,q,p) - \int_\cE f \psi_\eta, 
\qquad
\int_\cE \Phi_\eta(t,q,p) \, \psi_\eta(t,q,p) \, dt \, dq \, dp = 0.
\end{equation}
Moreover, for any $k \geq 1$ and $\eta_* > 0$, there exists a real constant $C > 0$ and integers $n,m,N \geq 1$ such that, for all $\eta \in [-\eta_*,\eta_*]$,
\begin{equation}
\label{eq:pointwise_Poisson}
|\partial^l \Phi_\eta(t,q,p)| \leq C \Li_n(q,p) \sup_{\substack{ r\in \mathbb{N}^{2d} \\ |r| \leq N}} \left\| \partial^r f \right\|_{L^\infty(L^\infty_{\Li_m})},
\end{equation}
where $\partial^l$ with $l\in \mathbb{N}^{2d}$ is a derivative of order at most~$k$, \textit{i.e.} $\partial^l = \partial_{q_1}^{l_1} \dots \partial_{q_d}^{l_d} \partial_{p_1}^{l_{d+1}} \dots \partial_{p_d}^{l_{2d}}$ with $|l| = l_1 + l_2 + \dots + l_{2d} \leq k$.
\end{prop}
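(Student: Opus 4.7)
My plan is to construct $\Phi_\eta$ via a probabilistic representation on the extended phase space $\cE$, and then prove the pointwise derivative estimates by induction on $|l|$ using commutator identities. For existence, view $(s \bmod T, q_s^\eta, p_s^\eta)$ as a Markov process on $\cE$ with generator $L^\eta := \partial_t + \cA_0 + \eta \cA_1$ and invariant measure $\psi_\eta$. A Lyapunov-weighted strengthening of Proposition~\ref{prop:inv_meas_cv}, obtained by combining the standard drift inequality $\EE_{(q,p)}[\Li_n(q_s^\eta, p_s^\eta)] \leq C_n \Li_n(q, p)$ with the exponential decay of Proposition~\ref{prop:inv_meas_cv}, yields
$$\left|\EE_{(q,p)}\bigl[g(t+s, q_s^\eta, p_s^\eta)\bigr] - \int_\cE g\, \psi_\eta\right| \leq C \rme^{-\lambda s}\, \|g\|_{L^\infty(L^\infty_{\Lin})}\, \Li_n(q,p),$$
uniformly in $\eta\in[-\eta_*,\eta_*]$. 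The natural candidate
$$\Phi_\eta(t, q, p) = -\int_0^\infty \left(\EE_{(q,p)}\bigl[f(t+s, q_s^\eta, p_s^\eta)\bigr] - \int_\cE f\, \psi_\eta\right) ds$$
is then absolutely convergent with $|\Phi_\eta| \leq C \Li_n \|f\|_{L^\infty(L^\infty_{\Lin})}$, and It\^o's formula together with the invariance of $\psi_\eta$ verifies that it solves~\eqref{eq:Poisson_equation}. Uniqueness follows by a martingale argument: any centered $u$ with $L^\eta u = 0$ satisfies $\EE[u(X_s^\eta)] = u(X_0^\eta)$, and the ergodic limit as $s\to\infty$ forces $u \equiv 0$.

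For the pointwise estimates, I would differentiate the Poisson equation. The commutators
$$[\partial_{q_i}, L^\eta] = -(\partial_{q_i}\nabla V)\cdot\nabla_p + \eta(\partial_{q_i} F)\cdot\nabla_p, \qquad [\partial_{p_i}, L^\eta] = (M^{-1})_{ij}\partial_{q_j} - \gamma(M^{-1})_{ij}\partial_{p_j}$$
are first-order operators with smooth bounded coefficients; iterated commutators $[\partial^l, L^\eta]$ are differential operators of order $|l|$ with smooth coefficients of at most polynomial growth in $p$ (coming from the transport term $M^{-1}p\cdot\nabla_q$). Applying $\partial^l$ to~\eqref{eq:Poisson_equation} gives
$$L^\eta(\partial^l\Phi_\eta) = \partial^l f - [\partial^l, L^\eta]\Phi_\eta,$$
a Poisson-type equation whose source involves $\partial^l f$ and derivatives of $\Phi_\eta$ of order at most $|l|$. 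The essential subtlety is the hypoelliptic coupling: because $[\partial_{p_i}, L^\eta]$ produces $\partial_{q_j}$ terms, a single commutation does not strictly reduce the total differential order.

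The argument then closes by induction on $|l|$ organized in a hypoelliptic order assigning weight $1$ to each $\partial_{q_i}$ and weight $2$ to each $\partial_{p_i}$, so that $[\partial_{p_i}, L^\eta]$ genuinely decreases the weight. Applying the probabilistic representation above to $\partial^l\Phi_\eta$ with source $g_l := \partial^l f - [\partial^l, L^\eta]\Phi_\eta$, and bounding $g_l$ using the induction hypothesis, one obtains~\eqref{eq:pointwise_Poisson} with integers $n, m, N$ growing with $k$: the Lyapunov exponent $n$ increases because each commutation introduces an additional polynomial factor in $p$, and $N$ must exceed $k$ due to the hypoelliptic loss of derivatives. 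The main obstacle is precisely this bookkeeping, organizing the induction so that the weighted order strictly decreases at each step while keeping the Lyapunov and smoothness indices finite; uniformity in $\eta\in[-\eta_*,\eta_*]$ is preserved throughout because all additional terms involve the smooth periodic function $F$, and the decay constants supplied by Proposition~\ref{prop:inv_meas_cv} are themselves uniform in $\eta$.
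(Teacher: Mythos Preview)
Your construction of $\Phi_\eta$ via the time-integrated semigroup, together with the $L^\infty_{\Lin}$ bound and uniqueness argument, is correct and matches the paper exactly (see Section~\ref{sec:proof_Poisson}, equations~\eqref{eq:decay_us}--\eqref{eq:Linf_estimate_Phi}).

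The gap is in the derivative estimates. Your weighted induction does not close. With weights $w(\partial_{q_i})=1$, $w(\partial_{p_i})=2$, the very first step already fails: controlling $\partial_{q_i}\Phi_\eta$ (weight~$1$) via your Poisson-equation representation requires a bound on the source $[\partial_{q_i},L^\eta]\Phi_\eta = (-\partial_{q_i}\nabla V+\eta\partial_{q_i}F)\cdot\nabla_p\Phi_\eta$, which involves $\nabla_p\Phi_\eta$ of weight~$2$, i.e.\ \emph{higher} weight. Swapping the weights does not help either, since $[\partial_{p_i},L^\eta]$ then brings back $\partial_{q_j}$. Moreover, the term $-\gamma(M^{-1})_{ij}\partial_{p_j}$ in $[\partial_{p_i},L^\eta]$ has the \emph{same} weight as $\partial_{p_i}$ under any assignment, so the commutator never strictly decreases. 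No scalar grading of $(\partial_q,\partial_p)$ can decouple this cycle; it is precisely the hypoelliptic coupling, and an $L^\infty$ bootstrap that treats each derivative separately cannot exploit the dissipation needed to break it.

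The paper avoids this by working in weighted $L^2$ rather than $L^\infty$. Lemma~\ref{lem:poly_decay} is an energy estimate for $\int_\cE |Lu(s)|^2\Pi_l$ that produces, for free, an additional good term $\frac{2\gamma}{\beta}\int|\nabla_p Lu|^2\Pi_l$ coming from the Ornstein--Uhlenbeck part of $T_\eta$. This extra dissipation in $\nabla_p$ is what makes the argument close: one first applies Lemma~\ref{lem:poly_decay} with $L=\mathrm{Id}$ to control $\int_0^s \rme^{\alpha r}\int|\nabla_p u|^2\Pi_l$, then with the twisted derivatives $L_i=\partial_{q_i}-R\partial_{p_i}$ (the hypocoercive trick) whose commutator with $T_\eta$ yields a \emph{negative} term $-\frac{R}{2\sqrt m}|(\nabla_q-R\nabla_p)u|^2$ that controls $\nabla_q u$ once $\nabla_p u$ is in hand. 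The induction then proceeds on total order in weighted $L^2$, and only at the very end are Sobolev embeddings used to recover the pointwise estimate~\eqref{eq:pointwise_Poisson}. The dissipative $|\nabla_p|^2$ gain from integration by parts, invisible in the probabilistic/$L^\infty$ picture, is the missing ingredient in your scheme.
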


This result is proved in Section~\ref{sec:proof_Poisson}, as an extension of a technique first employed by Talay in~\cite{Talay02}, which was recently carefully rewritten and extended in~\cite[Appendix~A]{Kopec13}. We emphasize the fact that the equation~\eqref{eq:Poisson_equation} is equipped with periodic boundary conditions in both $t$ and $q$. The boundary condition for $p$ is that $\Phi_{\eta} \in L^2(\psi_{\eta})$. We also remark that we do not need precise information on the invariant measure $\psi_\eta$. In particular, it need not be a perturbation of the invariant measure of the equilibrium dynamics~$\mu$.

\subsection{Definition of the effective diffusion matrix}

The behavior of the process depends in the long time limit on the space-time scaling. In the hyperbolic scaling where time and space are renormalized by the same factor, the behavior is determined by the average velocity
\[
%\begin{equation}
%  \label{eq:average_velocity_eta}
\mathcal{V}_\eta = \frac1T \int_0^T \overline{v}_\eta(t) \, dt = \int_{\mathcal{E}} M^{-1} p \, \psi_{\eta}(t,q,p) \, dt\, dq \, dp.
%\end{equation}
\]
In general, as discussed in Section~\ref{sec:LR}, this average velocity is not zero, and is of order~$\eta \overline{\mathscr{V}}$ when~$\eta$ is small (see~\eqref{eq:expansion_V_eta} below). The behavior of the process over longer times, in a diffusive space-time scaling, describes the deviations around the average velocity. We introduce for this study the process
\[
\mathscr{Q}_t^\eta = q_0^\eta + \int_0^t M^{-1} p^\eta_s \, ds. 
\]
The only difference between $q_t^\eta$ and $\mathscr{Q}_t^\eta$ is that $\mathscr{Q}_t$ is not reprojected in~$\mathcal{M}$ by the periodization procedure, and hence can diverge as time passes. In the long time limit, $\mathscr{Q}_t^\eta$ drifts linearly as $t \mathcal{V}_\eta$. The diffusive behavior is captured by the centered process
\[
Q_t^\eta = \mathscr{Q}_t^\eta - t\mathcal{V}_\eta
\] 
considered in a diffusive space-time scaling: for any $\eps > 0$, 
\[
Q_t^{\eta,\eps} = \eps Q^\eta_{t/\eps^2} = \eps \mathscr{Q}_{t/\eps^2}^\eta- \frac{t}{\eps} \mathcal{V}_\eta.
\]
We assume that the process starts at stationary initial conditions, namely $(q^\eta_0,p^\eta_0) \sim \psi_\eta(0,q,p)\,dq\,dp$. The stationarity assumption can be removed at the expense of additional technical difficulties. 
%Notice also that we consider the process $(q_t, \, p_t) \in \R^d \times \R^d$, not the one restricted on $\mathcal{M} \times \R^d$ that we considered in Section~\ref{sec:LR}. 
%It is important here that periodic boundary conditions are not applied when considering this equality, in order to let positions drift to infinity. In fact, the positions should be understood here as
%\[
%q_t^\eta = q_0^\eta + \int_0^t M^{-1} p_s^\eta \, ds,
%\]
%in which case the ambiguity related to the periodization is removed, and statements about longtime diffusive behaviors are in fact statements about central limit theorems for additive functionals (involving here velocities). 
Introduce finally the following Poisson equation (well defined by Proposition~\ref{prop:Poisson})
\begin{equation}
  \label{eq:Poisson_eq_velocity}
  (\partial_t + \cA_0 + \eta \cA_1) \Phi_\eta(t,q,p) = M^{-1}p -  \mathcal{V}_\eta, 
  \qquad
  \qquad \int_\cE \Phi_\eta(t,q,p) \, \psi_\eta(t,q,p) \, dt \, dq \, dp = 0.
\end{equation}
We can then state the following convergence result.

\begin{theorem}
\label{thm:diffusion}
As $\eps \to 0$, the rescaled process $Q_t^{\eta,\eps}$ converges weakly on any finite time interval to the effective $d$-dimensional Brownian motion 
\[
d\overline{Q}_t = \sqrt{2} \, \mathscr{D}_\eta^{1/2} \, dB_t,
\]
with symmetric, positive definite covariance matrix $\mathscr{D}_\eta$ defined by its action on test vectors:
\[
\forall \xi \in \RR^d, \qquad \xi^T \mathscr{D}_\eta \xi = \frac{\gamma}{\beta} \int_{\cE} \left| \nabla_p \left( \xi^T \Phi_\eta \right)\right|^2  \psi_\eta~;
\]
and initial conditions $\overline{Q}_0 \sim \widetilde{\psi}_\eta(q)\,dq$, where
\[
\widetilde{\psi}_\eta(q) = \int_{\RR^d} \psi_\eta(0,q,p) \, dp.
\]
\end{theorem}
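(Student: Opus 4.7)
The plan is to use the classical martingale approach for central limit theorems on ergodic Markov processes: represent $Q^\eta_t$ as a continuous martingale plus a vanishing boundary term, and then apply the martingale central limit theorem. The key tool is the vector-valued solution $\Phi_\eta$ of the Poisson equation~\eqref{eq:Poisson_eq_velocity}, provided by Proposition~\ref{prop:Poisson} together with polynomial-in-$p$ bounds on all its derivatives. Applying It\^o's formula componentwise to $\Phi_\eta(t,q^\eta_t,p^\eta_t)$ and using~\eqref{eq:Poisson_eq_velocity} yields
\[
\Phi_\eta(t,q^\eta_t,p^\eta_t) - \Phi_\eta(0,q^\eta_0,p^\eta_0) = \int_0^t \left(M^{-1}p^\eta_s - \mathcal{V}_\eta\right) ds + \sqrt{\frac{2\gamma}{\beta}} \int_0^t \nabla_p \Phi_\eta(s,q^\eta_s,p^\eta_s) \, dW_s,
\]
so that, after rearranging,
\[
Q^\eta_t = q^\eta_0 + R^\eta_t + M^\eta_t,\qquad R^\eta_t = \Phi_\eta(t,q^\eta_t,p^\eta_t)-\Phi_\eta(0,q^\eta_0,p^\eta_0),\qquad M^\eta_t = -\sqrt{\tfrac{2\gamma}{\beta}}\int_0^t \nabla_p \Phi_\eta \, dW_s.
\]

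Next, I would show that under the diffusive rescaling the boundary contribution $\eps q^\eta_0 + \eps R^\eta_{t/\eps^2}$ vanishes uniformly on compact time intervals. The first term is trivially $O(\eps)$ since $q^\eta_0 \in \M$ is bounded; for the second, Proposition~\ref{prop:Poisson} with $k=0$ gives $|\Phi_\eta|\leq C\Li_n$ for some integer~$n$, and the uniform-in-time moment bound~\eqref{eq:finite_moments} of Proposition~\ref{prop:inv_meas_cv} combined with stationarity of the initial distribution then yields $\EE\left[|\eps R^\eta_{t/\eps^2}|^2\right] \leq C' \eps^2$. Applying Proposition~\ref{prop:Poisson} with $k=1$ similarly places $|\nabla_p \Phi_\eta|^2$ in $L^\infty(L^\infty_{\Li_{2n}})$, so that for any $\xi\in\RR^d$ the rescaled bracket
\[
\eps^2 \langle \xi^T M^\eta\rangle_{t/\eps^2} = \frac{2\gamma t}{\beta}\cdot\frac{1}{t/\eps^2}\int_0^{t/\eps^2}\bigl|\nabla_p(\xi^T \Phi_\eta)(s,q^\eta_s,p^\eta_s)\bigr|^2\,ds
\]
converges almost surely to $2t\,\xi^T\mathscr{D}_\eta\xi$ by Proposition~\ref{prop:LLN}; polarization then delivers the corresponding statement for the full matrix of joint brackets.

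The martingale central limit theorem for continuous martingales (see e.g.~\cite{KomorowskiLandimOlla2012}) then yields the weak convergence of $\eps M^\eta_{t/\eps^2}$ on compact time intervals to $\sqrt{2}\,\mathscr{D}_\eta^{1/2}B_t$, and combining with the vanishing boundary contribution concludes the convergence of $Q^{\eta,\eps}_t$. To verify that $\mathscr{D}_\eta$ is positive definite, suppose $\xi^T \mathscr{D}_\eta \xi = 0$ for some $\xi\neq 0$; then $\nabla_p(\xi^T\Phi_\eta)\equiv 0$ in $L^2(\psi_\eta)$, so $\xi^T\Phi_\eta$ reduces to a function $g(t,q)$. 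Substituting into~\eqref{eq:Poisson_eq_velocity}, the identity $\partial_t g + M^{-1}p\cdot\nabla_q g = \xi^T M^{-1} p - \xi^T\mathcal{V}_\eta$ must hold, so matching the $p$-independent parts together with periodicity in~$t$ forces $\xi^T\mathcal{V}_\eta=0$, while matching the $p$-linear parts gives $\nabla_q g = \xi$, and the periodicity of $g$ on~$\M$ then imposes $\xi=0$, a contradiction. The main technical obstacle will be the moment and integrability bookkeeping: upgrading the almost-sure ergodic convergence of the bracket into the form required by the martingale CLT, and checking tightness of $(Q^{\eta,\eps})_{\eps>0}$, both rely crucially on the polynomial bounds from Proposition~\ref{prop:Poisson} combined with the uniform-in-time moment control of Proposition~\ref{prop:inv_meas_cv}.
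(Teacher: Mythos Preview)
Your proposal is correct and follows essentially the same strategy as the paper: It\^o decomposition via the Poisson solution~$\Phi_\eta$, vanishing boundary terms from the polynomial estimates of Proposition~\ref{prop:Poisson} and the moment bounds~\eqref{eq:finite_moments}, a martingale central limit theorem driven by the ergodic convergence of the bracket (Proposition~\ref{prop:LLN}), and the same contradiction argument for positive definiteness.

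The one methodological difference is that you invoke the continuous-time martingale CLT directly, whereas the paper splits the proof into tightness plus finite-dimensional distributions (via~\cite[Theorem~7.1]{Billingsley99}) and, for the latter, discretizes the martingale into increments over successive periods $[kT,(k+1)T]$ to obtain \emph{stationary} increments before applying a discrete martingale CLT (\cite[Theorem~35.12]{Billingsley95}). Your route is more streamlined provided the continuous version you cite delivers functional (path-space) convergence from bracket convergence alone; the paper's discretization buys stationarity of the increments at the cost of an extra remainder term $\eps(\mathscr{M}^{\eta,\xi}_{t/\eps^2}-\mathscr{M}^{\eta,\xi}_{n_\eps(t)T})$ to control. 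Either way the substance is the same, and your identification of the moment/integrability bookkeeping (upgrading the a.s.\ bracket convergence and checking tightness) as the main technical work is exactly right.
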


The proof of Theorem~\ref{thm:diffusion}, based on the decomposition technique presented in~\cite{KS86} (see also~\cite{KomorowskiLandimOlla2012} for an up-to-date account and further references) is quite standard. It is nonetheless provided in Section~\ref{sec:proof_diffusion} for completeness. Note that a little more work would allow to obtain convergence rates, by slightly extending the approach from~\cite{HP04}. Related homogenization results for the overdamped dynamics with space-time periodic coefficients can be found in~\cite{garnier}.

\begin{remark}
Formal asymptotic expansions for the corresponding Fokker-Planck equation would lead to generally nonsymmetric homogenized diffusion matrix, whose symmetric part is the covariance matrix~$\mathscr{D}_{\eta}$. Indeed, when the microscopic dynamics is reversible, then the homogenized diffusion matrix is symmetric~\cite{Pavliotis2010}. The homogenized diffusion matrix can be symmetric also for nonreversible dynamics, when additional symmetries are present~\cite{thesis}. The antisymmetric part can affect the homogenized dynamics when the homogenized diffusion matrix is space dependent. This would be the case if we considered locally periodic coefficients, as in~\cite{garnier}.
\end{remark}

\begin{remark}[Relationship to the position-dependent diffusion matrix]
 A simple computation (based on~\eqref{eq:equation_to_expand} below) shows that, when $\eta = 0$, 
 the effective diffusion matrix $\mathscr{D}_0$ is related to the position dependent diffusion matrices
 $D_0(q)$ introduced in Proposition~\ref{prop:LR_velocity} as
 \[
 \mathscr{D}_0 = \int_{\mathcal{M}} D_0(q) \, \widetilde{\mu}(q) \, dq.
 \]
 The homogenization problem for the equilibrium dynamics is well studied and the properties of the diffusion tensor $\mathscr{D}_0$ (i.e. its scaling with respect to the friction coefficient) are understood, at least in one dimension~\cite{papan_varadhan,HP08}.
\end{remark}

\subsection{Properties of the diffusion matrix for small forcings}

A careful inspection of the expression of the effective diffusion shows that $\mathscr{D}_\eta = \mathscr{D}_0 + \mathrm{O}(\eta^2)$ if the forcing has average~0 in time for all configurations~$q$; whereas in general a first order response $\mathscr{D}_\eta = \mathscr{D}_0 + \mathrm{O}(\eta)$ arises when the time average of the forcing is not trivial. This is made precise in the proposition below.

\begin{prop}
\label{prop:small_eta_ppties_D}
There exists $r > 0$ such that, for $|\eta| \leq r$ and any direction $\sL \in \RR^d$,  
\[
\sL^T \mathscr{D}_\eta \sL = \sL^T \mathscr{D}_0 \sL + \eta \sL^T \mathpzc{D}_1 \xi + \eta^2 \widetilde{\mathpzc{D}}_{\eta,\xi},
\]
where $\mathscr{D}_0$ is the effective diffusion of the equilibrium dynamics corresponding to $\eta = 0$ and $\widetilde{\mathpzc{D}}_{\eta,\xi}$ is uniformly bounded for $|\eta| \leq r$ and $|\xi| \leq 1$. When the external force has time average~0 for all configurations, namely when
\begin{equation}
\label{eq:null_time_avg_F}
\forall q \in \mathcal{M}, \qquad \int_{T\mathbb{T}} F(t,q) \, dt = 0,
\end{equation}
then the first order correction vanishes: $\mathpzc{D}_1 = 0$.
\end{prop}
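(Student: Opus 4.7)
The plan is to insert asymptotic expansions of $\Phi_\eta$ and $\psi_\eta$ into the variational expression $\xi^T \mathscr{D}_\eta \xi = \frac{\gamma}{\beta}\int_\cE |\nabla_p(\xi^T \Phi_\eta)|^2 \psi_\eta$, to identify the coefficient of~$\eta$, and to show that it vanishes when~\eqref{eq:null_time_avg_F} holds. By Proposition~\ref{prop:inv_meas_charact}, for $|\eta|\leq r$ one has $\psi_\eta = (1 + \eta \varrho_1 + \eta^2 \varrho_{2,\eta})\mu$ with $\varrho_{2,\eta}$ uniformly bounded in $L^2(\mu)$; Proposition~\ref{prop:Poisson} applied to~\eqref{eq:Poisson_eq_velocity} yields a matching expansion $\Phi_\eta = \Phi_0 + \eta \Phi_1 + \eta^2 \Phi_{2,\eta}$ with $\Phi_{2,\eta}$ and its $p$-derivatives bounded in a weighted $L^\infty_{\Li_n}$-norm uniformly in $|\eta|\le r$. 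Substituting and collecting powers of $\eta$, with $\phi_j = \xi^T \Phi_j$, yields
\begin{equation*}
\xi^T \mathpzc{D}_1 \xi = \frac{\gamma}{\beta}\int_\cE |\nabla_p\phi_0|^2 \varrho_1 \,\mu + \frac{2\gamma}{\beta}\int_\cE \nabla_p\phi_0 \cdot \nabla_p \phi_1\, \mu,
\end{equation*}
while the remainder $\widetilde{\mathpzc{D}}_{\eta,\sL}$ is controlled uniformly in $|\sL|\le 1$ and $|\eta|\le r$ by a Cauchy--Schwarz estimate combining the two sets of bounds and the Gaussian $p$-decay of $\mu$.

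I would then extract the equations for $\varrho_1,\phi_0,\phi_1$. Expanding~\eqref{eq:Poisson_eq_velocity} gives $\cA_0 \phi_0 = \sL^T M^{-1}p$, which admits a $t$-independent solution (the one in~\eqref{eq:Phi_0}), together with $(\partial_t + \cA_0)\phi_1 = -F(t,q)\cdot\nabla_p \phi_0 - \sL^T \mathcal{V}_1$. Expanding the Fokker--Planck equation~\eqref{eq:FokkerPlanck} with $\psi_\eta = \rho_\eta \mu$ and dividing by $\mu$ yields $(\partial_t - \cA_0^*)\varrho_1 = \beta F(t,q)\cdot M^{-1}p$, where $\cA_0^* = \cR \cA_0 \cR = -\cA_{\mathrm{ham}} + \gamma \cA_{\mathrm{FD}}$ is the $L^2(\mu)$ adjoint of $\cA_0$ (cf.\ Remark~\ref{rem:gradient}). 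Under the hypothesis~\eqref{eq:null_time_avg_F} we have $F_0\equiv 0$, so Proposition~\ref{prop:LR_velocity} together with $\mathcal{V}_\eta = \eta\, \overline{\mathscr{V}} + \mathrm{O}(\eta^2)$ gives $\mathcal{V}_1 = 0$, simplifying the equation for $\phi_1$ to $(\partial_t + \cA_0)\phi_1 = -F\cdot\nabla_p\phi_0$.

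The vanishing of $\mathpzc{D}_1$ then follows from two separate cancellations. Time-averaging the equation for $\varrho_1$ kills the $\partial_t$ term by $T$-periodicity and leaves $\cA_0^* \overline{\varrho_1} = \beta F_0\cdot M^{-1}p = 0$; ergodicity on $L^2(\mu)$ forces $\overline{\varrho_1}$ to be constant, and the normalization $\int_\cE \varrho_1\mu = 0$ in~\eqref{eq:conditions_varrho} pins that constant to zero. Since $\phi_0$ is $t$-independent, this makes the first integral in the expression for $\sL^T\mathpzc{D}_1\sL$ equal to $T\int_{\M\times\RR^d}|\nabla_p\phi_0|^2\,\overline{\varrho_1}\,\mu = 0$. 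For the second integral, I would apply the polarization identity
\begin{equation*}
\frac{2\gamma}{\beta}\int_\cE \nabla_p f \cdot \nabla_p g\, \mu = -\int_\cE\bigl(f \cA_0 g + g \cA_0 f\bigr)\mu,
\end{equation*}
derived from the antisymmetry of $\cA_{\mathrm{ham}}$ and the symmetry of $\cA_{\mathrm{FD}}$ in $L^2(\mu)$, with $f=\phi_0$ and $g=\phi_1$, and substitute the Poisson equations: the right-hand side becomes $\int_\cE \phi_0\partial_t\phi_1\,\mu + \int_\cE \phi_0 F\cdot\nabla_p\phi_0\,\mu - \int_\cE \phi_1\,\sL^T M^{-1}p\,\mu$. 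The first term vanishes by $T$-periodicity since $\phi_0$ does not depend on $t$; the second, after integrating by parts in $p$ using $\nabla_p \mu = -\beta M^{-1}p\,\mu$, reduces to $\tfrac{\beta T}{2}\int\phi_0^2\, F_0\cdot M^{-1}p\,\mu$, which is zero under~\eqref{eq:null_time_avg_F}; and the third, by time-averaging, is proportional to $\int\overline{\phi_1}\,\sL^T M^{-1}p\,\mu\,dq\,dp$, where $\overline{\phi_1}$ solves $\cA_0\overline{\phi_1} = -F_0\cdot\nabla_p\phi_0 = 0$ and is therefore constant, so the integral vanishes by the $p$-symmetry of $\mu$.

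The main technical obstacle lies in the uniform control of the remainder $\widetilde{\mathpzc{D}}_{\eta,\sL}$, which mixes polynomial-in-$p$ pointwise bounds on $\Phi_{2,\eta}$ with $L^2(\mu)$ bounds on $\varrho_{2,\eta}$: handling the cross terms requires exploiting the Gaussian decay of $\mu$ in $p$ to absorb the polynomial weights via Cauchy--Schwarz. The algebraic cancellations giving $\mathpzc{D}_1 = 0$ are otherwise routine manipulations with the Poisson equations for $\phi_0,\phi_1,\varrho_1$.
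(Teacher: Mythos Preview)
Your argument is correct and follows the same overall strategy as the paper---expand $\psi_\eta$ and $\Phi_\eta$ in powers of~$\eta$, identify the coefficient of~$\eta$, and kill it by showing that the time-averages $\overline{\varrho_1}$ and $\overline{\phi_1}$ vanish under~\eqref{eq:null_time_avg_F}. The main difference is the starting point: you work from the Dirichlet-form expression $\xi^T\mathscr{D}_\eta\xi=\frac{\gamma}{\beta}\int|\nabla_p(\xi^T\Phi_\eta)|^2\psi_\eta$, whereas the paper first converts this (via the identity $T_\eta(f^2)=2fT_\eta f+\frac{2\gamma}{\beta}|\nabla_p f|^2$ and the stationarity of $\psi_\eta$) to the Green--Kubo form $\xi^T\mathscr{D}_\eta\xi=-\int_\cE(\xi^T\Phi_\eta)\,\xi^T(M^{-1}p-\mathcal{V}_\eta)\,\psi_\eta$. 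In that form the first-order coefficient reads
\[
\xi^T\mathpzc{D}_1\xi=\int_\cE(\xi^T\widetilde{\Phi}^1)(\xi^T M^{-1}p)\,\mu-\int_\cE(\xi^T\Phi_0)\bigl(\xi^T M^{-1}p\,\varrho_1-\overline{\mathscr{V}}\bigr)\mu,
\]
and each term contains exactly one time-dependent factor multiplied by time-independent ones; the vanishing of $\widehat{\varrho_1}$ and $\widehat{\widetilde{\Phi}^1}$ then kills both terms in one stroke. Your route works just as well, but note that the polarization detour for the second integral is unnecessary: since $\phi_0$ and $\mu$ are $t$-independent, $\int_\cE\nabla_p\phi_0\cdot\nabla_p\phi_1\,\mu=T\int_{\M\times\RR^d}\nabla_p\phi_0\cdot\nabla_p\overline{\phi_1}\,\mu$, and $\overline{\phi_1}$ being constant already gives $\nabla_p\overline{\phi_1}=0$. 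One minor imprecision: Proposition~\ref{prop:Poisson} gives uniform-in-$\eta$ bounds on $\Phi_\eta$, not the expansion itself; the expansion (your $\Phi_1,\Phi_{2,\eta}$) has to be set up separately, as in Lemma~\ref{lem:expansion_Phi_eta}, and then Proposition~\ref{prop:Poisson} is invoked to bound the remainder.
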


\subsubsection*{Numerical illustration}

We now present numerical experiments illustrating the previous theoretical results. We solve numerically the Langevin dynamics~\eqref{eq:Langevin}, discretized with~\eqref{eq:num_Langevin} for the potential~\eqref{eq:pot2d}, so that we expect a coupling between the directions~$x$ and~$y$. We calculate the drift vector and the diffusion matrix using empirical averages of trajectories generated by~\eqref{eq:num_Langevin}. One could also, in principle, solve the stationary Fokker-Planck equation and the Poisson equation using a spectral method, as was done in~\cite{LatPavlKram2013, PavlVog08} for the equilibrium dynamics and for constant external forcings. Here we rely on Monte Carlo simulations. The drift vector and the diffusion matrix are respectively given by
\begin{equation}
 \mathcal{V}_\eta=\lim_{t\to \infty} \frac{\mathbb{E}\Big( \mathscr{Q}^\eta_t-\mathscr{Q}^\eta_0 \Big)}{t}
\end{equation}
and
\begin{equation}
 \mathscr{D}_\eta=\lim_{t\to \infty}\frac{\mathbb{E}\Big( \left[\mathscr{Q}^\eta_t- \mathbb{E}(\mathscr{Q}_t^\eta)\right]\otimes  \left[\mathscr{Q}_t^\eta - \mathbb{E}(\mathscr{Q}_t^\eta)\right]\Big)}{2t}.
\end{equation}
We are interested both in the linear response regime $\eta\to 0$ (for which we expect $\mathscr{D}_\eta\to \mathscr{D}_0$) and the behavior of $\mathscr{D}_\eta$ for large $\eta$. To estimate the drift coefficients and the diffusion tensor, we proceed by a multiple replica strategy. The system is initialized by positions and momenta $(q^0,p^0)$ distributed according to the nonequilibrium measure~$\psi_\eta(0,q,p)\, dq\,dp$. In practice we run the dynamics~\eqref{eq:num_Langevin} for a simulation time $\tau_{\rm neq}$ sufficiently large for the system to converge towards the nonequilibrium steady-state.

For simplicity, we consider nonequilibrium forcings with general form
\[
%\begin{equation}
%\label{eq:force_diff}
F(t,q) = F_0(q)\cos(\omega t).
%\end{equation}
\]
In the example we considered, we picked $F_0$ as given by~\eqref{eq:NM_force} and chose either $\omega=2\pi$ (time dependent forcing) or $\omega=0$ (time-independent forcing). To perform the computations, we fix a (large) number $M$ of replicas and a simulation time $\tau_{\rm sim}$ sufficiently large for the asymptotic diffusive regime to be attained. The drift coefficient and diffusion matrix are evaluated at time $\tau_{\rm sim}$ by empirical averages. More precisely,
\begin{equation}
 \widehat{\mathcal{V}}_\eta^{\tau_{\rm sim},M}=\frac{1}{M\tau_{\rm sim}}\sum_{k=0}^{M-1} \mathscr{Q}^{k,\tau_{\rm sim}/\Delta t},
\end{equation}
where the approximation of the unprojected position of the $k$th replica at time $n\Delta t$ is
\[
\mathscr{Q}^{k,n} = q^{k,0} + \sum_{i=0}^n M^{-1} p^{k,i},
\]
while
\begin{equation}
\widehat{\mathscr{D}}_\eta^{\tau_{\rm sim},M}=\frac{1}{2\tau_{\rm sim}}\left[\frac{1}{M}\sum_{k=0}^{M-1} \left(\mathscr{Q}^{k,\tau_{\rm sim}/\Delta t}\right)^2-\left(\frac{1}{M}\sum_{k=0}^{M-1} \mathscr{Q}^{k,\tau_{\rm sim}/\Delta t}\right)^2\right],
\end{equation}
In the simulations reported below, we used $M=2.5 \times 10^7$, $\tau_{\rm sim} =1500$, $\tau_{\rm neq} = 500$, $\Delta t = 0.01$, $\gamma=1$, $\beta = 1$. Since we use an independent replica strategy, error bars are deduced from the empirical variance. They are not reported here since statistical errors are in all cases below 10\% in relative accuracy.

Figure~\ref{fig:diff} depicts the component of the diffusion matrix as a function of the forcing strength $\eta$. Figure~\ref{fig:spec} depicts the spectrum of the diffusion matrix as a function of $\eta$. As can be seen, the $\mathscr{D}_{\eta,xx}$ component increases much more than the other components as $\eta$ becomes large. The $\mathscr{D}_{\eta,yy}$ component on the other hand remains almost stationary. A zoom on the small $\eta$ variations of $\mathscr{D}_{\eta,xx}$ is presented in Figure~\ref{fig:diff_zoom}, together with a least-square fit of the form
\[
\mathscr{D}_{\eta,xx} = \mathscr{D}_{0,xx} + a \eta + b \eta^2.
\]
The leading coefficient~$a$ is found to be more or less equal to~0 for time-dependent forcings, as predicted by Proposition~\ref{prop:small_eta_ppties_D}. It is however also found to be more or less equal to~0 for the time-independent forcing under consideration.

\begin{figure}[h]
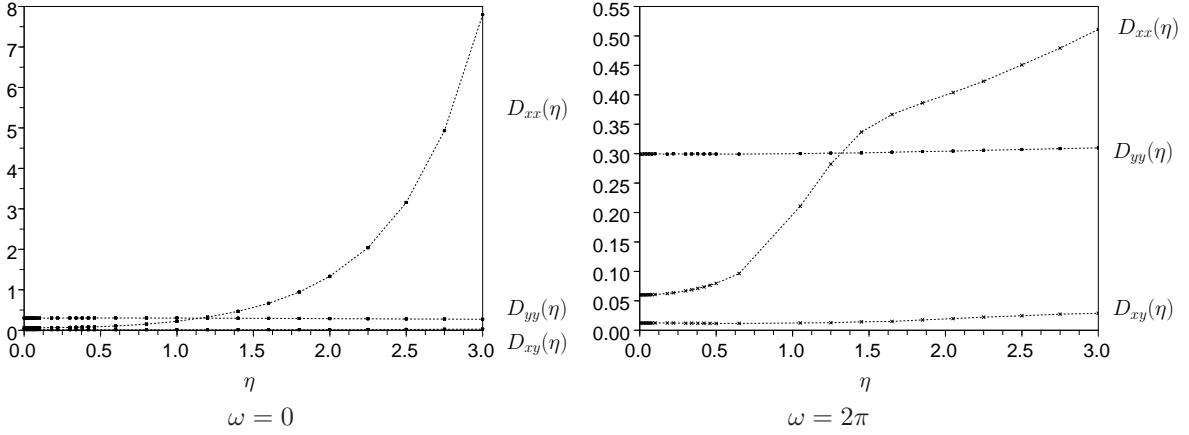

\begin{tabular}{cc}
 \scalebox{0.4}{\input{D_space.pstex_t}}&\qquad \scalebox{0.4}{\input{D_time.pstex_t}}\\
\vspace{0.5cm}
$\omega=0$& $\omega=2\pi$
\end{tabular}
\caption{\label{fig:diff} Components of the diffusion matrix as a function of the forcing for time-independent ($\omega =0$) and space-time dependent ($\omega = 2 \pi$) forcings.}
\end{figure}

\begin{figure}[h]
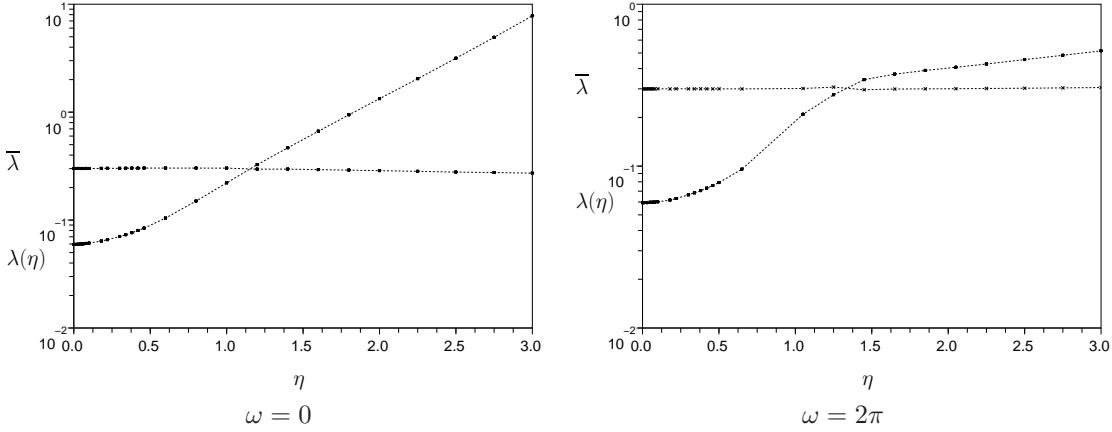

\begin{tabular}{cc}
 \scalebox{0.4}{\input{spec_space.pstex_t}}&\scalebox{0.4}{\input{spec_time.pstex_t}}\\
\vspace{0.5cm}

$\omega=0$& $\omega=2\pi$
\end{tabular}
\caption{\label{fig:spec} Spectrum of the diffusion matrix and as a function of forcing for space and space-time dependent forcing}
\end{figure}

\begin{figure}[h]
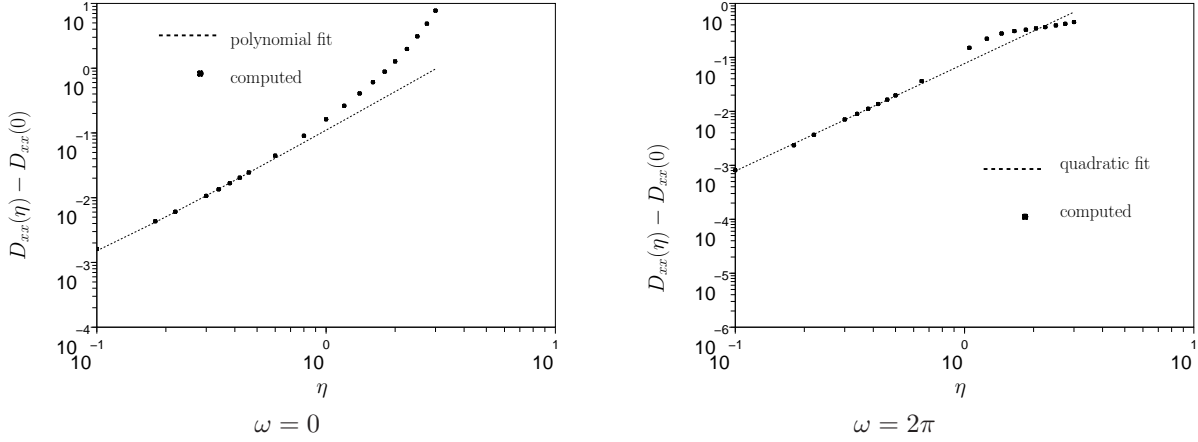

\begin{tabular}{cc}
 \scalebox{0.4}{\input{rep_lin_D_space.pstex_t}}&\qquad \scalebox{0.4}{\input{rep_lin_D_time.pstex_t}}\\
\vspace{0.5cm}
$\omega=0$& $\omega=2\pi$
\end{tabular}
\caption{\label{fig:diff_zoom} Zoom on the small $\eta$ variations of the component $\mathscr{D}_{\eta,xx}- \mathscr{D}_{0,xx}$, together with a quadratic fit $a \eta + b \eta^2$. For $\omega = 2\pi$, we find $a = 0.00016$ and $b = 0.077$, while $a = 0.0034$ and $b = 0.11$ for $\omega = 0$. }
\end{figure}

%--------- proofs -----------
\section{Proofs of our results}
\label{sec:proofs}

\subsection{Proofs of Propositions~\ref{prop:inv_meas_cv} and~\ref{prop:LLN}}
\label{sec:prop_inv_meas_cv}

The idea of the proof is the following. We first start by establishing the convergence of the distribution of a sampled Markov chain obtained by considering the time-inhomogeneous process~\eqref{eq:Langevin} at times $nT$ (a standard idea, considered in~\cite{HK10} for instance). An invariant measure is then obtained by evolving the invariant measure of the sampled process over a period. 

\paragraph{Exponential convergence of a sampled Markov chain.}
We introduce the Markov chain $(Q^\eta_n,P^\eta_n) = (q^\eta_{nT},p^\eta_{nT})$, where the time-inhomogeneous process $(q^\eta_t,p^\eta_t)$ is started at time $t=0$ from $(Q^\eta_0,P^\eta_0) = (q_0,p_0)$. We have indicated explicitly the dependence on the forcing magnitude~$\eta$ although all the estimates below will hold uniformly with respect to this parameter as long as it remains the bounded interval $[-\eta_*,\eta_*]$. The generator $U_{T,\eta}$ of the Markov chain $(Q_n^\eta,P_n^\eta)$ is defined as
\[
\left(U_{T,\eta}f\right)(q,p) = \mathbb{E}\left(\left. f(Q_{n+1}^\eta,P_{n+1}^\eta) \, \right| \, (Q_n^\eta,P_n^\eta) = (q,p) \right).
\] 
The convergence result for the sampled Markov chain, with rates uniform in~$\eta$, is based on the following two lemmas (proved at the end of this section).

\begin{lemma}[Uniform Lyapunov condition]
\label{lem:Lyapunov}
For any $n \geq 1$ and $\eta_* > 0$, there exist $b >0$ and $a \in [0,1)$ such that, for all $\eta\in [-\eta_*,\eta_*]$,
\begin{equation}
\label{eq:Lyapunov}
U_{T,\eta} \Lin \leq a \Lin + b.
\end{equation}
\end{lemma}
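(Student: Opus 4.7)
\medskip

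\noindent\textbf{Proof plan for Lemma~\ref{lem:Lyapunov}.}

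The plan is the standard route through a generator computation on $\Li_n(q,p) = 1 + |p|^{2n}$ followed by a Gronwall argument on the expected value. First I compute the action of the full generator on $\Li_n$. Since $\nabla_p |p|^{2n} = 2n |p|^{2n-2} p$ and $\Delta_p |p|^{2n} = 2n(2n+d-2) |p|^{2n-2}$, a direct calculation gives
\[
(\cA_0 + \eta \cA_1) \Li_n(q,p) = -2n\gamma\, |p|^{2n-2} p^T M^{-1} p - 2n |p|^{2n-2} p \cdot \nabla V(q) + 2n\eta |p|^{2n-2} p \cdot F(t,q) + \frac{2n\gamma(2n+d-2)}{\beta} |p|^{2n-2}.
\]
Because $M$ is positive definite there exists $c_M > 0$ with $p^T M^{-1} p \geq c_M |p|^2$. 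Since $\nabla V$ and $F$ are bounded (by smoothness and $\mathcal{M}$-periodicity, and smoothness/periodicity of $F$ on the compact set $T\mathbb{T} \times \mathcal{M}$), the terms $|p|^{2n-2} p \cdot \nabla V$ and $|p|^{2n-2} p \cdot F$ can be absorbed by the strict dissipative term with the help of Young's inequality; this yields, uniformly in $t \in T\mathbb{T}$ and in $\eta \in [-\eta_*,\eta_*]$, constants $\alpha > 0$ and $C > 0$ such that
\[
(\cA_0 + \eta \cA_1) \Li_n(q,p) \leq -\alpha\, \Li_n(q,p) + C.
\]

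Next I apply Itô's formula to $\Li_n(q_t^\eta,p_t^\eta)$; after a routine localization (through stopping times $\tau_R = \inf\{t : |p_t^\eta| \geq R\}$, then $R \to +\infty$) and using that the diffusion part only acts in $p$ with constant coefficient $\sqrt{2\gamma/\beta}$, the local martingale term has zero expectation. Setting $\varphi(t) = \mathbb{E}\Li_n(q_t^\eta,p_t^\eta)$ with $(q_0^\eta,p_0^\eta)=(q,p)$ deterministic, the pointwise bound above gives the differential inequality $\varphi'(t) \leq -\alpha \varphi(t) + C$, and Gronwall's lemma yields
\[
\varphi(T) \leq \rme^{-\alpha T} \Li_n(q,p) + \frac{C}{\alpha}\left(1 - \rme^{-\alpha T}\right).
\]
Taking $a = \rme^{-\alpha T} \in (0,1)$ and $b = C/\alpha$ gives exactly~\eqref{eq:Lyapunov}, with constants that are uniform over $\eta \in [-\eta_*,\eta_*]$ because $\alpha$ and $C$ only depend on bounds for $\nabla V$, $F$ and $M^{-1}$, and on $\eta_*$.

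The only mild technical point is the localization step, which is needed to justify that the expected value of the stochastic integral vanishes before we have a priori moment bounds; this is standard because $(p_t^\eta)$ cannot explode in finite time under Lipschitz-type bounds on the drift restricted to bounded momentum regions, and the Gronwall estimate obtained after truncation passes to the limit by Fatou and monotone convergence. No genuine obstacle arises beyond this bookkeeping.
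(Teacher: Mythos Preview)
Your proof is correct but follows a genuinely different route from the paper's. The paper integrates the momentum equation explicitly over one period via variation of constants,
\[
P_{k+1}^\eta = \rme^{-\gamma T M^{-1}} P_k^\eta + \mathscr{F}_k + \mathscr{G}_k,
\]
with $\mathscr{F}_k$ a uniformly bounded drift integral and $\mathscr{G}_k$ a centered Gaussian with explicit covariance; it then expands $|P_{k+1}^\eta|^{2n}$ and shows the leading coefficient is $\alpha^{2n}(1+\eps)^n < 1$ with $\alpha = \rme^{-\gamma T/m}$, absorbing the lower-order polynomial remainder into the constant~$b$. Your approach instead works infinitesimally: you establish the pointwise generator bound $(\cA_0+\eta\cA_1)\Li_n \leq -\alpha\Li_n + C$ and push it through It\^o plus Gronwall. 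Your argument is the more standard Lyapunov-function template and generalizes immediately to situations where no closed-form integration of the momentum is available; the paper's approach, by contrast, exploits the linear-in-$p$ friction to get explicit constants ($a = \rme^{-2n\gamma T/m}$ up to an $\eps$) without any localization or Gronwall bookkeeping. Both are equally valid here; the trade-off is generality versus explicitness.
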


\begin{lemma}[Uniform minorization condition]
\label{lem:minorization}
Fix any $p_{\rm max} > 0$. There exists a probability measure~$\nu$ on~$\mathcal{M} \times \RR^d$ and a constant $\kappa > 0$ such that, for all $\eta\in [-\eta_*,\eta_*]$,
\[
\forall B \in \mathscr{B}(\mathcal{M} \times \RR^d), 
\qquad 
\mathbb{P}\Big( \left(Q^\eta_{k+1},P^\eta_{k+1}\right) \in B \, \Big| \, \left|P^\eta_k\right| \leq p_{\rm max} \Big) \geq \kappa \, \nu(B),
\]
where $\mathscr{B}(X)$ are the Borel sets of $X$.
\end{lemma}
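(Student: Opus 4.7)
The plan is to establish the minorization via the classical route for hypoelliptic diffusions: show that the time-$T$ transition kernel of~\eqref{eq:Langevin} admits a smooth density, prove positivity of this density by a control argument, and then extract a uniform lower bound on a compact set by continuity and compactness in $\eta$. One notes that the minorization from an initial distribution supported in $\{|p|\leq p_{\rm max}\}$ follows by integration once the pointwise minorization from each starting point $(q,p)$ with $|p|\leq p_{\rm max}$ is established.

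First I would verify that the generator $\cA_0 + \eta \cA_1$ satisfies H\"ormander's bracket condition uniformly in $\eta$: the Brownian noise acts through the constant vector fields $\partial_{p_i}$, and their commutators with the deterministic drift involving $M^{-1} p\cdot \nabla_q$ span the directions $\partial_{q_i}$. Hence standard hypoelliptic theory (Malliavin calculus or the parabolic H\"ormander theorem) yields a $C^\infty$ joint density $(q,p,q',p',\eta)\mapsto p^\eta_T\bigl((q,p);(q',p')\bigr)$ on $(\mathcal{M}\times\RR^d)\times(\mathcal{M}\times\RR^d)\times[-\eta_*,\eta_*]$, the smoothness in $\eta$ coming from the smooth dependence of the coefficients.

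Next I would invoke the Stroock--Varadhan support theorem to show that this density is everywhere strictly positive. The associated control system is
\[
\dot q = M^{-1} p, \qquad \dot p = -\nabla V(q) + \eta F(t,q) - \gamma M^{-1} p + \sqrt{2\gamma/\beta}\,\dot u(t),
\]
and for any endpoints $(q,p)$ and $(q',p')$, a smooth control $u$ steering the system from $(q,p)$ at time $0$ to $(q',p')$ at time $T$ can be constructed explicitly (pick any smooth curve $q(t)$ with the prescribed boundary values of $q$ and $\dot q = M^{-1} p$, then read off $\dot u$ from the momentum equation). Thus $p^\eta_T\bigl((q,p);(q',p')\bigr) > 0$ for all arguments. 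Fixing any compact set $K\subset\mathcal{M}\times\RR^d$ with nonempty interior, the joint continuity and compactness of $\{(q,p)\,:\,|p|\leq p_{\rm max}\}\times K \times [-\eta_*,\eta_*]$ provide a constant $c>0$ such that $p^\eta_T\bigl((q,p);(q',p')\bigr)\geq c$ on that set. Setting $\nu(B) = |B\cap K|/|K|$ and $\kappa = c|K|$ yields the required minorization.

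The main technical obstacle is the joint continuity (in particular the smoothness in $\eta$) of the transition density, since hypoellipticity proofs typically deliver smoothness in the spatial arguments for fixed coefficients. This can be handled by a Malliavin calculus argument in which the Malliavin covariance matrix depends smoothly on $\eta$ (being a deterministic functional of the coefficients), together with the uniform nondegeneracy inherited from the $\eta$-independent diffusion vectors; alternatively, standard parabolic Schauder estimates applied to the Kolmogorov equation with smoothly varying coefficients give continuity in $\eta$. Everything else is routine once this uniformity is in hand.
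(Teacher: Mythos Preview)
Your approach is correct but takes a genuinely different route from the paper. The paper avoids the hypoelliptic machinery entirely by exploiting the explicit linear structure of the momentum equation: integrating~\eqref{eq:Langevin} over one period yields
\[
Q^\eta_{k+1} = Q^\eta_k + \frac{1-\alpha_T}{\gamma} P^\eta_k + \widetilde{\mathscr{F}}_k + \widetilde{\mathscr{G}}_k,
\qquad
P^\eta_{k+1} = \alpha_T P^\eta_k + \mathscr{F}_k + \mathscr{G}_k,
\]
where $(\widetilde{\mathscr{G}}_k,\mathscr{G}_k)$ is an explicit centered Gaussian (with covariance independent of~$\eta$) and the drift terms $\widetilde{\mathscr{F}}_k,\mathscr{F}_k$ are uniformly bounded by a constant depending only on~$\eta_*$. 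The transition probability is therefore a Gaussian shifted by a bounded amount, and the minorization measure is obtained by taking the infimum of the Gaussian density over all admissible shifts in a compact ball. This sidesteps entirely the question of joint continuity in~$\eta$, since the Gaussian part is $\eta$-independent and the $\eta$-dependence enters only through the uniformly bounded shift.

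Your route works too, and is the standard one for general hypoelliptic diffusions where no such explicit decomposition is available. The cost is precisely the technical obstacle you flag: uniform-in-$\eta$ positive lower bounds on the density require either Malliavin bounds uniform in the parameter or a Girsanov/Schauder argument. The paper's approach buys simplicity and explicitness at the price of being specific to additive noise with linear dissipation; yours buys generality at the price of heavier analytic input.
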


From the results of~\cite{HM11,MeynTweedie}, we can then state the following uniform convergence result for the sampled chain. 

\begin{prop}
\label{prop:cv_sampled_proc}
Fix $\eta^* > 0$ and $n \geq 1$. There exist $\lambda, C > 0$ and probability measures $m_\eta(q,p) \, dq \,dp$ such that, for any $f \in L^\infty_\Lin$ and any $\eta \in [-\eta^*,\eta^*]$,
\begin{equation}
\label{eq:cv_over_one_period}
\left\| U^k_{T,\eta}f - \int_{\M \times \mathbb{R}^d} f(q,p) \, m_\eta(q,p) \, dq \, dp\right\|_{L^\infty_\Lin} \leq C \rme^{-\lambda k T} \| f \|_{L^\infty_\Lin}.
\end{equation}
\end{prop}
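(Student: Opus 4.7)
The plan is to invoke the weighted Harris theorem of Hairer--Mattingly~\cite{HM11} in the form valid for discrete-time Markov chains: when a Markov kernel satisfies a Lyapunov-type drift condition together with a minorization condition on the associated sublevel sets, it admits a unique invariant probability measure and contracts geometrically in a weighted total variation norm built from the Lyapunov function. The output of such a theorem is exactly the bound~\eqref{eq:cv_over_one_period}, upon identifying the weighted total variation norm with the dual of the weighted $L^\infty_{\mathcal{K}_n}$ norm in the usual way.

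The concrete steps are as follows. First, I would fix $n\geq 1$ and $\eta_*>0$, and combine Lemmas~\ref{lem:Lyapunov} and~\ref{lem:minorization}: the Lyapunov estimate $U_{T,\eta}\mathcal{K}_n \leq a\mathcal{K}_n + b$ implies that the sublevel sets $\{\mathcal{K}_n \leq R\}$ are small for $R$ sufficiently large (since $\mathcal{K}_n(q,p)=1+|p|^{2n}$, these sublevel sets are contained in sets of the form $\mathcal{M}\times\{|p|\leq p_{\max}\}$), and Lemma~\ref{lem:minorization} provides a minorization on such sets with constants independent of $\eta$. Second, I would apply~\cite[Theorem~1.3]{HM11} to $U_{T,\eta}$: this produces an invariant probability measure $m_\eta$ and constants $C,\lambda>0$, depending only on $a,b,\kappa,p_{\max}$ and on $R$, such that
\[
\left\|U_{T,\eta}^k f - \int_{\mathcal{M}\times\mathbb{R}^d} f\, dm_\eta\right\|_{L^\infty_{\mathcal{K}_n}} \leq C\, \mathrm{e}^{-\lambda k T}\, \|f\|_{L^\infty_{\mathcal{K}_n}}.
\]
Third, since $a,b,\kappa,p_{\max},R$ can all be chosen uniformly over $\eta\in[-\eta_*,\eta_*]$, the rate $\lambda$ and the constant $C$ are uniform in $\eta$, which is precisely what is claimed.

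Existence and uniqueness of $m_\eta$ come along with the Harris theorem, but can also be obtained separately: uniqueness follows from strong irreducibility inherited from the minorization condition (any two invariant measures would have to agree on the small set, hence everywhere after one step), and existence follows from the tightness of the sequence $(U_{T,\eta}^k \delta_{(q_0,p_0)})_{k\geq 0}$ provided by iterating~\eqref{eq:Lyapunov} to get $U_{T,\eta}^k \mathcal{K}_n \leq a^k \mathcal{K}_n + b/(1-a)$, together with the Feller property of the Langevin dynamics.

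The only nontrivial point is really bookkeeping the uniformity in $\eta$. Everything else is an application of the Hairer--Mattingly framework, so the main obstacle is not in this proposition itself but has been shifted to the two preceding lemmas (Lyapunov and minorization), where the explicit dependence of the constants on $\eta_*$ must be controlled; granted those, the present proposition is essentially a corollary. Note also that one recovers the usual formulation in terms of the weighted total variation distance $\|\cdot\|_{\mathcal{K}_n}$ by duality, which is the standard packaging of the Harris theorem output.
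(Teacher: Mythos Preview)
Your proposal is correct and follows essentially the same approach as the paper: the paper also treats Proposition~\ref{prop:cv_sampled_proc} as a direct consequence of Lemmas~\ref{lem:Lyapunov} and~\ref{lem:minorization} via the weighted Harris theorem of~\cite{HM11,MeynTweedie}, with uniformity in~$\eta$ inherited from the uniformity of the Lyapunov and minorization constants. Your write-up is in fact more detailed than the paper's, which simply cites these references and states the result.
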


Moreover, the integration of the inequality~\eqref{eq:Lyapunov} with respect to $m_\eta$ gives the moment estimate
\[
\int_{\mathcal{M} \times \RR^d} \Lin \, m_\eta \leq \frac{b}{1 - a}.
\]

\paragraph{Law of Large Numbers for the sampled chain.}
To obtain the Law of Large Numbers for all initial conditions, we use the following property (proved at the end of this section). 
 
\begin{lemma}
\label{lem:smoothness_UT}
The generator $U_{T,\eta}$ has a transition kernel which is absolutely continuous with respect to Lebesgue measure and positive. %$u((q,p),(q',p')) > 0$.
\end{lemma}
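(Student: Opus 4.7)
The plan is to combine two classical ingredients: H\"ormander's hypoellipticity theorem to obtain the existence and smoothness of the transition density, and the Stroock--Varadhan support theorem to deduce that this density is strictly positive on $\mathcal{M}\times\RR^d$.

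First I would rewrite the SDE~\eqref{eq:Langevin} in Stratonovich form (here the It\^o and Stratonovich forms coincide since the diffusion coefficient is constant) and identify the vector fields
\[
X_0(t,q,p) = M^{-1}p\cdot\nabla_q + \bigl(-\nabla V(q) + \eta F(t,q) - \gamma M^{-1}p\bigr)\cdot\nabla_p,
\qquad X_i = \sqrt{\tfrac{2\gamma}{\beta}}\,\partial_{p_i},\quad i=1,\dots,d,
\]
so that the generator $\partial_t+\cA_0+\eta\cA_1$ has the form $\partial_t + X_0 + \tfrac12\sum_i X_i^2$. The vector fields $\partial_{p_i}$ only span the $p$-directions, but the bracket $[X_0,X_i] = -M^{-1}e_i\cdot\nabla_q + (\text{terms in }\partial_p)$ generates the $q$-directions. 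Hence $\{X_1,\dots,X_d,[X_0,X_1],\dots,[X_0,X_d],\partial_t\}$ span $T(T\bT\times\cM\times\RR^d)$ at every point, and H\"ormander's condition is satisfied by the backward Kolmogorov operator on $\RR\times\cM\times\RR^d$. By H\"ormander's theorem, the law of $(q^\eta_T,p^\eta_T)$ starting from any $(q_0,p_0)$ admits a smooth density $k_{T,\eta}(q_0,p_0;q,p)$ with respect to Lebesgue measure on $\cM\times\RR^d$; this is exactly the transition kernel of $U_{T,\eta}$.

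To obtain positivity, I would invoke the Stroock--Varadhan support theorem: the topological support of the law of $(q^\eta_T,p^\eta_T)$ coincides with the closure of the set of endpoints $(q(T),p(T))$ of solutions of the controlled ODE obtained by replacing $dW_t$ with $\dot u(t)\,dt$ for smooth controls $u:[0,T]\to\RR^d$. Writing this controlled system out,
\[
\dot q = M^{-1}p,\qquad \dot p = -\nabla V(q) + \eta F(t,q) - \gamma M^{-1}p + \sqrt{\tfrac{2\gamma}{\beta}}\,\dot u,
\]
one sees that $u$ can be used to prescribe $\ddot q$ freely (after substitution), so that any smooth path $q\in C^\infty([0,T];\cM)$ connecting $q_0$ to an arbitrary $q_1$, with prescribed $\dot q(0)=M^{-1}p_0$ and $\dot q(T)=M^{-1}p_1$, can be realized by choosing $u$ appropriately. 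Since this controllability holds for any target $(q_1,p_1)\in\cM\times\RR^d$, the support of the transition kernel is the whole phase space. Combined with the smoothness of $k_{T,\eta}$, this forces $k_{T,\eta}(q_0,p_0;q,p)>0$ for every $(q,p)$.

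The main technical point that deserves care is the controllability argument: one must verify that the lift constructed above indeed produces a smooth control $u$ and that the boundary conditions on $(q,\dot q)$ at $0$ and $T$ can be matched while keeping $q(t)\in\cM$ throughout. This is straightforward because $\cM$ is a torus (so there is no geometric obstruction) and because the relation between $u$ and $\ddot q$ is invertible with a smooth inverse. Given this, both conclusions of the lemma follow.
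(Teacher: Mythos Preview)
Your argument is correct and complete. Both the hypoellipticity verification and the controllability argument are sound, and together they yield absolute continuity and strict positivity of the transition kernel.

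The paper's own proof follows a different route for the first part. Rather than checking H\"ormander's condition directly for the time-inhomogeneous generator, it applies Girsanov's theorem to express expectations under the forced dynamics $(q^\eta_t,p^\eta_t)$ as reweighted expectations under the equilibrium dynamics $(q^0_t,p^0_t)$, using the exponential martingale built from the bounded drift $\eta F$. This shows that the law of $(q^\eta_T,p^\eta_T)$ is absolutely continuous with respect to that of $(q^0_T,p^0_T)$, and then invokes standard hypoellipticity results for the \emph{equilibrium} Langevin dynamics to conclude that the latter (hence the former) has a density with respect to Lebesgue measure. For positivity the paper simply appeals to a ``standard control argument'' with references, which is essentially the Stroock--Varadhan/controllability reasoning you spelled out. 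Your approach is more direct and self-contained, and in fact yields smoothness of the density rather than mere absolute continuity; the paper's Girsanov reduction has the minor advantage of reducing everything to the well-documented autonomous case without redoing the bracket computation.
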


This property implies that the chain is irreducible with respect to the Lebesgues measure, and also gives the positivity of $m_\eta(q,p)$.  
% mu(q,p) = \int mu(q',p') u((q',p'),(q,p)) dq' dp'
In view of the Lyapunov condition~\eqref{eq:Lyapunov} and relying on~\cite[Theorem~17.0.1]{MeynTweedie}, we can then conclude that, for any $f \in L^\infty_\Lin$,
\begin{equation}
  \label{eq:LLN_sampled}
  \frac1N \sum_{n=1}^N f(Q^\eta_n,P^\eta_n) \xrightarrow[N\to+\infty]{} \int_{\mathcal{M} \times \RR^d} f \, m_\eta \qquad \mathrm{a.s.}
\end{equation}
for almost all initial conditions~$(Q^\eta_0,P^\eta_0)$. In fact, convergence occurs for all initial conditions and not simply for almost all initial conditions. This is a consequence of the smoothness of the transition probability, ensured by Lemma~\ref{lem:smoothness_UT} (the chain is Harris recurrent, see~\cite[Corollary~1]{Tierney94}, based on~\cite{Nummelin}).

\paragraph{Convergence for the time-inhomogeneous process.}
Convergence results similar to the ones§ stated in Proposition~\ref{prop:cv_sampled_proc} and in~\eqref{eq:LLN_sampled} hold for the sampled chains $(Q^{\eta,\theta}_n,P^{\eta,\theta}_n) = (q^\eta_{nT+\theta},p^\eta_{nT+\theta})$, uniformly in~$\theta \in T \mathbb{T}$ with associated invariant measures~$m_{\eta,\theta}$ (still relying on~\cite{HM11}). In fact, these sampled chains generate the same evolution as the sampled chain defined above up to a time shift in the nonequilibrium forcing, \textit{i.e.} upon replacing $F(t,q)$ by $F(\theta+t,q)$. The exponential convergence~\eqref{eq:cv_exp_time_inhomog} and the Law of Large Numbers~\eqref{eq:LLN} then follow upon defining 
\[
\forall \theta \in T\mathbb{T}, 
\qquad 
\psi_\eta(\theta,q,p) = \frac1T m_{\eta,\theta}(q,p).
\]

\paragraph{Properties of the invariant measure.}
Now that we have proved the exponential convergence to the steady-state, we characterize the invariant measure as the weak solution of an appropriate Fokker-Planck equation. Note that, for a smooth function~$f$,
\[
\frac{d}{ds}\left[ \mathbb{E}\Big(f([s],q^\eta_s,p^\eta_s)\Big) \right] = \mathbb{E}\left[\Big( (\partial_s + \cA_0 + \eta \cA_1)f \Big)([s],q^\eta_s,p^\eta_s)\right].
\]
Passing to the limit $s = nT + \theta$ with $n \to +\infty$ and $\theta \in T\mathbb{T}$, we see that 
% some care has to be taken on the left... do it by finite differences in the time variable, pass to the limit t \to +\infty, and then small time increment to 0 ; or use some ``integral version'' of the previous equality
\[
\frac{d}{d\theta}\left( \int_{\mathcal{M} \times \RR^d} f(\theta,q,p) \, \psi_\eta(\theta,q,p) \, dq \, dp \right) = \int_{\mathcal{M} \times \RR^d} (\partial_\theta + \cA_0 + \eta \cA_1)f \, \psi_\eta,
\]
which gives~\eqref{eq:FokkerPlanck} in the sense of distributions. The fact that $\psi_\eta$ is smooth and that~\eqref{eq:FokkerPlanck} actually holds pointwise is a consequence of the following lemma.

\begin{lemma}
The operators $\partial_t + \cA_0 + \eta \cA_1$ and $-\partial_t + \cA_0^\dagger + \eta \cA_1^\dagger$ (considered on $L^2(\mathcal{E})$) are hypoelliptic.
\end{lemma}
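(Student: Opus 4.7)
The plan is to check Hörmander's rank condition on sums of squares and invoke Hörmander's hypoellipticity theorem.

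First I would rewrite $\partial_t + \cA_0 + \eta \cA_1$ in Hörmander form
\[
\partial_t + \cA_0 + \eta \cA_1 = X_0 + \sum_{i=1}^d X_i^2,
\]
with $X_i = \sqrt{\gamma/\beta}\,\partial_{p_i}$ for $1 \leq i \leq d$ (these produce the second-order term $(\gamma/\beta)\Delta_p$) and
\[
X_0 = \partial_t + M^{-1}p \cdot \nabla_q - \nabla V(q) \cdot \nabla_p - \gamma M^{-1}p \cdot \nabla_p + \eta F(t,q) \cdot \nabla_p,
\]
noting that $X_0$ and the $X_i$ are smooth on $\mathcal{E} = T\mathbb{T}\times\mathcal{M}\times\mathbb{R}^d$ by Assumption~1.

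Next I would verify that the Lie algebra generated by $\{X_0,X_1,\dots,X_d\}$ spans the tangent space $\mathbb{R}^{2d+1}$ at every point of $\mathcal{E}$. The fields $X_1,\dots,X_d$ already give every $\partial_{p_i}$ direction. Computing the brackets
\[
[X_i,X_0] = \sqrt{\gamma/\beta}\,[\partial_{p_i},\,M^{-1}p\cdot\nabla_q] + (\text{terms tangent to }p) = \sqrt{\gamma/\beta}\sum_{j=1}^d (M^{-1})_{ij}\,\partial_{q_j} + \sum_j c_{ij}\partial_{p_j},
\]
and using that the mass matrix~$M$ (hence $M^{-1}$) is positive definite, I obtain all $\partial_{q_j}$ directions by taking suitable linear combinations and subtracting the already available $\partial_{p_j}$ components. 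Finally $\partial_t$ is recovered by subtracting from $X_0$ the linear combinations of $\partial_{q_j}$ and $\partial_{p_j}$ already in the family. Thus the family spans $\mathbb{R}^{2d+1}$ everywhere on $\mathcal{E}$, and Hörmander's theorem yields hypoellipticity of $\partial_t + \cA_0 + \eta \cA_1$.

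For the adjoint $-\partial_t + \cA_0^\dagger + \eta \cA_1^\dagger$, I would run exactly the same argument. The second-order part is unchanged since $\Delta_p$ is formally self-adjoint, so the squared fields are again $\sqrt{\gamma/\beta}\,\partial_{p_i}$. The new drift $X_0^\dagger$ takes the form
\[
X_0^\dagger = -\partial_t - M^{-1}p\cdot \nabla_q + \nabla V(q)\cdot \nabla_p + \gamma M^{-1}p\cdot \nabla_p + (\text{multiplicative smooth term}) - \eta F(t,q)\cdot\nabla_p,
\]
where the zeroth-order multiplicative term is harmless for the Lie-bracket analysis. The brackets $[\partial_{p_i},X_0^\dagger]$ again produce $-\sqrt{\gamma/\beta}\,(M^{-1})_{ij}\partial_{q_j}$ up to lower-order terms, so the same rank argument applies and Hörmander's theorem gives hypoellipticity of the adjoint operator as well.

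The only point requiring mild care is the bookkeeping of the zeroth-order part in the adjoint (coming from $\nabla_p \cdot (\gamma M^{-1}p\,\cdot)$), but since Hörmander's theorem is insensitive to zeroth-order multiplicative terms with smooth coefficients, no genuine obstacle appears; the verification of the bracket-generating condition is the substantive step, and it is immediate from the invertibility of $M$.
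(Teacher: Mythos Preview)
Your proposal is correct and follows essentially the same argument as the paper: both verify H\"ormander's bracket condition using the diffusion fields $X_i=\sqrt{\gamma/\beta}\,\partial_{p_i}$, recover the $\partial_{q_j}$ directions from $[X_i,X_0]$ via the invertibility of $M$, and then obtain $\partial_t$ by subtraction, with the adjoint handled analogously. Your additional remark that the zeroth-order multiplicative term in the adjoint is irrelevant for the rank condition is a useful clarification not spelled out in the paper.
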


\begin{proof}
We use H\"ormander's criterion, and present the proof for $\partial_t + \cA_0 + \eta \cA_1$. The proof for the adjoint of this operator is similar. Define $X_0 = \partial_t + M^{-1}p^T (\nabla_q-\gamma \nabla_p) - \nabla V^T\nabla_p + \eta \cA_1$ and 
\[
X_{1,i} = \sqrt{\frac{\gamma}{\beta}} \partial_{p_i}. 
\]
These operators are chosen so that  
\[
\partial_t + \cA_0 + \eta \cA_1 = X_0 + \sum_{i=1}^d X_{1,i}^\dagger X_{1,i},
\]
In addition, denoting by $[L_1,L_2] = L_1L_2 - L_2L_1$ the commutator between two operators $L_1$ and~$L_2$, a simple computation shows that
\[
[X_{i,1},X_0] = \sqrt{\frac{\gamma}{\beta}} \sum_{j = 1}^d \left(M^{-1}\right)_{ij} (\partial_{q_j}-\gamma \partial_{p_j}),
\]
from which derivatives in the directions $q_j$ are recovered. Linear combinations with $X_0$ and $X_{1,i}$ finally allow to recover $\partial_t$, so that the Lie algebra generated by $X_0$, $X_{1,i}$ and $[X_{1,i},X_0]$ is full.
\end{proof}

\paragraph{Proof of the technical results.}
We conclude this section by giving the proofs of the technical lemmas used above.

\begin{proof}[Proof of Lemma~\ref{lem:Lyapunov}]
A simple computation shows that 
\begin{equation}
  \label{eq:evolution_P_one_step}
  P_{k+1}^\eta = \rme^{-\gamma T M^{-1}} P_k^\eta + \mathscr{F}_k + \mathscr{G}_k,
\end{equation}
with
\[
\mathscr{F}_k = \int_{0}^{T} \rme^{-\gamma (T-s) M^{-1}} \Big(-\nabla V(q^\eta_{kT+s}) + \eta F(s,q^\eta_{kT+s})\Big) \, ds
\]
and
\[
\mathscr{G}_k = \sqrt{\frac{2\gamma}{\beta}}\int_{0}^{T} \rme^{-\gamma (T-s) M^{-1}} dW_{kT+s}.
\]
Note that the martingale $\mathscr{G}_k$ is distributed according to a Gaussian with mean~0 and covariance $M(1-\exp(-2\gamma T M^{-1}))/\beta$, hence has finite moments of all orders; while a simple uniform bound on $\mathscr{F}_k$ is for instance 
\begin{equation}
  \label{eq:bound_drift_one_step}
  |\mathscr{F}_k| \leq \mathfrak{F}_\eta = T( \| \nabla V\|_{L^\infty(\mathcal{M})} + |\eta| \| F \|_{L^\infty(T\mathbb{T} \times \mathcal{M})} )\leq \mathfrak{F}_{\eta_*}.
\end{equation}
Note first that
\[
\Lin(Q_{k+1}^\eta,P_{k+1}^\eta) = 1 + \left( \left|\rme^{-\gamma T M^{-1}} P_k^\eta + \mathscr{F}_k + \mathscr{G}_k\right|^2 \right)^n. 
\]
Now, introducing $0 < \alpha = \rme^{-\gamma T/m} < 1$ where 
\begin{equation}
  \label{eq:bound_mass_matrix}
  \frac1m \, \mathrm{Id} \leq M \leq m \, \mathrm{Id},
\end{equation}
it holds
\begin{align}
\left|\rme^{-\gamma T M^{-1}} P_k^\eta + \mathscr{F}_k + \mathscr{G}_k\right|^2 & \leq \alpha^2 \left|P_k^\eta\right|^2 + \alpha \mathfrak{F}_\eta \left|P_k^\eta\right| + \mathfrak{F}_\eta^2 + \mathscr{G}_k^2 + 2 \mathscr{G}_k^T \left( \rme^{-\gamma T M^{-1}} P_k^\eta + \mathscr{F}_k\right) \nonumber \\ 
& \leq \alpha^2 (1 + \eps) \left|P_k^\eta\right|^2 + \left(2+\frac{1}{4\eps}\right)\mathfrak{F}_\eta^2 + 2\left|\mathscr{G}_k\right|^2 + 2 \mathscr{G}_k^T \rme^{-\gamma T M^{-1}} P_k^\eta, \label{eq:bound_n2}
\end{align}
for any $\eps > 0$, chosen so that $\alpha^2(1+\eps)<1$. Denoting by $\mathcal{F}_k$ the filtration induced by the Markov chain up to the $k$th step, it holds
\[
\mathbb{E}\left[ \mathscr{G}_k^T P_k^\eta \Big| \, \mathcal{F}_k\right] = 0,
\]
so that
\[
\mathbb{E}\left[\left. \Li_2(Q_{k+1}^\eta,P_{k+1}^\eta) \right| \, \mathcal{F}_k\right] \leq \alpha^2 (1+\eps) \Li_2(Q_{k}^\eta,P_{k}^\eta) + C_\eps
\]
for some constant $C_\eps > 0$ which can be chosen to be independent on $\eta \in [\eta_*,\eta_*]$ (but depending on~$\eta_*$ of course). This gives the result in the case $n=2$.

For a general index $n \geq 2$, we take the $n$th power of the bound~\eqref{eq:bound_n2}. The leading term is $\alpha^{2n}(1+\eps)^n$, which is still strictly smaller than~1 with the above choice of~$\eps$. Terms with odd powers of $\mathscr{G}_k$ vanish when taking the expectation. The expectation of the remainder is a linear combination of terms $(P_k^\eta)^{2k}$ with $k \leq n-1$. Therefore,
\[
\mathbb{E}\left[\left. \Lin(Q_{k+1}^\eta,P_{k+1}^\eta) \right| \, \mathcal{F}_k\right] \leq \alpha^{2n}(1+\eps)^n|P_k^\eta|^{2n} + \mathcal{P}(|P_k^\eta|),
\]
where $\mathcal{P}$ is a polynomial of degree at most~$2(n-1)$, with bounded, positive coefficients. As $|P_k^\eta| \to +\infty$, 
\[
\frac{\mathcal{P}(|P_k^\eta|)}{|P_k^\eta|^{2n}} \to 0.
\]
This shows that, for any $\delta > 0$, there exists $C_\delta > 0$ such that $\mathcal{P}(|P_k^\eta|) \leq \delta |P_k^\eta|^{2n} + C_\delta$ (consider a radius $R_\delta > 0$ for which $\mathcal{P}(r)/r^{2n} \leq \delta$ when $r \in [0,R_\delta]$ and define $C_\delta$ as the supremum of~$\mathcal{P}$ over~$[0,R_\delta]$). The conclusion easily follows upon choosing $\delta > 0$ sufficiently small.
\end{proof}

\begin{proof}[Proof of Lemma~\ref{lem:minorization}]
We use the same strategy as in the proof of~\cite[Lemma~5]{LMS13}. Note first that it is sufficient to prove the result for Borel sets~$B = B_q \times B_p \subset \mathcal{M}$ where $B_q \subset \mathcal{M}$ while $B_p \subset \mathbb{R}^{dN}$. The idea is to relate the dynamics~\eqref{eq:Langevin} to a Langevin dynamics without forces (\textit{i.e.} $\nabla V + \eta F = 0$) for which it is not difficult to construct a minorizing measure. More precisely, a simple computation based on an equality similar to~\eqref{eq:evolution_P_one_step} shows that
\begin{equation}
  \label{eq:evolution_Q_one_step}
  Q^\eta_{k+1} = Q^\eta_k + \frac{1-\alpha_T}{\gamma} P^\eta_k + \widetilde{\mathscr{F}}_k + \widetilde{\mathscr{G}}_k,
\end{equation}
with $\alpha_T = \exp(-\gamma T M^{-1})$, 
\[
\widetilde{\mathscr{G}}_k = \sqrt{\frac{2\gamma}{\beta}} \int_0^T \int_{0}^{s} \rme^{-\gamma (s-r) M^{-1}} M^{-1} dW_{kT+r} \, ds,
\]
and where
\[
\widetilde{\mathscr{F}}_k = \int_0^T \int_{0}^{s} \rme^{-\gamma (s-r) M^{-1}} M^{-1}\Big(-\nabla V(q^\eta_{kT+r}) + \eta F(r,q^\eta_{kT+r}\Big) \, dr \, ds
\]
is uniformly bounded by $m T \mathfrak{F}_\eta$ (where $\mathfrak{F}_\eta$ and $m$ are defined in~\eqref{eq:bound_drift_one_step} and~\eqref{eq:bound_mass_matrix}, respectively). The random variables $\widetilde{\mathscr{G}}_k,\mathscr{G}_k$ are correlated centered Gaussian random variables, independent of $\widetilde{\mathscr{G}}_l,\mathscr{G}_l$ for $l \neq k$. Their covariance reads 
\[
\mathpzc{V} = \mathbb{E}\left[ \left(\widetilde{\mathcal{G}}_k,\mathcal{G}_k\right)^T \left(\widetilde{\mathcal{G}}_k,\mathcal{G}_k\right) \right] = \begin{pmatrix} \mathpzc{V}_{qq} & \mathpzc{V}_{qp} \\ \mathpzc{V}_{qp} & \mathpzc{V}_{pp} \end{pmatrix},
\]
with
\[
\left\{ 
\begin{aligned}
\mathpzc{V}_{qq} & = \frac{1}{\beta \gamma}\left(2T - \frac{M}{\gamma}\left(3 - 4 \, \alpha_T + \alpha_T^2\right)\right),\\
\mathpzc{V}_{qp} & = \frac{M}{\beta \gamma}\left(1-\alpha_T\right)^2, \\
\mathpzc{V}_{pp} & = \frac{M}{\beta} \left(1-\alpha_T^2\right).
\end{aligned}
\right.
\]
Therefore, in view of~\eqref{eq:evolution_P_one_step} and~\eqref{eq:evolution_Q_one_step},
\[
\mathbb{P}\Big( \left(Q^\eta_{k+1},P^\eta_{k+1}\right) \in B \, \Big| \, \left|P^\eta_k\right| \leq p_{\rm max} \Big) = \mathbb{P}\Big( \left(\widetilde{\mathscr{G}}_k,\mathscr{G}_k\right) \in (B_q - \mathscr{Q}_k) \times (B_p - \mathscr{P}_k) \, \Big| \, \left|P^\eta_k\right| \leq p_{\rm max} \Big),
\]
where $\mathscr{Q}_k = Q_k^\eta + (1-\alpha_T) P^\eta_k/\gamma + \widetilde{\mathscr{F}}_k$ and $\mathscr{P}_k = \alpha_T P_k^\eta + \mathscr{F}_k$ are uniformly bounded in norm by a constant~$R>0$ which depends only on~$p_{\rm max}$ and $\eta_*$. The proof is therefore concluded by defining the probability measure
\[
\nu(B_q \times B_p) = Z_R^{-1} \inf_{|\mathscr{Q}|, |\mathscr{P}| \leq R} \int_{(B_q - \mathscr{Q}) \times (B_p-\mathscr{P})} \exp\left( -\frac{x^T \mathpzc{V}^{-1} x}{2} \right) \, dx,
\]
with $Z_R > 0$ and $\kappa  = (2\pi)^{-dN} \mathrm{det}\left(\mathpzc{V}\right)^{-1/2} Z_R$.
\end{proof}

\begin{proof}[Proof of Lemma~\ref{lem:smoothness_UT}]
To prove the smoothness of the transition kernel, we use the same argument as in the proof of~\cite[Lemma~2.2]{Talay02} and rewrite, thanks to Girsanov's formula, the nonequilibrium evolution $(q_t^\eta,p_t^\eta)$ solution of~\eqref{eq:Langevin} in terms of the equilibrium evolution~$(q_t^0,p_t^0)$ given by~\eqref{eq:Langevin_eq}. Starting from an initial condition~$(q_0,p_0)$ at time~$t=0$ and integrating over one period~$T$, the Girsanov transform (see \textit{e.g.}~\cite{KS91}) shows that the law of $(q_t^\eta,p_t^\eta)$ is absolutely continuous with respect to the law of~$(q_t^0,p_t^0)$ and that, for any bounded, measurable function~$f$,  
\[
\mathbb{E}_{(q_0,p_0)}\left[ f\left(q_T^\eta,p_T^\eta\right)\right] = \mathbb{E}_{(q_0,p_0)}\left[ f\left(q_T^0,p_T^0\right) \, Z_T^0\right], 
\]
where
\[
Z_T^0 = \exp\left(\eta \sqrt{\frac{\beta}{2\gamma}} \int_0^T F(t,q_t^0) \cdot dW_t - \frac{\eta^2 \beta}{4\gamma} \int_0^T |F(t,q_t^0)|^2 dt \right).
\]
Now, standard hypoellipticity results show that the random variable~$(q_T^0,p_T^0)$ has a smooth density with respect to the Lebesgue measure~$dq\,dp$ on~$\mathcal{M} \times \RR^d$ (see for instance the discussion in~\cite[Section~7]{rey-bellet}). We conclude that~$(q_T^\eta,p_T^\eta)$ also has a density density with respect to the Lebesgue measure.
% law p_T^\eta( (q_0,p_0), (q',p')) = p_T^0( (q_0,p_0), (q',p')) E(Z_t^0 | (q_0,p_0) & (q_T^0,p_T^0)=(q',p') )

The positivity of the transition kernel can be proved using a standard control argument, see for instance~\cite[Section~6]{rey-bellet} or~\cite{MSH02,Talay02}.
\end{proof}

\subsection{Proof of Proposition~\ref{prop:inv_meas_charact}}
\label{sec:inv_meas_proof_charact}

The idea of the proof is to show, by a perturbative construction, the existence of a solution to the extended Fokker-Planck equation~\eqref{eq:FokkerPlanck}. Proposition~\ref{prop:inv_meas_cv} then gives the uniqueness. We follow the strategy summarized in~\cite{HDR} and already used in~\cite{JS12}. See also~\cite{KomOlla2005}.

\paragraph{Series expansions of the invariant measure.}
Instead of the flat space $L^2(\mathcal{E})$, we consider as a reference space the Hilbert space $\cH = L^2(\mathcal{E},\mu) \cap \{ \mathbf{1} \}^\perp$, the space of functions $f(t,q,p)$ such that 
\[
\int_\mathcal{E} |f(t,q,p)|^2 \mu(q,p) \, dq\,dp\,dt < +\infty, 
\qquad 
\int_\mathcal{E} f(t,q,p) \, \mu(q,p) \, dq\,dp\,dt = 0.
\]
The Fokker-Planck equation~\eqref{eq:FokkerPlanck} can be rewritten as
\[
(-\partial_t+\cA_0^* + \eta \cA_1^*)\rho_\eta = 0, 
\qquad 
\psi_\eta = \rho_\eta \mu,
\qquad 
\int_\cE \rho_\eta \mu = 1,
\]
where adjoints are now taken on~$\cH$. Formally,
\begin{equation}
\label{eq:formal_expansion_varrho}
\rho_\eta = \Big(\mathrm{Id} + \eta (-\partial_t+\cA_0^*)^{-1} \cA_1^* \Big)^{-1} \mathbf{1} = \sum_{n=0}^{+\infty} (-\eta)^n \left[ (-\partial_t+\cA_0^*)^{-1} \cA_1^* \right]^n \mathbf{1}
\end{equation}
To make this argument rigorous, we use two technical results (proved at the end of this section).

\begin{lemma}
  \label{lem:pt_A0_bounded}
  The operator $\partial_t + \cA_0$ is invertible on~$\cH$.
\end{lemma}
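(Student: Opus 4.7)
The plan is to reduce the inversion of $\partial_t + \cA_0$ on $\cH$ to a family of resolvent problems $(\ri n\omega + \cA_0)^{-1}$ indexed by $n \in \bZ$, via a Fourier series in the time variable with $\omega = 2\pi/T$, and then to establish uniform-in-$n$ bounds on these resolvents using the hypocoercivity of the Langevin generator.

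First, any $g \in \cH$ expands as $g(t,q,p) = \sum_{n \in \bZ} g_n(q,p)\, e_n(t)$ with $g_n \in L^2(\mu)$, and the zero-mean constraint $\int_\cE g\mu = 0$ translates into $\int g_0\,\mu = 0$, i.e.\ $g_0 \in L^2(\mu) \cap \{\mathbf{1}\}^\perp$. A candidate solution $f = \sum_n f_n\, e_n$ of $(\partial_t + \cA_0)f = g$ in $\cH$ must therefore satisfy $(\ri n\omega + \cA_0)\, f_n = g_n$ for every $n$, together with $f_0 \perp \mathbf{1}$ in $L^2(\mu)$.

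Each mode is then treated separately. For $n=0$, the hypocoercivity of the equilibrium Langevin generator (H\'erau--Nier, Villani, Dolbeault--Mouhot--Schmeiser) yields an exponential decay $\|\rme^{t\cA_0}\|_{\mathcal{B}(L^2(\mu)\cap\{\mathbf{1}\}^\perp)} \leq C\rme^{-\lambda t}$, and Laplace transformation gives the bounded inverse $f_0 = -\int_0^\infty \rme^{t\cA_0} g_0\,dt$, with $\|f_0\|_{L^2(\mu)} \leq (C/\lambda)\|g_0\|_{L^2(\mu)}$. For $n \neq 0$, split $g_n = \overline{g}_n \mathbf{1} + g_n^\perp$ with $g_n^\perp \perp \mathbf{1}$: on the span of $\mathbf{1}$, $(\ri n\omega + \cA_0)^{-1}\mathbf{1} = (\ri n\omega)^{-1}\mathbf{1}$, of norm at most $1/\omega$; on $\{\mathbf{1}\}^\perp$, the analogous identity $(\ri n\omega + \cA_0)^{-1} g_n^\perp = -\int_0^\infty \rme^{\ri n\omega t}\rme^{t\cA_0} g_n^\perp\,dt$ converges absolutely and delivers the same $C/\lambda$ control. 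This yields a uniform bound $\|f_n\|_{L^2(\mu)} \leq K \|g_n\|_{L^2(\mu)}$ with $K$ independent of~$n$.

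Finally, Parseval on $T\bT$ reassembles $f(t,q,p) = \sum_n f_n(q,p)\, e_n(t)$ as an element of $L^2(\cE,\mu)$ with $\|f\|^2_{L^2(\cE,\mu)} = T\sum_n \|f_n\|^2_{L^2(\mu)} \leq K^2 \|g\|^2_{L^2(\cE,\mu)}$. Membership $f \in \cH$ is immediate because $f_0 \perp \mathbf{1}$ and $\int_{T\bT} e_n\,dt = 0$ for $n \neq 0$; the identity $(\partial_t + \cA_0)f = g$ holds modewise in the distributional sense, and the hypoellipticity of $\partial_t + \cA_0$ already established in the paper lifts it to classical sense. Injectivity follows from the same estimates applied with $g = 0$. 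The main technical obstacle is the uniform resolvent bound on the imaginary axis: because $\cA_0$ is non-self-adjoint, the naive estimate $1/\mathrm{dist}(z,\sigma(\cA_0))$ is unavailable and one must genuinely invoke the hypocoercive semigroup decay, which is by now standard for the Langevin generator under the smoothness and periodicity assumptions in force.
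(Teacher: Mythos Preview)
Your proof is correct and follows essentially the same architecture as the paper's: Fourier decompose in the time variable, reduce $(\partial_t+\cA_0)f=g$ to the modewise equations $(\ri n\omega+\cA_0)f_n=g_n$, invoke a uniform-in-$n$ resolvent bound, and reassemble by Parseval. The only difference is that the paper outsources the uniform resolvent bound to its Lemma~\ref{lem:unif_resolvent_L2} (based on the spectral results of Eckmann--Hairer and H\'erau--Nier), whereas you derive it more explicitly from hypocoercive $L^2(\mu)$ semigroup decay via the Laplace-transform identity $(\ri n\omega+\cA_0)^{-1}=-\int_0^\infty \rme^{\ri n\omega t}\rme^{t\cA_0}\,dt$ on $\{\mathbf 1\}^\perp$; both justifications are standard and your splitting of $g_n$ into its constant part and its mean-zero part for $n\neq 0$ is a clean way to handle the full $L^2(\mu)$.
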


\begin{lemma}
  \label{lem:A1_T0_bounded}
  The operator $\cA_1$ is $(\partial_t+\cA_0)$-bounded on~$L^2(\mathcal{E},\mu)$: There exists $a,b > 0$ such that, for all smooth functions~$f$,
  \[
  \| \cA_1 f \|_{L^2(\mathcal{E},\mu)} \leq a \| (\partial_t+\cA_0) f \|_{L^2(\mathcal{E},\mu)} + b \| f \|_{L^2(\mathcal{E},\mu)}.
  \]
\end{lemma}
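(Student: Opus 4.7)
Since $F$ is smooth and $(t,q)$-periodic, it is bounded on $T\mathbb{T}\times\mathcal{M}$. Writing $K=\|F\|_{L^\infty}$, the pointwise estimate $|\cA_1 f|\leq K|\nabla_p f|$ gives
\[
\|\cA_1 f\|_{L^2(\cE,\mu)} \leq K\,\|\nabla_p f\|_{L^2(\cE,\mu)},
\]
so the whole problem reduces to controlling $\|\nabla_p f\|_{L^2(\cE,\mu)}$ by $\|(\partial_t+\cA_0)f\|_{L^2(\cE,\mu)}$ plus $\|f\|_{L^2(\cE,\mu)}$.

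The plan is to exploit the standard decomposition of $\cA_0$ on $L^2(\mu)$ into its antisymmetric Hamiltonian part $\cA_{\rm ham}=M^{-1}p\cdot\nabla_q-\nabla V\cdot\nabla_p$ and its symmetric Ornstein--Uhlenbeck part $\cA_{\rm FD}=\gamma(-M^{-1}p\cdot\nabla_p+\beta^{-1}\Delta_p)$. A direct integration by parts (using that $\mu$ is preserved by the Hamiltonian flow, and that $\mu$ is a Gaussian in $p$ for which $\nabla_p^{*,\mu}=-\nabla_p+\beta M^{-1}p$) gives for smooth periodic functions
\[
\langle \cA_{\rm ham}f,f\rangle_{L^2(\mu)}=0,\qquad \langle\cA_{\rm FD}f,f\rangle_{L^2(\mu)}=-\frac{\gamma}{\beta}\|\nabla_p f\|_{L^2(\mu)}^2.
\]
Moreover, $\partial_t$ is antisymmetric on $L^2(T\mathbb{T})$ because of the periodic boundary conditions in time. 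Integrating against $\mu(q,p)\,dq\,dp\,dt$ on $\cE$ yields the key identity
\[
-\langle(\partial_t+\cA_0)f,\,f\rangle_{L^2(\cE,\mu)}=\frac{\gamma}{\beta}\,\|\nabla_p f\|_{L^2(\cE,\mu)}^2 .
\]

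To finish, I would apply Cauchy--Schwarz to the left-hand side, obtaining
\[
\|\nabla_p f\|_{L^2(\cE,\mu)}^2 \leq \frac{\beta}{\gamma}\,\|(\partial_t+\cA_0)f\|_{L^2(\cE,\mu)}\,\|f\|_{L^2(\cE,\mu)},
\]
and then Young's inequality $\sqrt{xy}\leq \tfrac{1}{2}(\alpha x+\alpha^{-1}y)$ to split this into a linear combination of $\|(\partial_t+\cA_0)f\|_{L^2(\cE,\mu)}$ and $\|f\|_{L^2(\cE,\mu)}$; combined with $\|\cA_1 f\|\leq K\|\nabla_p f\|$ this yields the desired bound with constants $a,b$ that can be made explicit in terms of $K,\beta,\gamma$ and a free parameter $\alpha>0$.

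The only subtlety is that the identities above are first established on smooth compactly supported (in $p$) or Schwartz-class functions in the $p$ variable, and then extended to the natural domain on which $(\partial_t+\cA_0)f\in L^2(\cE,\mu)$. I expect this density/domain step to be the main technical point, but it is routine for hypoelliptic operators of this kind (it can be handled, e.g., by truncation in $p$ and mollification, using the Gaussian decay of $\mu$ to justify the absence of boundary terms at infinity).
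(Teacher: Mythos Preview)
Your proposal is correct and follows essentially the same approach as the paper: bound $\|\cA_1 f\|$ by $\|F\|_{L^\infty}\|\nabla_p f\|$, use the identity $-\langle(\partial_t+\cA_0)f,f\rangle_{L^2(\cE,\mu)}=\tfrac{\gamma}{\beta}\|\nabla_p f\|^2$ coming from the antisymmetry of $\partial_t+\cA_{\rm ham}$ and the explicit form of the Ornstein--Uhlenbeck part, then apply Cauchy--Schwarz. The density/domain discussion you flag is not needed here, since the statement is only claimed for smooth~$f$.
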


We conclude that $\cA_1 (\partial_t + \cA_0)^{-1}$ is a bounded operator from~$\cH$ to~$L^2(\mathcal{E},\mu)$. Its adjoint $(-\partial_t + \cA_0^*)^{-1} \cA_1^*$ is therefore bounded on~$L^2(\mathcal{E},\mu)$. Moreover, a simple computation shows that $\mathrm{Ran}(\cA_1^*) \subset \cH$: indeed,
\[
\int_\mathcal{E} (\cA_1^* f) \mu = \int_\mathcal{E} f (\cA_1 \mathbf{1}) \mu = 0.
\] 
Therefore, the operator $(-\partial_t + \cA_0^*)^{-1} \cA_1^*$ is bounded on~$\cH$. This proves the validity of~\eqref{eq:formal_expansion_varrho} for $|\eta|$ smaller than the spectral radius of $(-\partial_t + \cA_0^*)^{-1} \cA_1^*$.

\paragraph{Proofs of technical results.}
We conclude this section with the proofs of Lemmas~\ref{lem:pt_A0_bounded} and~\ref{lem:A1_T0_bounded}. We first recall the following result, which is a simple consequence of the results in~\cite{EH03,HN04}.
% see e.g. proof of Corollary 4.2 in EH03

\begin{lemma}
  \label{lem:unif_resolvent_L2}
  The operators $\ri \nu + \cA_0$ are invertible on $L^2(\mathcal{M}\times\RR^d,\mu)$ for any $\nu \neq 0$, with inverses uniformly bounded away from~0: For any $\nu_* > 0$, there exists $C > 0$ such that
  \[
  \sup_{|\nu| \geq \nu_*} \left\| (\ri \nu + \cA_0)^{-1} \right\|_{\mathcal{B}(L^2(\mu))} \leq C,
  \]
  where $\mathcal{B}(L^2(\mu))$ is the Banach space of bounded linear operators on~$L^2(\mu)$. In addition, $\cA_0$ is invertible on $L^2(\mathcal{M}\times\RR^d,\mu) \backslash \mathrm{Ker}(\cA_0)$ with $\mathrm{Ker}(\cA_0) = \mathrm{Span}(\mathbf{1})$.
\end{lemma}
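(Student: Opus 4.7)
The plan is to reduce everything to the single hypocoercivity ingredient from~\cite{EH03,HN04}: there exists a Hilbert norm $\|\cdot\|_\sharp$ on $L^2(\mathcal{M}\times\RR^d,\mu)$, equivalent to the ambient $L^2(\mu)$-norm, and a constant $\lambda>0$ such that
$$
\mathrm{Re}\langle -\cA_0 f, f\rangle_\sharp \geq \lambda \|f\|_\sharp^2 \qquad \text{for all } f \in \mathbf{1}^\perp \cap \mathrm{Dom}(\cA_0),
$$
and such that the adjoint $\cA_0^*$ satisfies an identical estimate on $\mathbf{1}^\perp$. All three conclusions of the lemma flow from this one inequality.

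\textbf{Kernel and invertibility of $\cA_0$.} Clearly $\mathbf{1}\in \mathrm{Ker}(\cA_0)$. Conversely, invariance of $\mu$ yields $\cA_0^*\mathbf{1}=0$, so $\mathbf{1}^\perp$ is stable under $\cA_0$; the coercivity then forces any kernel element of $\cA_0$ lying in $\mathbf{1}^\perp$ to vanish, whence $\mathrm{Ker}(\cA_0)=\mathrm{Span}(\mathbf{1})$. The same coercivity gives injectivity and closed range of $\cA_0$ on $\mathbf{1}^\perp$; applying the analogous estimate to $\cA_0^*$ excludes non-zero vectors orthogonal to the range, so $\cA_0$ is a bijection from $\mathbf{1}^\perp\cap\mathrm{Dom}(\cA_0)$ onto $\mathbf{1}^\perp$.

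\textbf{Uniform resolvent bound.} For $\nu\neq 0$ and $g \in L^2(\mu)$, decompose $g = \langle g,\mathbf{1}\rangle\mathbf{1} + g^\perp$ and seek $f = c\mathbf{1} + f^\perp$ with $(\ri\nu + \cA_0)f = g$. Using the $\mathbf{1}^\perp$-stability of $\cA_0$, the equation splits as $\ri\nu c = \langle g,\mathbf{1}\rangle$ and $(\ri\nu+\cA_0)f^\perp = g^\perp$. The first gives $|c|\leq \nu_*^{-1}\|g\|_{L^2(\mu)}$ for $|\nu|\geq \nu_*$. For the second, since $\ri\nu \|f^\perp\|_\sharp^2$ is purely imaginary, pairing the equation with $f^\perp$ in the $\sharp$-inner product and taking real parts yields
$$
\lambda \|f^\perp\|_\sharp^2 \leq \mathrm{Re}\langle -\cA_0 f^\perp, f^\perp\rangle_\sharp = \mathrm{Re}\langle -(\ri\nu+\cA_0)f^\perp, f^\perp\rangle_\sharp \leq \|g^\perp\|_\sharp \|f^\perp\|_\sharp,
$$
hence $\|f^\perp\|_\sharp \leq \lambda^{-1}\|g^\perp\|_\sharp$, a bound independent of $\nu$. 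Invertibility of $\ri\nu + \cA_0$ on $\mathbf{1}^\perp$ is obtained as in the previous paragraph by also invoking the coercivity of its adjoint $-\ri\nu + \cA_0^*$. Combining the two pieces and using the equivalence of $\|\cdot\|_\sharp$ with $\|\cdot\|_{L^2(\mu)}$ gives the required uniform bound on $\|(\ri\nu+\cA_0)^{-1}\|_{\mathcal{B}(L^2(\mu))}$ over $|\nu|\geq \nu_*$.

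\textbf{Main obstacle.} The only non-routine step is importing the hypocoercivity estimate from~\cite{EH03,HN04} in the precise form stated above: a modified norm equivalent to the $L^2(\mu)$-norm, compatible with the orthogonal decomposition $L^2(\mu) = \mathrm{Span}(\mathbf{1})\oplus \mathbf{1}^\perp$, and for which both $\cA_0$ and $\cA_0^*$ are coercive on $\mathbf{1}^\perp$ with a common constant. The second requirement can be enforced by symmetrizing the construction under the momentum-reversal involution $\cR\varphi(q,p)=\varphi(q,-p)$, which is an $L^2(\mu)$-isometry and intertwines $\cA_0$ with $\cA_0^*$. Once this is secured, the imaginary shift $\ri\nu$ is absorbed harmlessly, as it drops out of real parts, and the proof reduces to the elementary energy estimate above.
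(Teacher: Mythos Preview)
Your proposal is correct and is precisely the kind of argument the paper has in mind: the paper does not actually prove this lemma but merely states it as ``a simple consequence of the results in~\cite{EH03,HN04}''. Your write-up fleshes out that citation by extracting from those references the hypocoercive estimate in an equivalent $L^2(\mu)$-norm and then running the elementary energy argument in which the imaginary shift $\ri\nu$ drops out of real parts; this is the standard way to turn hypocoercivity into the uniform resolvent bound stated here, and your use of the momentum-reversal symmetry $\cA_0^*=\cR\cA_0\cR$ (already invoked in Remark~\ref{rem:gradient}) to transfer the estimate to the adjoint is exactly the right device.
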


\begin{proof}[Proof of Lemma~\ref{lem:pt_A0_bounded}]
Consider, for a given function $g \in \cH$, the equation $(\partial_t + \cA_0)f = g$. We decompose $f$ and $g$ in Fourier series as 
\[
f = \sum_{n \in \mathbb{Z}} f_n e_n, \qquad g = \sum_{n \in \mathbb{Z}} g_n e_n
\]
with $e_n(t) = \rme^{i n\omega t}$, $g_n \in L^2(\mathcal{M}\times\RR^d,\mu)$ for all $n \in \mathbb{Z}$ and
\[
\int_{\mathcal{M}\times\RR^d} g_0 \, \mu = 0.
\]
Therefore, $(\partial_t + \cA_0)f = g$ can be rewritten as
\[
(\ri n \omega + \cA_0) f_n = g_n.
\]
When $n \neq 0$, we use Lemma~\ref{lem:unif_resolvent_L2} to obtain $f_n = (\ri n \omega + \cA_0)^{-1} g_n$, while $f_0 = \cA_0^{-1} g_0$ is a well defined element of $L^2(\mathcal{M}\times\RR^d,\mu) \cap \{ \mathbf{1} \}^\perp$ since $\cA_0$ is invertible on $L^2(\mathcal{M}\times\RR^d,\mu) \cap \{ \mathbf{1} \}^\perp$. Moreover, there exists a constant $C > 0$ such that $\| f_n \|_{L^2(\mathcal{M}\times\RR^d,\mu)} \leq C \| g_n \|_{L^2(\mathcal{M}\times\RR^d,\mu)}$, from which we deduce that $\| f \|_{L^2(\mathcal{E},\mu)} \leq C \| g \|_{L^2(\mathcal{E},\mu)}$. This proves that $\partial_t+\cA_0$ has a bounded inverse on $\cH$.
\end{proof}

\begin{proof}[Proof of Lemma~\ref{lem:A1_T0_bounded}]
For a smooth function $f$ defined on~$\mathcal{E}$, using the fact that the force $F$ is uniformly bounded (since $F(t,q)$ is smooth and its arguments belong the bounded set $T\mathbb{T} \times \mathcal{M}$), 
\[
\begin{aligned}
\| \cA_1 f\|^2_{L^2(\mathcal{E},\mu)} & \leq \| F \|^2_{L^\infty(T\mathbb{T} \times \mathcal{M})} \| \nabla_p f\|_{L^2(\mathcal{E},\mu)}^2 = -\| F \|^2_{L^\infty(T\mathbb{T} \times \mathcal{M})} \left\langle (\partial_t + \cA_0)f, f\right\rangle_{L^2(\mathcal{E},\mu)} \\
& \leq \| F \|^2_{L^\infty(T\mathbb{T} \times \mathcal{M})} \| f\|_{L^2(\mathcal{E},\mu)} \| (\partial_t + \cA_0) f\|_{L^2(\mathcal{E},\mu)},
\end{aligned}
\]
from which the conclusion easily follows.
\end{proof}

\subsection{Proof of Proposition~\ref{prop:LR_velocity}}
\label{sec:proof_LR_velocity}

We first make precise the equation satisfied by the linear response term $\varrho_1$. From~\eqref{eq:formal_expansion_varrho},
\begin{equation}
  \label{eq:eq_varrho_1}
  (-\partial_t+\cA_0^*)\varrho_1 = -\cA_1^* \mathbf{1} = -\beta p^T M^{-1} F,  
  \qquad 
  \int_\mathcal{E} \varrho_1 \, \mu = 0.
\end{equation}
Decomposing both sides of the equation in Fourier series, we obtain
\begin{equation}
  \label{eq:1OPT}
  (-\ri n \omega + \cA_0^*) \varrho_{1,n} = (\ri n \omega + \cA_0)^* \varrho_{1,n} = -\beta p^T M^{-1} F_n, 
  \qquad
  \int_\mathcal{E} \varrho_{1,n} \, \mu = 0.
\end{equation}
This equation is well posed for $n \neq 0$ in view of Lemma~\ref{lem:unif_resolvent_L2}. For $n=0$, it is also well posed since $p^TM^{-1}F_0$ has a vanishing average with respect to~$\mu(q,p)\,dq\,dp$.

Using the expression~\eqref{eq:1OPT} of the first order correction in~$\eta$ of the invariant measure, the linear response of the average velocity is 
\[
\begin{aligned}
\mathscr{V}(t) & = \lim_{\eta \to 0} \frac{v^\eta(t)}{\eta} \\
& = \sum_{n \in \bZ} e_n(t) \int_\M \int_{\mathbb{R}^d}
\left(M^{-1}p\right) \varrho_{1,n}(q,p) \, \mu(q,p) \, dq \, dp \\
& = -\beta \sum_{n \in \bZ} e_n(t) \int_\M \int_{\mathbb{R}^d}
\left(M^{-1}p\right) \left[ \left(-\ri n\omega + \cA_0^*\right)^{-1} (p^T M^{-1} F_n)\right] \mu(q,p) \, dq \, dp \\
& = -\beta \sum_{n \in \bZ} e_n(t) \int_\M \int_{\mathbb{R}^d}
\left[ \left(\ri n\omega + \cA_0\right)^{-1}\left(M^{-1}p\right)\right] \left(p^T M^{-1} F_{n}(q)\right) \, \mu(q,p) \, dq \, dp.
\end{aligned}
\]

At this stage, we would like to rewrite the inverse operators as time integrals of the semigroup, in order to introduce equilibrium correlations. This is possible in the $H^1(\mu)$ norm for instance, thanks to hypocoercivity estimates (see~\cite{Villani} for background information on the theory of hypocoercivity). In particular, it is proved in~\cite{HP08} that there exist $C,\lambda > 0$ such that $\| \mathrm{e}^{s \cA_0}\|_{\mathcal{B}(\widetilde{H}^1(\mu))} \leq C \rme^{-\lambda s}$ where 
\[
\widetilde{H}^1(\mu) = \left\{ f \in H^1(\mu) \left| \int_{\mathcal{M} \times \RR^d} f(q,p) \, \mu(q,p) \, dq\,dp = 0 \right.\right\}.
\]
This proves the following lemma.

\begin{lemma}
\label{lem:indep_omega}
The following equality as operators on~$\widetilde{H}^1(\mu)$ holds for any $\nu \in \RR$:
\[
%\begin{equation}
%\label{eq:TF_op}
(\ri\nu-\cA_0)^{-1} = \int_0^{+\infty} \mathrm{e}^{-\ri \nu s} \mathrm{e}^{s \cA_0} \, ds.
%\end{equation}
\]
Moreover, the operators $(\ri\nu-\cA_0)^{-1}$ are uniformly bounded on~$\widetilde{H}^1(\mu)$, with a bound independent of $\nu \in \mathbb{R}$.
\end{lemma}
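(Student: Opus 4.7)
The plan is to view the right-hand side as a Laplace transform of the semigroup, establish it as a bounded operator on $\widetilde{H}^1(\mu)$ by Bochner integration, and then identify it with the resolvent through the standard semigroup--resolvent calculation.

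\textbf{Step 1 (Bochner integrability and uniform bound).} The hypocoercivity estimate recalled just before the lemma gives $\|\mathrm{e}^{s\cA_0}\|_{\mathcal{B}(\widetilde{H}^1(\mu))} \leq C\mathrm{e}^{-\lambda s}$. For $f \in \widetilde{H}^1(\mu)$ the map $s \mapsto \mathrm{e}^{-\ri\nu s}\mathrm{e}^{s\cA_0}f$ is strongly continuous into $\widetilde{H}^1(\mu)$ with norm bounded by $C\mathrm{e}^{-\lambda s}\|f\|_{\widetilde{H}^1(\mu)}$, so
\[
R_\nu f := \int_0^{+\infty} \mathrm{e}^{-\ri\nu s}\mathrm{e}^{s\cA_0}f \, ds
\]
is a well-defined Bochner integral in $\widetilde{H}^1(\mu)$, and $\|R_\nu\|_{\mathcal{B}(\widetilde{H}^1(\mu))} \leq C/\lambda$ uniformly in $\nu \in \mathbb{R}$. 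This already yields the second statement of the lemma.

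\textbf{Step 2 (Identification with the resolvent).} Taking $f$ in the domain of $\cA_0$ (viewed as the generator of the $C_0$-semigroup $(\mathrm{e}^{s\cA_0})_{s\geq 0}$ on $\widetilde{H}^1(\mu)$), the curve $u(s) = \mathrm{e}^{-\ri\nu s}\mathrm{e}^{s\cA_0}f$ is $C^1$ with $u'(s) = -(\ri\nu-\cA_0)u(s)$, and $u(s) \to 0$ as $s \to +\infty$ by the hypocoercive decay. Integrating this identity between $0$ and $+\infty$ and using the closedness of $\cA_0$ to pull it outside the Bochner integral, I get $-f = -(\ri\nu-\cA_0)R_\nu f$. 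A symmetric computation starting from $\mathrm{e}^{s\cA_0}(\ri\nu-\cA_0)f$ yields $R_\nu(\ri\nu-\cA_0)f = f$. Density of $\mathrm{Dom}(\cA_0)$ in $\widetilde{H}^1(\mu)$, together with the continuity of $R_\nu$, then extends the identity $R_\nu = (\ri\nu-\cA_0)^{-1}$ to all of $\widetilde{H}^1(\mu)$.

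\textbf{Main obstacle.} The nontrivial input is the functional-analytic framework on $\widetilde{H}^1(\mu)$: one must check that $\cA_0$ does generate a $C_0$-semigroup on this closed subspace (so that its domain is dense and $\cA_0$ is closed), and that the hypocoercivity estimate of~\cite{HP08} is genuinely an estimate in the $\widetilde{H}^1$-norm rather than in some equivalent modified norm whose equivalence constants would have to be tracked. Once this is granted the remaining manipulations are standard Laplace-transform arguments; in particular, the fact that $R_\nu f \in \mathrm{Dom}(\cA_0)$ with $\cA_0 R_\nu f = \ri\nu R_\nu f - f$ can be obtained directly by computing the limit of $(\mathrm{e}^{h\cA_0}-\mathrm{Id})R_\nu f/h$ at $h = 0^+$ via a change of variables in the defining integral, providing an independent route to Step~2.
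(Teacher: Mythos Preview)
Your proposal is correct and follows exactly the route the paper intends: the paper does not give a separate proof of this lemma but simply states that the hypocoercivity estimate $\|\mathrm{e}^{s\cA_0}\|_{\mathcal{B}(\widetilde{H}^1(\mu))}\leq C\mathrm{e}^{-\lambda s}$ from~\cite{HP08} ``proves the following lemma,'' and your Steps~1--2 are the standard Laplace-transform unpacking of that claim. Your discussion of the functional-analytic obstacles (density of the domain, closedness of $\cA_0$, equivalence of norms) is more careful than anything in the paper, which treats the result as an immediate consequence of the exponential decay of the semigroup.
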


Using this result, and since $M^{-1}p$ has all his components in $\widetilde{H}^1(\mu)$, we rewrite the linear response of the velocity as
\[
\begin{aligned}
\mathscr{V}(t) & = \beta \sum_{n \in \bZ} e_n(t) \int_0^{+\infty} {\rm e}^{\ri n \omega s} \left(\int_\M \int_{\mathbb{R}^d}
\left[ {\rm e}^{s \mathcal{A}_0} \left(M^{-1}p\right) \right] \left(p^TM^{-1}F_{n}(q)\right) \mu(q,p) \, dq \, dp \right) ds \\
& = \beta \sum_{n \in \bZ}  e_n(t) \int_0^{+\infty} {\rm e}^{\ri n \omega s} \, \mathbb{E}\left[ \Big( \left(M^{-1}p_s\right) \otimes \left(M^{-1}p_0\right)\Big) F_{n}(q_0) \right] \, ds.
\end{aligned}
\]
This gives~\eqref{eq:full_LR}.

\subsection{Proof of Proposition~\ref{prop:genuine_spatial_dep}}
\label{sec:proof_genuine_spatial_dep}

Recall first that, by hypoellipticity, $\Phi_0$ is smooth. Assume, by contradiction, that $\Phi_0$ does not depend on~$q$ so that $\Phi_0 = \Phi_0(p) = (\Phi_{0,1}(p),\dots,\Phi_{0,d}(p))$. In this case, 
\begin{equation}
\label{eq:contradiction_Phi_0}
\cA_0 \Phi_{0,i} = -\nabla V \cdot \nabla_p \Phi_{0,i} + \gamma \left( -M^{-1} p \cdot \nabla_p \Phi_{0,i} + \frac1\beta \Delta_p \Phi_{0,i} \right).
\end{equation}
Since 
\begin{equation}
  \label{eq:cA_0_Phi_0,i}
  \cA_0 \Phi_{0,i} = \sum_{j=1}^d \left(M^{-1}\right)_{i,j} p_j
\end{equation}
by definition of~$\Phi_0$, the right-hand side of~\eqref{eq:contradiction_Phi_0} does not depend on~$q$. This means that $\nabla V(q) \cdot \nabla \Phi_{i,0}(p) = 0$ for all $(q,p)$. In view of the non-degeneracy condition~\eqref{eq:non_degeneracy}, this means that $\nabla \Phi_{i,0} = 0$ for all $p \in \mathbb{R}^d$, which is impossible since $\Phi_{0,i}$ cannot be constant in view of~\eqref{eq:cA_0_Phi_0,i}.

\subsection{Proof of Proposition~\ref{prop:SR}}
\label{sec:SR_proof}

The proof is an immediate consequence of the following lemma (upon replacing $\psi$ by the components $p_i$ of~$p$, and changing the sign of~$\omega$). 

\begin{lemma}
\label{lem:SR}
Consider $\psi \in L^2(\mu)$ such that $\cA_0^m \psi \in L^2(\mu)$ for any $0 \leq m \leq n$. Introduce, for a frequency $\omega \neq 0$, the unique solution $\varphi_\omega$ (well defined by Lemma~\ref{lem:unif_resolvent_L2}) of 
\[
(\ri \omega - \cA_0) \varphi_\omega = \psi.
\]
Then, there exist a constant $C_n>0$ and functions $\phi_2,\dots,\phi_{n-1}$ such that, for all $\omega \geq 1$,
\[
\left\| \varphi_\omega - \left(\frac{\psi}{\ri\omega} + \frac{\phi_2}{\omega^2} + \dots + \frac{\phi_{n-1}}{\omega^{n-1}} \right)\right\|_{L^2(\mu)} \leq \frac{C_n}{\omega^n}.
\]
\end{lemma}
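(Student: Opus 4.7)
The plan is a resolvent (Neumann) expansion for $(\ri\omega - \cA_0)^{-1}$ in powers of $1/\omega$, truncated at finite order, with the remainder controlled by the uniform resolvent bound of Lemma~\ref{lem:unif_resolvent_L2}. Formally $(\ri\omega - \cA_0)^{-1} = (\ri\omega)^{-1}(\mathrm{Id} - \cA_0/(\ri\omega))^{-1} = \sum_{k\ge 0}\cA_0^k/(\ri\omega)^{k+1}$, which suggests that the coefficients in the asymptotic should be $\phi_m = \cA_0^{m-1}\psi/\ri^m$.

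To make this rigorous, I would introduce the truncated remainder
\[
R_n(\omega) = \varphi_\omega - \sum_{k=0}^{n-1}\frac{\cA_0^k\psi}{(\ri\omega)^{k+1}}
\]
and apply $(\ri\omega - \cA_0)$ to both sides. Using the telescoping identity
\[
\sum_{k=0}^{n-1}\frac{(\ri\omega - \cA_0)\cA_0^k\psi}{(\ri\omega)^{k+1}} = \sum_{k=0}^{n-1}\left(\frac{\cA_0^k\psi}{(\ri\omega)^k} - \frac{\cA_0^{k+1}\psi}{(\ri\omega)^{k+1}}\right) = \psi - \frac{\cA_0^n\psi}{(\ri\omega)^n},
\]
which only uses the iterates $\cA_0^k\psi$ for $0\le k\le n$ as $L^2(\mu)$ elements (ensured by hypothesis), the defining equation $(\ri\omega - \cA_0)\varphi_\omega = \psi$ yields
\[
(\ri\omega - \cA_0)R_n(\omega) = \frac{\cA_0^n\psi}{(\ri\omega)^n}.
\]
Lemma~\ref{lem:unif_resolvent_L2} applied with $\nu_* = 1$ (noting $(\ri\omega - \cA_0)^{-1} = -(\ri(-\omega) + \cA_0)^{-1}$) gives a constant $C>0$ such that $\|(\ri\omega - \cA_0)^{-1}\|_{\mathcal{B}(L^2(\mu))} \le C$ for all $\omega \ge 1$, hence
\[
\|R_n(\omega)\|_{L^2(\mu)} \le \frac{C\,\|\cA_0^n\psi\|_{L^2(\mu)}}{\omega^n}.
\]

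To match the form stated in the lemma, set $\phi_m = \cA_0^{m-1}\psi/\ri^m$ for $m=2,\dots,n-1$. The $k=n-1$ term in the truncated sum, namely $\cA_0^{n-1}\psi/(\ri\omega)^n$, is itself of order $1/\omega^n$ and can be absorbed into the error:
\[
\varphi_\omega - \left(\frac{\psi}{\ri\omega} + \frac{\phi_2}{\omega^2} + \dots + \frac{\phi_{n-1}}{\omega^{n-1}}\right) = \frac{\cA_0^{n-1}\psi}{(\ri\omega)^n} + R_n(\omega),
\]
whose $L^2(\mu)$ norm is bounded by $(\|\cA_0^{n-1}\psi\|_{L^2(\mu)} + C\,\|\cA_0^n\psi\|_{L^2(\mu)})/\omega^n$, which is the desired estimate.

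There is no substantive obstacle: the hypotheses $\cA_0^m\psi \in L^2(\mu)$ for $0\le m\le n$ are exactly what the telescoping and the final resolvent estimate need, and no mean-zero condition has to be checked because Lemma~\ref{lem:unif_resolvent_L2} provides a bound on all of $L^2(\mu)$ whenever $|\omega|$ stays bounded away from zero. The mildly delicate bookkeeping is merely that to reach an $O(1/\omega^n)$ error one must truncate the sum at $k=n-2$ (giving $n-2$ genuine correction terms $\phi_2,\dots,\phi_{n-1}$) and then reclassify the next term $\phi_n/\omega^n$ as part of the remainder.
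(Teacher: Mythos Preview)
Your proof is correct and follows essentially the same resolvent/telescoping argument as the paper: both apply $(\ri\omega-\cA_0)$ to the truncated expansion, obtain the remainder $\cA_0^n\psi/(\ri\omega)^n$, and invoke Lemma~\ref{lem:unif_resolvent_L2} to bound it. Your explicit handling of the $k=n-1$ term (absorbing $\cA_0^{n-1}\psi/(\ri\omega)^n$ into the error to match the stated range $\phi_2,\dots,\phi_{n-1}$) is in fact slightly more careful than the paper's own write-up.
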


\begin{proof}
A simple computation shows that
\[
(\ri \omega - \cA_0) \left(\varphi_\omega - \frac{\psi}{\ri\omega} - \frac{\cA_0 \psi}{(\ri\omega)^2}-\dots-\frac{\cA_0^{n-1}\psi}{(\ri\omega)^n}\right)
= \frac{1}{(\ri \omega)^n} \cA_0^n \psi,
\]
so that, in view of Lemma~\ref{lem:unif_resolvent_L2},
\[
\left\| \varphi_\omega - \frac{\psi}{\ri\omega} - \frac{\cA_0 \psi}{(\ri\omega)^2}-\dots-\frac{\cA_0^{n-1}\psi}{(\ri\omega)^n} \right\|_{L^2(\mu)} 
\leq \frac{1}{|\omega|^n} \| (\ri \omega - \cA_0)^{-1} \|_{\mathcal{B}(L^2(\mu))}
\left\| \cA_0^n \psi \right\|_{L^2(\mu)}.
\]
The proof is concluded by setting 
\[
C_n = \left(\sup_{|\omega| \geq 1} \| (\ri \omega - \cA_0)^{-1} \|_{\mathcal{B}(L^2(\mu))}\right) \left\| \cA_0^n \psi \right\|_{L^2(\mu)},
\]
and $\phi_m = (-\ri)^m \cA_0^{m-1} \psi$ for $m = 2,\dots,n-1$.
\end{proof}

\subsection{Proof of Proposition~\ref{prop:Poisson}}
\label{sec:proof_Poisson}

The first remark is that, by linearity, $\Phi_\eta$ is the sum of the solution of~\eqref{eq:Poisson_equation} with right-hand side $f-\overline{f}_\eta(t)$ (see~\eqref{eq:local_time_average} for the definition of $\overline{f}_\eta(t)$) and 
\[
\widetilde{\Phi}_\eta(t) = \int_0^t \left( \overline{f}_\eta(s) - \int_\cE f \right)ds, 
\]
which corresponds to the solution of~\eqref{eq:Poisson_equation} with right-hand side $
\overline{f}_\eta(t) - \int_\cE f $ (note that this function is, of course, mean zero over $T\mathbb{T}$). In addition, $\widetilde{\Phi}_\eta$ is bounded, and its spatial derivatives vanish. Without restriction of generality, we can therefore restrict our attention to the Poisson equation
\begin{equation}
  \label{eq:Poisson_modified}
  (\partial_t + \cA_0 + \eta \cA_1) \Phi_{\eta}(t,q,p) = f(t,q,p) - \overline{f}_\eta(t), 
  \qquad
  \qquad \int_\cE \Phi_\eta \, \psi_\eta = 0,
\end{equation}
Estimates on solutions of the above equation are obtained by generalizing a technique first employed by Talay in~\cite{Talay02}, and recently carefully rewritten and extended in~\cite[Appendix~A]{Kopec13}. We briefly recall the beginning of the argument here in order to show that it can be extended to processes with time-periodic drivings. The crucial estimate is the one provided by Lemma~\ref{lem:poly_decay} below.

Consider $u_0(t,q,p) = f(t,q,p) - \overline{f}_\eta(t)$, and define $u(s) = \rme^{s(\partial_t + \cA_0 + \eta \cA_1)}u_0$, or equivalently 
\[
u(s;t_0,q_0,p_0)= \mathbb{E}_{t_0,q_0,p_0}\Big(u_0(t_s,\wq^\eta_s,\wp^\eta_s)\Big),
\]
where the process $(t_s,\wq_s,\wp_s)$ has spatial initial conditions $(\wq^\eta_0,\wp^\eta_0) = (q_0,p_0)$ and evolves according to the \emph{time homogeneous} Markovian dynamics with generator 
\begin{equation}
\label{eq:def_T_eta}
T_\eta = \partial_t + \cA_0 + \eta \cA_1, 
\end{equation}
namely
\begin{equation}
\label{eq:Langevin_extended}
\left\{ 
\begin{aligned}
dt_s & = ds, \\
d\wq_s^\eta & = M^{-1} \wp_s^\eta \, ds, \\
d\wp_s^\eta & = \Big( -\nabla V(\wq_s^\eta) + \eta F(t_s,\wq_s^\eta) \Big) ds - \gamma M^{-1} \wp_s^\eta \, ds + \sqrt{\frac{2\gamma}{\beta}} \, dW_s.
\end{aligned}
\right.
\end{equation} 
The solutions of~\eqref{eq:Langevin_extended} for a given initial condition $(t_0,q_0,p_0)$ can be seen as the solutions of~\eqref{eq:Langevin} started at time $t_0$ rather than~0. The exponential convergences~\eqref{eq:cv_exp_time_inhomog} and~\eqref{eq:cv_over_one_period} (and their immediate generalizations for $t_0 \neq 0$) imply that, for any $n \geq 1$ and all initial conditions $(t_0,q_0,p_0)$,
\begin{equation}
  \label{eq:decay_us}
  \left| u(s;t_0,q_0,p_0) \right| \leq C_n \rme^{-\lambda_n s} \| f \|_{L^\infty(L^\infty_{\Li_n})} \Li_n(q,p).
\end{equation}
An integration of this inequality from $s=0$ to $+\infty$ shows that the function
\begin{equation}
  \label{eq:def_Phi_eta_semigroup}
  \Phi_\eta(t,q,p) = -\int_0^{+\infty} u(s;t,q,p) \, ds
\end{equation}
is a well defined, admissible solution of the Poisson equation~\eqref{eq:Poisson_modified}. In fact it is also the only one. This already gives, for some constant $C > 0$ depending only on~$n$ and $\eta_*$, the following pointwise estimate:
\begin{equation}
  \label{eq:Linf_estimate_Phi}
  \| \Phi_\eta \|_{L^\infty(L^\infty_{\Li_n})} \leq C \| f \|_{L^\infty(L^\infty_{\Li_n})}. 
\end{equation}
Upon choosing $l > 2n+d/2$, we also obtain the following weighted $L^2$ exponential convergence from~\eqref{eq:decay_us}:
\begin{equation}
  \label{eq:weighted_L2_decay}
  \int_\mathcal{E} |u(s)|^2 \, \Pi_l \leq \widetilde{C}_{n,l} \rme^{-2\lambda_n s} \| f \|_{L^\infty(L^\infty_{\Li_n})}^2,
\end{equation}
where we have introduced the polynomial weight
\[
\Pi_l(t,q,p) = \left(\frac{1}{1+|p|^2}\right)^l.
\]
The idea of Talay in~\cite[Section~3.4]{Talay02} is to use appropriate mixed derivatives to obtain pointwise decay estimates for the derivatives of~$\Phi_\eta$, by first obtaining decay estimates in weighted $L^2$ spaces and then concluding by appropriate Sobolev embeddings. A key estimate to perform the computation is the following lemma (compare~\cite[Lemma~A.6]{Kopec13}).

\begin{lemma}
  \label{lem:poly_decay}
  For a given linear operator~$L$ and $k \in \mathbb{N}$ (sufficiently large for all the integrals below to be well defined), there exists $A_{k,\eta_*} \geq 0$ such that, for any $\alpha > 0$ and any $\eta \in [-\eta_*,\eta_*]$, 
  \begin{align}
  & \rme^{\alpha s} \int_\cE |Lu(s)|^2 \, \Pi_k + \frac{2\gamma}{\beta} \int_0^s \rme^{\alpha r} \left( \int_\cE \left|\nabla_p Lu(r)\right|^2 \, \Pi_k\right)dr \nonumber \\
    & \ \leq \int_\cE |Lu(0)|^2 \, \Pi_k + (A_{k,\eta_*} + \alpha) \int_0^s \rme^{\alpha r} \left( \int_\cE \left|Lu(r)\right|^2 \, \Pi_k\right)dr + \int_0^s \rme^{\alpha r} \left( \int_\cE Lu(s) \, [L,T_\eta]u(r) \, \Pi_k\right)dr. 
    \label{eq:exp_decay_bound}
    %& \ \leq \int_\cE |Lu(0)|^2 \, \Pi_n + (A + \alpha) \int_0^s \rme^{\alpha r} \left( \int_\cE \left|Lu(r)\right|^2 \, \Pi_n\right)dr + B \int_0^s \rme^{\alpha r} \left( \int_\cE \left|[L,T_\eta]u(r)\right|^2 \, \Pi_n\right)dr. \label{eq:exp_decay_bound_2}
  \end{align}
\end{lemma}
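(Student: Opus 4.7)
The plan is to derive a pointwise-in-$s$ differential inequality for the weighted energy
\[
\phi(s) = \int_\cE |Lu(s)|^2 \, \Pi_k,
\]
and then integrate after multiplication by $\rme^{\alpha s}$.

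The starting point is to use $\partial_s u = T_\eta u$ together with the fact that $L$ is a linear operator commuting with $\partial_s$, so
\[
\frac{d}{ds} \phi(s) \;=\; 2 \int_\cE Lu(s) \cdot T_\eta Lu(s) \, \Pi_k \;+\; 2 \int_\cE Lu(s) \cdot [L,T_\eta] u(s) \, \Pi_k.
\]
The second term is exactly the commutator contribution that appears (up to factor conventions) in the right-hand side of~\eqref{eq:exp_decay_bound}, so it is kept as is—it will be controlled by an induction on the order of $L$ in later arguments. The work is to analyze the first term by integrating by parts piece by piece in $T_\eta = \partial_t + \cA_0 + \eta \cA_1$. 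The $\partial_t$ contribution vanishes because $\Pi_k$ is time-independent and $Lu$ is $T$-periodic in~$t$. The only genuinely dissipative piece comes from $\tfrac{\gamma}{\beta}\Delta_p$ in~$\cA_0$, yielding
\[
-\frac{2\gamma}{\beta} \int_\cE |\nabla_p Lu|^2 \, \Pi_k \;-\; \frac{2\gamma}{\beta} \int_\cE Lu \, \nabla_p Lu \cdot \nabla_p \Pi_k.
\]

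The remaining pieces of $T_\eta$ (the Hamiltonian transport $M^{-1}p\cdot\nabla_q - \nabla V\cdot\nabla_p$, the friction $-\gamma M^{-1}p\cdot\nabla_p$, and the driving $\eta F\cdot\nabla_p$) each produce, after one integration by parts, terms of the form $\int_\cE |Lu|^2 \, \omega_k(t,q,p) \, \Pi_k$, where $\omega_k$ is a smooth function arising either from coefficients of the operator or from the logarithmic derivatives $\nabla_p \Pi_k/\Pi_k = -2k p /(1+|p|^2)$ and $\nabla_q \Pi_k = 0$. Since $|F|$ is uniformly bounded on $T\mathbb{T}\times\mathcal{M}$, $|\nabla V|$ is bounded on $\mathcal{M}$, and the ratios $p_i p_j/(1+|p|^2)$ are bounded by~$1$, all of these $\omega_k$ can be controlled by a constant $A_{k,\eta_*}$ uniformly in $\eta\in[-\eta_*,\eta_*]$, provided $k$ is chosen large enough for the polynomial weight to beat the at-most-linear growth of the coefficients. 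The cross term $-\tfrac{2\gamma}{\beta}\int Lu\,\nabla_p Lu\cdot \nabla_p\Pi_k$ is split by Young's inequality into a small multiple of $\int |\nabla_p Lu|^2 \Pi_k$ (absorbed into the dissipative term on the left) and an $L^2$ contribution that is folded into $A_{k,\eta_*}\phi(s)$.

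Combining these estimates yields the differential inequality
\[
\phi'(s) + \frac{2\gamma}{\beta} \int_\cE |\nabla_p Lu|^2 \, \Pi_k \;\leq\; A_{k,\eta_*}\, \phi(s) + 2\int_\cE Lu(s)\,[L,T_\eta] u(s) \, \Pi_k .
\]
Multiplying by $\rme^{\alpha s}$, using $(\rme^{\alpha s}\phi)' = \rme^{\alpha s}(\phi' + \alpha\phi)$, and integrating from $0$ to $s$ gives exactly~\eqref{eq:exp_decay_bound} (with the factor of~$2$ absorbable into the constant, or simply noting the statement already allows a general prefactor).

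The main obstacle is the bookkeeping that certifies uniformity of $A_{k,\eta_*}$ in $\eta$ and that chooses $k$ large enough so that every boundary/coefficient term generated by integration by parts against $\Pi_k$ is either bounded pointwise or absorbable into the dissipation. Concretely, one must check that the polynomial weight $\Pi_k$ decays fast enough to dominate the linear-in-$p$ coefficient of the friction term and the $\nabla V$ and $F$ contributions after they are multiplied by $\nabla_p$ of a function weighted by $\Pi_k$; this is routine but not automatic, and it forces the implicit lower bound on $k$ mentioned in the lemma's hypothesis.
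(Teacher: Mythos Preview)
Your approach is essentially the same as the paper's: differentiate the weighted energy, separate the commutator, extract the dissipation from $\tfrac{\gamma}{\beta}\Delta_p$, bound the rest by $A_{k,\eta_*}\phi$, and integrate against $\rme^{\alpha s}$. The paper packages the middle step more cleanly: it uses the identity $\tfrac12 T_\eta(|f|^2) = f\,T_\eta f + \tfrac{\gamma}{\beta}|\nabla_p f|^2$ to write $\int_\cE Lu\,T_\eta Lu\,\Pi_k = \tfrac12\int_\cE |Lu|^2\,T_\eta^\dagger\Pi_k - \tfrac{\gamma}{\beta}\int_\cE|\nabla_p Lu|^2\,\Pi_k$, and then invokes the single pointwise estimate $T_\eta^\dagger\Pi_k \le A_{k,\eta_*}\Pi_k$. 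This replaces your term-by-term integration by parts and, in particular, absorbs what you call the ``cross term'' without any use of Young's inequality.

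That last point is the one place where your argument is slightly loose. Splitting $-\tfrac{2\gamma}{\beta}\int Lu\,\nabla_p Lu\cdot\nabla_p\Pi_k$ via Young costs you a small fraction of the dissipation, so you cannot then write the differential inequality with the exact coefficient $\tfrac{2\gamma}{\beta}$ as you do. The fix is immediate and avoids Young altogether: integrate by parts once more to get $-\tfrac{\gamma}{\beta}\int_\cE \nabla_p(|Lu|^2)\cdot\nabla_p\Pi_k = \tfrac{\gamma}{\beta}\int_\cE |Lu|^2\,\Delta_p\Pi_k$, and note that $|\Delta_p\Pi_k|\le C_k\,\Pi_k$ pointwise, so this term is absorbed directly into $A_{k,\eta_*}\phi$. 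With that correction your proof matches the paper's.
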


\begin{proof}
  The proof is based on the following computations:
  \[
  \begin{aligned}
  \frac{d}{ds}\left(\frac12 \int_\cE |Lu(s)|^2 \, \Pi_k\right) & = \int_\cE Lu(s) \, LT_\eta u(s) \, \Pi_k \\
  & = \int_\cE Lu(s) \, T_\eta Lu(s) \, \Pi_k + \int_\cE Lu(s) \, [L,T_\eta] u(s) \, \Pi_k  \\
  & = \frac12 \int_\cE T_\eta\left( |Lu(s)|^2 \right) \, \Pi_k - \frac{\gamma}{\beta} \int_\cE |\nabla_p Lu(s)|^2 \, \Pi_k + \int_\cE Lu(s) \, [L,T_\eta] u(s) \, \Pi_k 
  \end{aligned}
  \]
  where we have used 
  \begin{equation}
    \label{eq:T_eta_f2}
    \frac12 T_\eta(|f|^2) = f T_\eta f + \frac{\gamma}{\beta} |\nabla_p f|^2.
  \end{equation}
  In addition, denoting by $T_\eta^\dagger$ the adjoint of $T_\eta$ on the flat space $L^2(\mathcal{E})$,
  \[
  \int_\cE T_\eta\left( |Lu(s)|^2 \right) \, \Pi_k =  \int_\cE |Lu(s)|^2 \, T_\eta^\dagger\Pi_k.
  \]
  A simple computation shows that there exists $A_{k,\eta_*} > 0$ such that, for any $\eta \in [-\eta_*,\eta_*]$ (see the expression of $T_0^\dagger \Pi_k$ in the proof of~\cite[Lemma~25]{LMS13})
  \[
  T_\eta^\dagger \Pi_k\leq A_{k,\eta_*} \Pi_k.
  \]
  The inequality then follows by taking into account the extra factor $\rme^{\alpha s}$ and integrating in time. %Finally, \eqref{eq:exp_decay_bound_2} is obtained by a Cauchy-Schwarz inequality on the last term.
\end{proof}

In the remainder of the proof, the constants may vary from line to line. To use Lemma~\ref{lem:poly_decay}, we will need the following commutators:
\[
\begin{aligned}[] 
[\partial_{p_i},T_\eta] & = \sum_{j=1}^d \left(M^{-1}\right)_{ij} \left(\partial_{q_j}-\gamma \partial_{p_i}\right) \\
[\partial_{q_i},T_\eta] & = [\partial_{q_i}, (-\nabla V + \eta F)\cdot \nabla_p] = \left(-\nabla (\partial_{q_i}V) + \eta \partial_{q_i} F \right)\cdot \nabla_p. 
\end{aligned}
\]
We then start by applying Lemma~\ref{lem:poly_decay} with $L=\mathrm{Id}$ and $\alpha < 2 \lambda_n$, so that the right-hand side of inequality~\eqref{eq:exp_decay_bound} is uniformly bounded in time in view of the estimate~\eqref{eq:weighted_L2_decay}. This shows that 
\begin{equation}
\label{eq:bound_integrated_dp}
\int_0^s \rme^{\alpha r} \left( \int_\cE \left|\nabla_p u(r)\right|^2 \, \Pi_l \right)dr \leq K \| f\|_{L^\infty(L^\infty_{\Li_n})}^2
\end{equation}
is uniformly bounded for $s \geq 0$. We next consider $L_i = \partial_{q_i} - R \partial_{p_i}$ for some real parameter $R > 0$, which mixes derivatives in~$q$ and~$p$ in order to retrieve some dissipation in the~$q$ direction (this was one of the main ideas of hypocoercivity, as mentioned in the introduction of~\cite{Villani}). Note that
\[
[L_i,T_\eta] = \left(-\nabla (\partial_{q_i}V) + \eta \partial_{q_i} F \right)\cdot \nabla_p - R \sum_{j=1}^d \left(M^{-1}\right)_{ij} \left(\partial_{q_j}-\gamma \partial_{p_i}\right),
\]
so that, using a Cauchy-Schwarz inequality and the boundedness of $\nabla (\partial_{q_i} V) + \eta \partial_{q_i} F$, 
\[
\begin{aligned}
\sum_{i=1}^d L_i u(s) \, [L_i ,T_\eta]u(s) & = -R \left|M^{-1/2} (\nabla_q - R \nabla_p)u(s)\right|^2 + R (\gamma-R)\Big( (\nabla_q - R \nabla_p)u(s) \Big)^T M^{-1} \nabla_p u(s) \\
& \ \ \ + \sum_{i=1}^d L_i u(s) \, \left(-\nabla (\partial_{q_i}V) + \eta \partial_{q_i} F \right)^T \nabla_p u(s) \\
& \leq - \frac{R}{2\sqrt{m}} \left|(\nabla_q - R \nabla_p)u(s)\right|^2 + b_R |\nabla_p u(s)|^2
\end{aligned}
\]
for some constant $b_R > 0$ and where $M \leq m \mathrm{Id}$ (recall~\eqref{eq:bound_mass_matrix}). In addition, there exists an integer $n' \geq n$ such that $\nabla_q f, \nabla_p f \in L^\infty(L^\infty_{\mathcal{K}_{n'}})$. Upon choosing $l > 2n'+d/2$,
\[
\int_\mathcal{E} \left( |\nabla_p f|^2 + |\nabla_q f|^2 \right) \Pi_{l} \leq C \, \sup_{\substack{ r\in \mathbb{N}^{2d} \\ |r| = 1}} \left\| \partial^r f \right\|^2_{L^\infty(L^\infty_{\Li_n'})}.
\]
Lemma~\ref{lem:poly_decay} together with \eqref{eq:bound_integrated_dp} then shows that it is possible to choose $R > 0$ sufficiently large so that, upon increasing the value of~$K$ and choosing $l > 2n'+d/2$ (in view of the integral on the initial condition $u(0;t,q,p) = f(t,q,p) - \overline{f}_\eta(t)$), 
\[
\int_\cE \left|(\nabla_q-R\nabla_p) u(s)\right|^2 \, \Pi_l \leq K \, \rme^{-\alpha s} \, \sup_{\substack{ r\in \mathbb{N}^{2d} \\ |r| \leq 1}} \left\| \partial^r f \right\|_{L^\infty(L^\infty_{\Li_n'})}^2.
\]
Therefore, by integration (and upon changing $K$ again),
\begin{equation}
\label{eq:bound_integrated_dq-dp}
\int_0^s \rme^{\alpha r} \left( \int_\cE \left|(\nabla_q-R\nabla_p) u(r)\right|^2 \, \Pi_l\right)dr \leq K
\, \sup_{\substack{ r\in \mathbb{N}^{2d} \\ |r| \leq 1}} \left\| \partial^r f \right\|_{L^\infty(L^\infty_{\Li_n'})}^2
\end{equation}
uniformly in~$s \geq 0$. The combination of~\eqref{eq:bound_integrated_dp} and~\eqref{eq:bound_integrated_dq-dp} finally gives a control on $\nabla_q u$ as
\[
%\begin{equation}
%\label{eq:bound_integrated_dq}
\int_0^s \rme^{\alpha r} \left( \int_\cE \left|\nabla_q u(r)\right|^2 \, \Pi_l\right)dr \leq K \, \sup_{\substack{ r\in \mathbb{N}^{2d} \\ |r| \leq 1}} \left\| \partial^r f \right\|_{L^\infty(L^\infty_{\Li_n'})}^2.
%\end{equation}
\]
We can now use Lemma~\ref{lem:poly_decay} with $L = \partial_{p_i}$ or $\partial_{q_i}$ (with a Cauchy-Schwarz inequality to treat the last term on the right-hand side of~\eqref{eq:exp_decay_bound}, namely the one involving the commutator) to conclude that
\[
\int_\cE \Big( |\nabla_p u(s)|^2+|\nabla_p u(s)|^2 \Big) \Pi_l \leq C \rme^{-\alpha s} \, \sup_{\substack{ r\in \mathbb{N}^{2d} \\ |r| \leq 1}} \left\| \partial^r f \right\|_{L^\infty(L^\infty_{\Li_n'})}^2.
\]
This is the extension of~\eqref{eq:weighted_L2_decay} taking into account first order derivatives. A similar bound for derivatives of arbitrary order is obtained by induction, as carefully presented in~\cite{Kopec13}. This allows to show that there exists $l,m \in \mathbb{N}$ and $\alpha > 0$ such that $\partial^{k_1} ( \partial^{k_2} u(s) \Pi_l) \in L^2(\cE)$ for any multi-indices~$k_1,k_2$ with $|k_1|+|k_2| \leq k_{\rm max}$, with a $L^2$ norm exponentially decreasing as a function of~$s$: 
\[
\left\| \partial^{k_1} ( \partial^{k_2} u(s) \Pi_l) \right\|_{L^2(\mathcal{E})} \leq C \, \rme^{-\alpha s} \, \sup_{\substack{ r\in \mathbb{N}^{2d} \\ |r| \leq k_{\rm max}}} \left\| \partial^r f \right\|_{L^\infty(L^\infty_{\Li_m})}.
\]
Sobolev embeddings then imply pointwise estimates of the form
\[
\| \partial^{k_2} u(s) \Pi_l \|_{L^\infty} \leq C \, \rme^{-\alpha s} \, \sup_{\substack{ r\in \mathbb{N}^{2d} \\ |r| \leq k_{\rm max}}} \left\| \partial^r f \right\|_{L^\infty(L^\infty_{\Li_m})},
\]
from which~\eqref{eq:pointwise_Poisson} is obtained upon integrating in~$s$ (see~\eqref{eq:def_Phi_eta_semigroup}).

\subsection{Proof of Theorem~\ref{thm:diffusion}}
\label{sec:proof_diffusion}

We introduce a given test direction $\xi \in \RR^d$ to project the evolution onto, and consider the solutions $\Phi_\eta = (\Phi_{\eta,1},\dots,\Phi_{\eta,d})$ of the following Poisson equations:
\[
(\partial_t + \cA_0 + \eta \cA_1) \Phi_{\eta,i} = \left( M^{-1}p \right)_i - \mathcal{V}_{\eta,i} = \sum_{j=1}^d \left(M^{-1}\right)_{i,j} p_j - \mathcal{V}_{\eta,i}, 
\qquad 
\qquad \int_\cE \Phi_{\eta,i} \, \psi_\eta = 0,
\]
where $\mathcal{V}_{\eta,i}$ denotes the $i$th component of the vector $\mathcal{V}_{\eta}$. By It\^o calculus (note that, in view of Proposition~\ref{prop:Poisson}, the derivatives of $\Phi_\eta$ are well defined elements of $L^\infty(L^\infty_{\Lin})$ for some integer $n$ sufficiently large, hence $\cA_0 \Phi_\eta$, etc, make sense in this functional space),
\[
d\left(\xi^T \Phi_\eta\right)([t],q^\eta_t,p^\eta_t) = \left[(\partial_t + \cA_0 + \eta \cA_1)\left(\xi^T\Phi_\eta\right)\right]([t],q^\eta_t,p^\eta_t) + \sqrt{\frac{2\gamma}{\beta}} \nabla_p \left(\xi^T\Phi_\eta\right)([t],q^\eta_t,p^\eta_t) \cdot dW_t.
\]
By definition of~$\Phi_\eta$,
\[
\begin{aligned}
\xi^T \Big( Q_{t}^{\eta,\eps} - Q_0^{\eta,\eps} \Big) & = \eps \int_0^{t/\eps^2} \xi^T\left(M^{-1}p^\eta_s - \mathcal{V}_{\eta} \right)ds \\
& = \eps \xi^T \left( \Phi_\eta\left(\left[\frac{t}{\eps^2}\right],q^\eta_{t/\eps^2},p^\eta_{t/\eps^2}\right)- \Phi_\eta\left(0,q^\eta_{0},p^\eta_{0}\right) \right) - \eps \mathscr{M}^{\eta,\xi}_{t/\eps^2},
\end{aligned}
\]
where we have introduced the martingale
\[
\mathscr{M}^{\eta,\xi}_s = \sqrt{\frac{2\gamma}{\beta}} \int_0^s \nabla_p \left(\xi^T \Phi_\eta\right)([\theta],q^\eta_\theta,p^\eta_\theta) \cdot dW_\theta.
\]
The quadratic variation of $\mathscr{M}^{\eta,\xi}_s$ is
\[
\left\langle \mathscr{M}^{\eta,\xi} \right\rangle_s = \frac{2\gamma}{\beta} \int_0^s \left|\nabla_p \left(\xi^T \Phi_\eta\right)([\theta],q^\eta_\theta,p^\eta_\theta)\right|^2 \, d\theta.
\]
The proof of the weak convergence to a Brownian motion is very standard: we prove the convergence of the finite dimensional laws using the martingale central limit theorem~\cite{KomorowskiLandimOlla2012, EthKur86}, as well as the tightness of the stochastic process, and conclude with~\cite[Theorem~7.1]{Billingsley99}. Alternatively, we could also follow the more quantitative approach from~\cite{HP04} (based on the use of an additional Poisson equation to get better control between $\xi^T Q^{\eta, \epsilon}_t$ and the limiting process, projected onto $\xi$) in view of the good resolvent estimates provided by Proposition~\ref{prop:Poisson}.

\paragraph{Tightness.}
Let us first prove the tightness by using Prohorov's theorem. It is sufficient to prove that, for any $\alpha > 0$ and any $\tau > 0$,
\begin{equation}
\label{eq:tightness}
\lim_{\delta \to 0} \limsup_{\eps \to 0} \mathbb{P}\left( \sup_{\substack{ |t-s|<\delta \\ 0 \leq s < t \leq \tau}} \left| \xi^T \Big( Q_{t}^{\eta,\eps} - Q_s^{\eta,\eps} \Big)\right| \geq \alpha \right) = 0. 
\end{equation}
Note that, using for instance the $L^\infty(L^\infty_{\Li_2})$ bound, \textit{i.e.}~\eqref{eq:Linf_estimate_Phi} with $n=2$,
\[
\begin{aligned}
& \mathbb{P}\left( \sup_{\substack{ |t-s|<\delta \\ 0 \leq s < t \leq \tau}} \eps\left| \xi^T \left( \Phi_\eta\left(\left[\frac{t}{\eps^2}\right],q^\eta_{t/\eps^2},p^\eta_{t/\eps^2}\right)- \Phi_\eta\left(\left[\frac{s}{\eps^2}\right],q^\eta_{s/\eps^2},p^\eta_{s/\eps^2}\right) \right) \right| \geq \alpha \right) \\
& \qquad \leq \mathbb{P}\left( \sup_{0 \leq t \leq T} \eps\left| \xi^T \Phi_\eta\left(\left[\frac{t}{\eps^2}\right],q^\eta_{t/\eps^2},p^\eta_{t/\eps^2}\right) \right| \geq \frac{\alpha}{2} \right) \\
& \qquad \leq \mathbb{P}\left( \sup_{0 \leq t \leq \tau} 1 + |p^\eta_{t/\eps^2}|^2 \geq \frac{\alpha}{2\eps |\xi|\| \Phi_\eta \|_{L^\infty(L^\infty_{\Li_2})}} \right) \\
& \qquad \leq \sup_{0 \leq t \leq \tau} \int_{|p|>\alpha/2\eps |\xi|\| \Phi_\eta \|_{L^\infty(L^\infty_{\Li_2})} } \Big( 1+|p|^2 \Big) \psi_\eta(t,q,p) \, dq \, dp \xrightarrow[\eps \to 0]{} 0,
\end{aligned}
\]
uniformly in~$\delta > 0$ (using the finiteness of the moments~\eqref{eq:finite_moments} and the smoothness of $\psi_\eta$). On the other hand, by Doob's inequality,
\[
\begin{aligned}
\mathbb{P}\left( \sup_{\substack{ |t-s|<\delta \\ 0 \leq s < t \leq \tau}} \eps\left| \mathscr{M}^{\eta,\xi}_{t/\eps^2}-\mathscr{M}^{\eta,\xi}_{s/\eps^2} \right| \geq \alpha \right) & \leq \frac\eps\alpha \sqrt{\mathbb{E}\left(\left| \mathscr{M}^{\eta,\xi}_{t/\eps^2}-\mathscr{M}^{\eta,\xi}_{s/\eps^2} \right|^2\right)} \\
& =  \frac{\eps}{\alpha} \sqrt{\frac{2\gamma}{\beta}\int_{s/\eps^2}^{t/\eps^2} \mathbb{E}\left(\left|\nabla_p \left(\xi^T \Phi_\eta\right)\left(\theta,q^\eta_\theta,p^\eta_\theta\right)\right|^2\right) d\theta} \\
& \leq \frac{\eps}{\alpha}  \, \sqrt{2T\, \xi^T \mathscr{D}_\eta \xi\left(\left\lceil \frac{t}{T \eps^2 }\right\rceil - \left\lfloor \frac{s}{T \eps^2 }\right\rfloor\right) } \\
& \leq \frac{1}{\alpha} \, \sqrt{ 2\left( t-s + 2T\eps^2 \right)\xi^T \mathscr{D}_\eta \xi },
\end{aligned}
\]
where the final bound relies on $\lceil x \rceil = [x] + 1 \leq x + 1$ and $\lfloor x \rfloor = [x] - 1 \geq x - 1$. To go from the second to the third line, we have used the trivial bound
\[
\int_{s/\eps^2}^{t/\eps^2} \mathbb{E}\left(\left|\nabla_p \left(\xi^T \Phi_\eta\right)([\theta],q^\eta_\theta,p^\eta_\theta)\right|^2\right) d\theta \leq \int_{\lfloor s/T\eps^2 \rfloor T}^{\lceil t/T\eps^2 \rceil T} \mathbb{E}\left(\left|\nabla_p \left(\xi^T \Phi_\eta\right)([\theta],q^\eta_\theta,p^\eta_\theta)\right|^2\right) d\theta,
\]
and
\[
\begin{aligned}
\int_{nT}^{(n+1)T} \mathbb{E}\left(\left|\nabla_p \left(\xi^T \Phi_\eta\right)([\theta],q^\eta_\theta,p^\eta_\theta)\right|^2\right) d\theta & = T \int_{T\mathbb{T} \times \M \times \RR^d} \left|\nabla_p \left(\xi^T \Phi_\eta\right)([t],q,p)\right|^2 \psi_\eta(t,q,p) \, dt \, dq \, dp \\
& = \frac{\beta T}{\gamma} \xi^T \mathscr{D}_\eta \xi.
\end{aligned}
\]
This owes to the fact that the law of $(q^\eta_\theta,p^\eta_\theta)$ is $\psi_\eta([\theta],q,p)$ since the dynamics is started at time~0 from $\psi_\eta(0,q,p) \, dq\,dp$. Combining the estimates we have shown, \eqref{eq:tightness} easily follows.

\paragraph{Convergence of finite dimensional distributions.}
We first rewrite $\xi^T \Big( Q_{t}^{\eta,\eps} - Q_0^{\eta,\eps} \Big)$ as the sum of discrete, stationary martingale increments, and a remainder term. In order to have stationary increments, we consider evolutions on integer multiples of the period, and introduce, for $k \in \mathbb{N}$, 
\[
Z_k = \sqrt{\frac{2\gamma}{\beta}} \int_{kT}^{(k+1)T} \nabla_p \left(\xi^T \Phi_\eta\right)(\theta,q^\eta_\theta,p^\eta_\theta) \cdot dW_\theta.
\]
Define $n_\eps(t) = \lfloor t/(T\eps^2) \rfloor$. Then, 
\[
\xi^T \Big( Q_{t}^{\eta,\eps} - Q_0^{\eta,\eps} \Big) = R_t^\eps + \frac{\sqrt{n_\eps(t)}}{\eps} \, \mathfrak{M}_t^\eps, \qquad \mathfrak{M}_t^\eps = \frac{1}{\sqrt{n_\eps(t)}} \sum_{k=1}^{n_\eps(t)} Z_k,
\]
with $\eps \sqrt{n_\eps(t)} \to \sqrt{t/T}$ and
\[
R_t^\eps = \eps \xi^T \left( \Phi_\eta\left(\frac{t}{\eps^2},q^\eta_{t/\eps^2},p^\eta_{t/\eps^2}\right)- \Phi_\eta\left(0,q^\eta_{0},p^\eta_{0}\right) \right) - \eps \left( \mathscr{M}^{\eta,\xi}_{t/\eps^2} -  \mathscr{M}^{\eta,\xi}_{n_\eps(t)T} \right).
\]
It is easy to check that $\mathbb{E}|R_t^\eps|^2 \to 0$ as $\eps \to 0$ with computations similar to the ones used in the proof of the tightness. Hence, the convergence of finite dimensional distributions is dictated by the sum of martingale increments~$\mathfrak{M}_t^\eps$. To obtain the asymptotic behavior of this random variable, we resort for instance to~\cite[Theorem~35.12]{Billingsley95} (see also~\cite[Theorem~3.2]{Helland82}). Denoting by $ \mathcal{F}_{k-1}$ the filtration of events up to time~$kT$,
\[
\mathbb{E}\left[Z_k^2 \, \Big| \, \mathcal{F}_{k-1} \right] = \frac{2\gamma}{\beta} \int_{kT}^{(k+1)T} \left|\nabla_p \left(\xi^T \Phi_\eta\right)([t],q^\eta_s,p^\eta_s)\right|^2 \,ds,
\]
the first condition of~\cite[Theorem~35.12]{Billingsley95} follows from the Law of Large Numbers~\eqref{eq:LLN}:
\[
\frac{1}{n_\eps(t)} \sum_{k=1}^{n_\eps(t)} \mathbb{E}\left[Z_k^2 \, \Big| \, \mathcal{F}_{k-1} \right] = \frac{2\gamma}{\beta n_\eps(t)} \int_{0}^{n_\eps(t) T} \left|\nabla_p \left(\xi^T \Phi_\eta\right)([t],q^\eta_s,p^\eta_s)\right|^2 \,ds \xrightarrow[t \to +\infty]{} 2T \, \xi^T \mathscr{D}_\eta \xi \quad \mathrm{a.s.}
\]
On the other hand, for any $\alpha > 0$,
\[
\mathbb{E}\left(Z^2_1 \mathbf{1}_{|Z_1| > \sqrt{n} \alpha}\right) \xrightarrow[n \to +\infty]{} 0
\]
% alternative (stronger): Holder inequality + Burkholder inequality to control E|Z_1|^p by the p/2th power of the quadratic variation and use that expectation of indicator function goes to 0 in view of the estimates obtained in the tightness part
by dominated convergence, which implies the second condition of~\cite[Theorem~35.12]{Billingsley95}. In conclusion, $\xi^T \Big( Q_{t}^{\eta,\eps} - Q_0^{\eta,\eps} \Big)$ converges in law to a standard Gaussian distribution, with variance $2t \, \xi^T \mathscr{D}_\eta \xi$. The extension to a general finite dimensional distributions is done using an iterative procedure, as carefully documented in~\cite[Section~3.4]{BLP11} for instance.

\paragraph{Positive definiteness of the covariance matrix.} It is clear that $\mathscr{D}_\eta$ is symmetric positive. To prove that it is definite, assume that there exists~$\xi \neq 0$ such that $\xi^T \mathscr{D}_\eta \xi = 0$, in which case
\[
\int_\mathcal{E} \left|\nabla_p \left( \xi^T \Phi_\eta \right)\right|^2 \psi_\eta = 0.
\]
This shows that $\nabla_p\left( \xi^T \Phi_\eta \right) = 0$ almost everywhere since $\psi_\eta(t,q,p) > 0$ by Proposition~\ref{prop:inv_meas_cv}, hence $\xi^T \Phi_\eta$ does not depend on~$p$. But then, the equality 
\[
(\partial_t + \cA_0 + \eta \cA_1)\left( \xi^T \Phi_\eta \right) = (\partial_t + p^T M^{-1} \nabla_q)\left( \xi^T \Phi_\eta \right) = \xi^T(M^{-1}p - \overline{v}_\eta(t))
\]
requires $\nabla_q\left( \xi^T \Phi_\eta \right) = \xi$, 
%and $\partial_t \left( \xi^T \Phi_\eta \right) = -\xi^T \overline{v}_\eta(t)$, 
which cannot be satisfied for a smooth, periodic function of the~$q$ variable. The contradiction shows that~$\mathscr{D}_\eta$ is definite.

%---------- small eta ppties of D_\eta ----------
\subsection{Proof of Proposition~\ref{prop:small_eta_ppties_D}}

Since the operator $T_\eta$ defined in~\eqref{eq:def_T_eta} satisfies~\eqref{eq:T_eta_f2}, and using, by definition of the invariant measure,
\[
\int_\cE T_\eta\Big(|f|^2\Big)\psi_\eta = 0,
\]
a simple computation shows that, for fixed direction~$\xi \in \RR^d$,
\begin{align}
\xi^T \mathscr{D}_\eta \sL & = - \int_\cE \left(\sL^T \Phi_\eta\right) \sL^T\left(M^{-1}p - \mathcal{V}_\eta\right) \psi_\eta(t,q,p) \, dt \, dq \, dp \label{eq:equation_to_expand} \\
& = \sL^T \left( \int_0^{+\infty} \mathbb{E}_\eta\Big( \left(M^{-1}p^\eta_s - \mathcal{V}_\eta\right) \otimes \left(M^{-1}p_0^\eta - \mathcal{V}_\eta\right) \Big)ds \right)\sL, \nonumber
\end{align}
where the expectation $\mathbb{E}_\eta$ is with respect to initial conditions $(t_0,q_0,p_0)$ distributed according to~$\psi_\eta$, and for all realizations of the extended dynamics~\eqref{eq:Langevin_extended}. The small~$\eta$ behavior of~$\mathscr{D}_\eta$ is deduced from~\eqref{eq:equation_to_expand}, by expanding the various terms depending on~$\eta$. The expansion of the invariant measure has already been established in~\eqref{eq:formal_expansion_varrho}. This also gives an expansion for the average velocity: 
\begin{equation}
\label{eq:expansion_V_eta}
\mathcal{V}_\eta = \eta \overline{\mathscr{V}} + \eta^2 \widetilde{\mathcal{V}}_\eta^2, 
\end{equation}
where $\overline{\mathscr{V}}$ is defined in~\eqref{eq:time_avg_response} and $\widetilde{\mathcal{V}}_\eta^2$ collects the higher order terms and is uniformly bounded for $|\eta|$ sufficiently small. The expansion of the solution of the Poisson equation is given by the following lemma.

\begin{lemma}
\label{lem:expansion_Phi_eta}
The solution $\Phi_\eta$ of the Poisson equation~\eqref{eq:Poisson_eq_velocity} can be expanded as
\[
\Phi_\eta = \Phi_0 - \eta \widetilde{\Phi}^1 + \eta^2 \widetilde{\Phi}^2_\eta,
\]
where $\Phi_0$ is defined in~\eqref{eq:Phi_0} and $\widetilde{\Phi}^1$ satisfies the Poisson equation
\begin{equation}
\label{eq:def_first_orders_Phi_eta}
\qquad (\partial_t + \cA_0) \widetilde{\Phi}^1 = \cA_1 \Phi_0 - \int_\cE \cA_1 \Phi_0 \, \mu, \qquad \int_\cE \widetilde{\Phi}^1(t,q,p) \, \mu(q,p) \, dt \, dq \, dp = 0,
\end{equation}
while there exists some $m \geq 1$ such that the remainder $\widetilde{\Phi}^2_\eta$ is uniformly bounded in the $L^\infty(L^\infty_{\mathcal{K}_m})$ norm for $|\eta|$ sufficiently small.
\end{lemma}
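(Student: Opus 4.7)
The plan is to carry out a regular perturbation expansion of the Poisson equation~\eqref{eq:Poisson_eq_velocity} in powers of $\eta$, and to control the remainder by invoking the uniform resolvent estimates of Proposition~\ref{prop:Poisson}. All other statements in the lemma are direct consequences of matching powers of $\eta$ in a formal series, once the solvability condition at order one has been checked.

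I would first substitute the ansatz $\Phi_\eta = \Phi_0 - \eta\widetilde{\Phi}^1 + \eta^2 \widetilde{\Phi}^2_\eta$ into~\eqref{eq:Poisson_eq_velocity}, combined with the already established expansion $\mathcal{V}_\eta = \eta\overline{\mathscr{V}} + \eta^2 \widetilde{\mathcal{V}}^2_\eta$ from~\eqref{eq:expansion_V_eta}, and then identify the coefficients of $\eta^0$, $\eta^1$, and $\eta^2$. Using that $\Phi_0$ is time-independent, so that $(\partial_t + \cA_0)\Phi_0 = \cA_0 \Phi_0$, the order-$\eta^0$ equation reduces to~\eqref{eq:Phi_0}, and the order-$\eta^1$ equation reads
\[
(\partial_t + \cA_0)\widetilde{\Phi}^1 = \cA_1 \Phi_0 + \overline{\mathscr{V}}.
\]
For this to be solvable in $\cH$ via Lemma~\ref{lem:pt_A0_bounded}, the right-hand side must have zero spacetime average against $\mu$, which is the identity $\overline{\mathscr{V}} + \int_\cE \cA_1 \Phi_0\, \mu = 0$. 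I would verify this identity by rewriting~\eqref{eq:time_avg_response} using the representation~\eqref{eq:expression_D0} of $D_0(q)$ in terms of $\Phi_0$ and performing one integration by parts in $p$ against $\nabla_p \mu = -\beta M^{-1}p\, \mu$. Once this compatibility is in place, the source agrees with $\cA_1\Phi_0 - \int_\cE \cA_1\Phi_0\, \mu$ as asserted in~\eqref{eq:def_first_orders_Phi_eta}, and Proposition~\ref{prop:Poisson} at $\eta = 0$ gives $\widetilde{\Phi}^1$ smooth and pointwise controlled by some $\Li_{m_1}$.

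Having constructed $\Phi_0$ and $\widetilde{\Phi}^1$, I would \emph{define} $\widetilde{\Phi}^2_\eta$ through the decomposition $\Phi_\eta = \Phi_0 - \eta\widetilde{\Phi}^1 + \eta^2\widetilde{\Phi}^2_\eta$, and derive the equation it must satisfy by subtracting the equations already known. A direct computation yields
\[
(\partial_t + \cA_0 + \eta \cA_1)\widetilde{\Phi}^2_\eta = \cA_1 \widetilde{\Phi}^1 - \widetilde{\mathcal{V}}^2_\eta + c_\eta,
\]
where the additive constant $c_\eta$ reconciles the $\psi_\eta$-normalization of $\Phi_\eta$ with the $\mu$-normalizations of $\Phi_0$ and $\widetilde{\Phi}^1$; the series expansion of $\psi_\eta$ from Proposition~\ref{prop:inv_meas_charact} shows that $c_\eta$ is uniformly bounded on $[-\eta_*, \eta_*]$. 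The right-hand side is smooth, with polynomial-in-$p$ growth and norms bounded uniformly in $\eta$. Applying Proposition~\ref{prop:Poisson} to the $\eta$-dependent generator then produces the pointwise bound $|\widetilde{\Phi}^2_\eta(t,q,p)| \leq C\, \Li_m(q,p)$ with $C$ and $m$ independent of $\eta$, which is exactly the claim.

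The hard part, conceptually, is the first-order solvability identity $\overline{\mathscr{V}} + \int_\cE \cA_1\Phi_0\,\mu = 0$: without it the expansion collapses at order one, so it is really what ties the perturbative construction of $\Phi_\eta$ to the linear-response formula for the average velocity derived earlier in Proposition~\ref{prop:LR_velocity}. Everything else is linear bookkeeping in $\eta$ plus an invocation of the uniform Poisson estimate of Proposition~\ref{prop:Poisson} to bound the quadratic remainder.
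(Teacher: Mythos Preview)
Your proposal is correct and follows essentially the same route as the paper: perturbative matching at orders~$0$ and~$1$, verification of the solvability identity $\overline{\mathscr{V}} = -\int_\cE \cA_1\Phi_0\,\mu$, and control of the quadratic remainder via Proposition~\ref{prop:Poisson}. The only cosmetic differences are that the paper checks the solvability condition through the equation~\eqref{eq:eq_varrho_1} for~$\varrho_1$ rather than through~\eqref{eq:expression_D0}, and that a direct subtraction actually gives $T_\eta\widetilde{\Phi}^2_\eta = \cA_1\widetilde{\Phi}^1 - \widetilde{\mathcal{V}}^2_\eta$ with no additive constant on the right-hand side (the normalization mismatch you anticipate shows up instead as an additive constant in $\widetilde{\Phi}^2_\eta$ itself, which is harmless for the application since it is integrated against the $\psi_\eta$-centered observable $M^{-1}p-\mathcal{V}_\eta$).
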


Gathering all these expansions, we see that
\[
\sL^T \mathscr{D}_\eta \sL = \sL^T \mathscr{D}_0 \sL + \eta \sL^T \mathpzc{D}_1 \sL + \eta^2 \widetilde{\mathpzc{D}}_{\eta,\sL},
\]
with 
\[
\begin{aligned}
\sL^T \mathpzc{D}_1 \sL & = \int_\cE \left(\sL^T \widetilde{\Phi}^1(t,q,p)\right) \Big( \sL^T M^{-1}p \Big) \mu(q,p) \, dt \, dq \, dp \\
& \ \ - \int_\cE \left(\sL^T \Phi_0(q,p) \right) \left( \sL^T M^{-1}p \, \varrho_1(t,q,p)  - \overline{\mathscr{V}}\right) \mu(q,p) \, dt \, dq \, dp, 
\end{aligned}
\]
and where $\widetilde{\mathpzc{D}}_{\eta,\sL}$ is uniformly bounded for $|\eta| \leq r$ and $|\sL| \leq 1$. 

\bigskip

We now turn to the specific case when the condition~\eqref{eq:null_time_avg_F} holds. Note that there is only one function depending on time in each term on the right-hand side of the definition of $\sL^T \mathpzc{D}_1 \sL$. The idea is to show that the time average of these functions is~0, which will then give the claimed result. We define the time-average of a function on~$\mathcal{E}$ as
\[
\widehat{f}(q,p) = \frac1T \int_{T \mathbb{T}} f(t,q,p) \, dt.
\]
When \eqref{eq:null_time_avg_F} holds, an integration of~\eqref{eq:eq_varrho_1} with respect to the time variable gives
\[
\cA_0^* \widehat{\varrho}_1 = 0,
\]
which shows that $\widehat{\varrho}_1$ is constant since $\mathrm{Ker}(\cA_0^*) = \mathrm{Span}(\mathbf{1})$, and finally that this function is~0 by the normalization condition~\eqref{eq:conditions_varrho}. This also implies that $\overline{\mathscr{V}} = 0$. Similarly, by integrating in time the Poisson equation~\eqref{eq:def_first_orders_Phi_eta} satisfied by~$\widetilde{\Phi}^1$,
\[
\cA_0 \widehat{\widetilde{\Phi}^1} = 0,
\]
which also gives $\widehat{\widetilde{\Phi}^1} = 0$. Finally, it indeed holds $\sL^T \mathpzc{D}_1 \sL = 0$ when~\eqref{eq:null_time_avg_F} holds.

\bigskip

\begin{proof}[Proof of Lemma~\ref{lem:expansion_Phi_eta}]
The function $\widetilde{\Phi}^1$ introduced in~\eqref{eq:def_first_orders_Phi_eta} is indeed well defined in view of Lemmas~\ref{lem:pt_A0_bounded}. A simple computation shows that
\begin{equation}
\label{eq:formal_comput_order_2}
T_\eta \left(\Phi_\eta - \Phi_0 + \eta \widetilde{\Phi}^1 \right) = \eta^2 \left( -\widetilde{\mathcal{V}}_\eta^2 + \cA_1 \widetilde{\Phi}^1 \right).
\end{equation}
Indeed, the linear term disappears since (see~\eqref{eq:eq_varrho_1}, \eqref{eq:Phi_0} and the discussion after this equation)
\[
\begin{aligned}
\overline{\mathscr{V}} & = \int_\cE M^{-1}p \, \varrho_1(t,q,p) \, \mu(q,p) \, dt \, dq \, dp = -\int_\cE \left[(\partial_t + \cA_0)^{-1} \Big( M^{-1}p \Big)\right] \cA_1^*\mathbf{1} \, \mu(q,p) \, dt \, dq \, dp \\
& \ \ = -\int_\cE \Phi_0(q,p) \cA_1^*\mathbf{1} \, \mu(q,p) \, dt \, dq \, dp = -\int_\cE (\cA_1 \Phi_0)(t,q,p) \, \mu(q,p) \, dt \, dq \, dp,
\end{aligned}
\]
which is the constant term on the right-hand side of the Poisson equation defining~$\widetilde{\Phi}^1$ in~\eqref{eq:def_first_orders_Phi_eta}. We then conclude from~\eqref{eq:formal_comput_order_2} and the estimates provided for instance by Proposition~\ref{prop:Poisson} in the case $\eta = 0$. These estimates indeed show that $\Phi_0$ (and hence $\mathcal{A}_1 \Phi_0$) is smooth with derivatives growing at most polynomially. Next, from~\eqref{eq:def_first_orders_Phi_eta}, $\widetilde{\Phi}^1$ is smooth and this function and its derivatives grow at most polynomially.
\end{proof}

\subsection*{Acknowledgements}

This work was initiated while GP was visiting the INRIA team MICMAC (now MATHERIALS) at CERMICS. The hospitality and financial support from INRIA are greatly acknowledged. RJ's research is supported by the EPSRC through grant EP/J009636/1. GP's research is partially supported by the EPSRC through grants EP/J009636/1 and EP/H034587/1. GS's research is partially supported by the European Research Council under the European Union's Seventh Framework Programme (FP/2007-2013) / ERC Grant Agreement number 614492. The authors benefited from discussions with Stefano Olla and Stephan De Bi\`evre.

%----------- Biblio --------------
%\bibliography{noneq.bib}
%\bibliography{../../bibtex_files/mybib}
%\bibliographystyle{plain}

%\end{document}

\end{document}